\documentclass[10pt]{article}

%
%

\usepackage[round]{natbib}					
\usepackage{amsmath}    													
\usepackage{graphicx}   													
\usepackage{subfigure}														
\usepackage{amssymb}    													
\usepackage[mathscr]{eucal}												
\usepackage{lscape}
\usepackage{caption}
\usepackage{cancel}																
\usepackage[normalem]{ulem} 																
\usepackage{pstricks}
\usepackage{rotating}
\usepackage{mdwlist}															
\usepackage[paperwidth=8.5in,paperheight=11in,top=1.25in, bottom=1.25in, left=1.00in, right=1.00in]{geometry}
\usepackage{mathtools}														
\mathtoolsset{showonlyrefs=true}									
\linespread{1.6}																	
\usepackage{amsthm}																
\usepackage{hyperref}
\theoremstyle{plain}
\newtheorem{theorem}{Theorem}[section]

\theoremstyle{definition}

\newtheorem{remark}[theorem]{Remark}

\numberwithin{equation}{section}

%
%


\newcommand{\<}{\langle} 
\renewcommand{\>}{\rangle}
\renewcommand{\(}{\left(}				
\renewcommand{\)}{\right)}
\renewcommand{\[}{\left[}
\renewcommand{\]}{\right]}			


\newcommand\Eb{\mathbb{E}}
\newcommand\Fb{\mathbb{F}}			
\newcommand\Pb{\mathbb{P}}	
							
\newcommand\Rb{\mathbb{R}}										
\newcommand\Ib{\mathbb{I}}

					
\newcommand\Ac{\mathscr{A}}
\newcommand\Bc{\mathscr{B}}

\newcommand\Fc{\mathscr{F}}

\newcommand\Mc{\mathscr{M}}		
\newcommand\Nc{\mathscr{N}}
\newcommand\Oc{\mathscr{O}}



\newcommand\eps{\varepsilon}

\newcommand\Om{\Omega}
\newcommand\sig{\sigma}
\newcommand\Lam{\Lambda}
\newcommand\gam{\gamma}

\newcommand\lam{\lambda}
\newcommand\del{\delta}



\newcommand\fh{\widehat{f}}
\newcommand\hh{\widehat{h}}
\newcommand\uh{\widehat{u}}
\newcommand\vh{\widehat{v}}


\newcommand\Et{\widetilde{\Eb}}
\newcommand\Pt{\widetilde{\Pb}}

\newcommand\Wt{\widetilde{W}}
\newcommand\Bt{\widetilde{B}}
\newcommand\Nt{\widetilde{N}}


\renewcommand\d{\partial}


%
%

\begin{document}

\title{Exponential L\'evy-type models with stochastic volatility and jump intensity}

\author{
Matthew Lorig
\thanks{
ORFE Department, Sherrerd Hall, Princeton University, Princeton, NJ  08540, United States.
\textbf{E-mail}: mlorig@princeton.edu.
\textbf{Phone}: (609)-258-6758.
\textbf{Fax}: (609)-258-4363.
\textbf{Web}: www.princeton.edu/$\sim$mlorig.
Work partially supported by NSF grant DMS-0739195.
}
\and
Oriol Lozano-Carbass\'e
\thanks{
Bendheim Center of Finance, 26 Prospect Avenue, Princeton University, Princeton, NJ  08540, United States.
}
}

\date{This version: \today}

\maketitle

\begin{abstract}
We consider the problem of valuing a European option written on an asset whose dynamics are described by an exponential L\'evy-type model.  In our framework, both the volatility and jump-intensity are allowed to vary stochastically in time through common driving factors -- one fast-varying and one slow-varying.  Using Fourier analysis we derive an explicit formula for the approximate price of any European-style derivative whose payoff has a generalized Fourier transform; in particular, this includes European calls and puts.  From a theoretical perspective, our results extend the class of multiscale stochastic volatility models of \citet*{fpss} to models of the exponential L\'evy type.  From a financial perspective, the inclusion of jumps and stochastic volatility allow us to capture the term-structure of implied volatility.  To illustrate the flexibility of our modeling framework we extend five exponential L\'evy processes to include stochastic volatility and jump-intensity.  For each of the extended models, using a single fast-varying factor of volatility and jump-intensity, we perform a calibration to the S\&P500 implied volatility surface.  Our results show decisively that the extended framework provides a significantly better fit to implied volatility than both the traditional exponential L\'evy models and the fast mean-reverting stochastic volatility models of \citet{fpss}.
\end{abstract}

\noindent 
{\bf Key words}: multiscale, L\'evy-type process, stochastic volatility, asymptotics, Fourier.

%
%

\section{Introduction}
\label{sec:intro}
An \emph{exponential L\'evy model} is an equity model in which an underlying $S=e^X$ is described by the exponential of a L\'evy process $X$.  Such models extend the geometric Brownian motion description of \citet*{black1973pricing} by allowing the underlying $S$ to experience jumps, the need for which is well-documented in literature (see, \cite{eraker} and references therein).  
In particular, it is known that jumps are required in order to fit the strong skew and smile of implied volatility for short-maturity options (see \citet*{contbook}, Chapter 15).
In addition to allowing the underlying $S$ to jump, exponential L\'evy models are important because they capture many of the stylized features of asset prices, such as heavy tails, high-kurtosis and asymmetry of $\log$ returns.
\par
Several well-known models fit within the exponential L\'evy class: the jump-diffusion model of \citet*{merton1976option}, the pure jump models of \citet*{mandelbrot1963variation}, the variance gamma model of \citet*{madan1998variance}, the extended Koponen family of \citet*{levendorskii} and the double exponential model of \citet*{kou2002jump}.  
The popularity of the above models is at least in part due to their analytic tractability.  Indeed, in \citet*{lewis2001simple,lipton2002}, it is demonstrated that European option prices in all of the above-mentioned models can be computed quickly and easily via (generalized) one-dimensional Fourier transforms.  A comprehensive reference on the subject of option-pricing in an exponential L\'evy setting can be found in  \citet*{levendorskiibook}, as well as Chapter 11 of \citet*{contbook}.
\par
Despite their success, exponential L\'evy models have some shortcomings.  For example, because the $\log$ returns of all exponential L\'evy process are independent and identically distributed, these models cannot exhibit volatility clustering (the tendency for volatility to rise sharply for short periods of time) or the leverage effect (the tendency for volatility to rise when asset prices decline); both of these phenomena are well-documented in time-series literature.  There is also evidence from options markets that exponential L\'evy processes are inadequate.  Indeed, L\'evy-based models cannot fit the term structure of implied volatility; as the maturity date increases the implied volatility surface induced by exponential L\'evy models (unrealistically) flattens.  To capture the implied volatility smile of long-maturity options one requires stochastic volatility.  Another shortcoming of L\'evy processes is that they exhibit constant jump intensities.  However, a recent study of S\&P500 index returns indicates that jump-intensities -- like volatility -- are stochastic (see \citet*{christoffersen2009}).  To address these shortcomings, \citet*{carr2004time} add stochastic volatility (with correlation to the underlying) by stochastically time-changing a L\'evy process.  Notably, the models described in \cite{carr2004time} maintain the analytic tractability that makes the class of exponential L\'evy processes attractive.
\par
In this paper, we address the need for volatility clustering, the leverage effect and stochastic jump intensity by modeling the returns process $X$ by a L\'evy-type process whose local characteristics $(\gam_t,\sig_t,\nu_t)$ are stochastic.  We then use generalized Fourier transform techniques, as well as singular and regular perturbation methods to derive an explicit formula for the approximate price of any European-style derivative whose payoff has a generalized Fourier transform; this includes calls and puts.
\par
From a mathematical perspective, our results are powerful because we extend the multiscale stochastic volatility models of \cite{fpss} to exponential L\'evy models.  Indeed, much like geometric Brownian motion arises as special case of an exponential L\'evy process, the class of fast mean-reverting and multiscale stochastic volatility models considered in \cite{fouque} and \cite{fpss} arise as a special subset of the class of models we consider.  In fact, by removing jumps from our framework, one recovers the Fourier representation of the European option pricing formulas derived in \cite{fouque} and \cite{fpss}.
\par
From a financial perspective, the use of the L\'evy-type models we consider is strongly supported by data.  To be specific, in what follows, we extend five different exponential L\'evy models to include stochastic volatility and jump intensity.  For each of these models, we demonstrate that the extended framework provides significantly better fit to implied volatility than both the traditional exponential L\'evy models and the fast mean-reverting stochastic volatility models of \citet{fpss}.
\par
The rest of this paper proceeds as follows.  In Section \ref{sec:model} we introduce a class of exponential L\'evy-type models in which the volatility and jump-intensity are stochastically driven by a common fast-varying factor.  In Section \ref{sec:pricing} we derive an expression for the approximate price of a European option (Theorem \ref{thm:u0u1}) when the underlying is described by the class of models introduced in Section \ref{sec:model}.  In Section \ref{sec:example}, as an example of our framework, we extend the jump-diffusion model of \citet*{merton1976option} to include stochastic volatility and stochastic jump-intensity.  We also compute (numerically) the implied volatility surface generated by this example.  In Section \ref{sec:calibration}, using a variety of L\'evy measures, we calibrate the extended class of models to the implied volatility surface of S\&P500 options and we compare to the calibration obtained for the corresponding L\'evy models as well as for the fast mean-reverting models of \cite*{fouque}.  In Section \ref{sec:multiscale} we briefly describe how the class of models introduced in Section \ref{sec:model} can be extended to allow for multiple driving factors of volatility and jump-intensity -- one fast-varying factor and one slow-varying factor.  Proofs are provided in an appendix.

%
%

\section{Stochastic volatility and jump intensity L\'evy-type processes}
\label{sec:model}
Let $(\Om,\Fc,\Pt)$ be a probability space endowed with a filtration $\Fb=\{\Fc_t,t\geq 0\}$, which satisfies the usual conditions.  Here, $\Pt$ is the risk-neutral pricing measure, which we assume is chosen by the market.  The filtration $\Fb$ represents the history of the market.  For simplicity, we assume that the risk-free rate of interest is zero so that all non-dividend paying assets are $(\Pt,\Fb)$-martingales.  All of our results can easily be extended to include constant or deterministic interest rates.
\par
We consider a non-dividend paying asset $S$ whose dynamics under $\Pt$ are described by the following It\^o-L\'evy stochastic differential equation (SDE)
\begin{align}
\left.
\begin{aligned}
dS_t
		&=		\sig(Y_t) S_t \, d\Wt_t + S_{t-} \int_{\Rb} \(e^z - 1\) \, d\Nt_t(Y_t,dz) , &
S_0
		&=		e^x , \\
dY_t
		&=		\( \frac{1}{\eps^2} \alpha(Y_t) - \frac{1}{\eps} \Lam(Y_t) \, \beta(Y_t) \) dt + \frac{1}{\eps} \beta(Y_t) d\Bt_t , &
Y_0
		&=		y , \\
d\<\Wt,\Bt\>
		&=		\rho \, dt , &
|\rho|
		&\leq	1 .
\end{aligned} \right\} \qquad (\text{under $\Pt$}) \label{eq:dS}
\end{align}
Here $\Wt$ and $\Bt$ are correlated Brownian motions and $\Nt(Y,dz)$ is a compensated Poisson random measure
\begin{align}
d\Nt_t(Y_t,dz)
		&=		dN_t(Y_t,dz) - \zeta(Y_t)\nu(dz) dt , &
\Et[dN_t(Y_t,dz)|Y_t]
		&=		\zeta(Y_t) \nu(dz) dt ,
\end{align}
We require that the measure $\nu$ satisfy 
\begin{align}
\int_\Rb \min(1,z^2)\nu(dz)
		&<		\infty , &
\int_{|z| \geq 1} e^z \nu(dz)
		&<		\infty , &
		&\text{and} &
\int_{|z| \geq 1} |z| \nu(dz)	
		&<		\infty . \label{eq:conditions}
\end{align}
The first integrability condition must be satisfied by all L\'evy measures.  The second integrability condition is needed to ensure $\Et [S_t]<\infty$ for all $t \in \Rb^{+}$.  The last integrability condition allows us to replace the indicator function that usually appears in the L\'evy-Kintchine formula $\Ib_{\{|z|<1\}}$ with the constant $1$.  Although we do not require it, a correlation of $\rho<0$ between $\Wt$ and $\Bt$ would be consistent with the leverage effect (i.e. a drop in the value of $S$ will usually be accompanied by an increase in volatility).
\par
Note that both the volatility of $S$, given by $\sig(Y)$, and the state-dependent L\'evy measure $\zeta(Y)\nu(dz)$, which controls the jumps of $S$, are driven by a common stochastic process $Y$.  The driving process $Y$ is \emph{fast-varying} in the following sense: under the physical measure $\Pb$, the dynamics of $Y$ are described by  
\begin{align}
\left.
\begin{aligned}
dY_t
		&=		\frac{1}{\eps^2} \alpha(Y_t)  dt + \frac{1}{\eps} \beta(Y_t) dB_t
\end{aligned} \right\} \qquad (\text{under $\Pb$})
\end{align}
where $B_t = \Bt_t - \int_0^t \Lam(Y_s) ds$ is a $\Pb$-Brownian motion. The generator of $Y$ under $\Pb$ is scaled by a factor of $1/\eps^2$
\begin{align}
\Ac_Y^\eps
		&=		\frac{1}{\eps^2} \( \frac{1}{2}\beta^2(y) \d_{yy}^2 + \alpha(y) \d_y \) .
\end{align}
Thus, $Y$ operates with an intrinsic time-scale $\eps^2$.   We assume $\eps^2 << 1$ so that the intrinsic time-scale of $Y$ is small.  Thus, $Y$ is fast-varying.  Throughout this text, we assume that, under $\Pb$, the process $Y$ is ergodic, has a unique invariant distribution $F_Y$, and that the smallest non-zero eigenvalue of $-\Ac_Y^\eps$ is strictly positive.  We also assume that the functions $\alpha$ and $\beta$, $\sig$, $\zeta$ are $\Lam$ is smooth and bounded and that there exists of a unique strong solution to SDE \eqref{eq:dS}.
\par
As mentioned in the introduction, the class of models described by \eqref{eq:dS} is a natural extension of the models considered in \cite{fouque}.  The key difference between the class of models we consider and those considered in \cite{fouque} is that we allow for the underlying $S$ to jump.  Moreover, we allow the jump intensity to be stochastic.

%
%

\section{Option pricing}
\label{sec:pricing}
We wish to price a European-style option, which pays $H(S_t)$ at the maturity date $t>0$.  It will be convenient to introduce the returns process $X = \log S$.  Using It\^o's formula for It\^o-L\'evy processes (see \cite{oksendal2}, Theorem 1.14) one derives
\begin{align}
dX_t
		&=		\gam(Y_t) \, dt + \sig(Y_t) \, d\Wt_t + \int_\Rb z \, d\Nt_t(Y_t,dz) , &
X_0
		&= 		x ,  
\end{align}
where the drift $\gamma(Y_t)$ is given by
\begin{align}
\gam(Y_t)
		&=		-\frac{1}{2}\sig^2(Y_t) - \zeta(Y_t) \int_\Rb (e^z-1-z)\nu(dz) . 
\end{align}
Using risk-neutral pricing, the value $u^\eps(t,x,y)$ of the European option under consideration is 
\begin{align}
u^\eps(t,x,y)
		&=		\Et_{x,y} \[ h(X_t) \] , &
h(x)
		&:=		H(e^x) .
\end{align}
From the Kolmogorov backward equation we find that $u^\eps(t,x,y)$ satisfies the following partial integro-differential equation (PIDE) and boundary condition (BC)
\begin{align}
\(-\d_t + \Ac^\eps \) u^\eps
		&=		0 , &
u^\eps(0,x,y)
		&=		h(x) . \label{eq:PIDE}
\end{align}
where $\Ac^\eps$ is the generator of $(X,Y)$; it is a partial integro-differential operator given explicitly by
\begin{align}
\Ac^\eps
		&=		\frac{1}{\eps^2} \Ac_0 + \frac{1}{\eps} \Ac_1 + \Ac_2 , \label{eq:Aeps.1} \\
\Ac_0
		&=		\Ac_Y^1 = \frac{1}{2}\beta^2(y) \d_{yy}^2 + \alpha(y)\d_y , \label{eq:A0} \\
\Ac_1
		&=		\rho \beta(y) \sig(y) \d_{xy}^2 - \Lam(y) \beta(y) \d_y, \\
\Ac_2
		&=		\gam(y) \d_x + \frac{1}{2} \sig^2(y) \d_{xx}^2 + \zeta(y)\int_\Rb \(\theta_z - 1 - z \d_x \)\nu(dz) . \label{eq:A2first}
\end{align}
Here, $\theta_z$ is the \emph{shift operator} which acts on the $x$ variable: $\theta_z f(x,y) := f(x+z,y)$.  We shall assume that Cauchy problem \eqref{eq:PIDE} admits a unique classical solution.

%
%

\subsection{Formal asymptotic analysis}
\label{sec:asymptotics}
For general ($\sig$, $\zeta$, $\alpha$, $\beta$, $\Lam$) there is no analytic solution to \eqref{eq:PIDE}.  We notice, however, that terms containing $\eps$ in \eqref{eq:PIDE} are diverging in the small-$\eps$ limit, giving rise to a \emph{singular} perturbation about the $\Oc(1)$ operator $(-\d_t + \Ac_2)$.  This special form suggests that we seek an asymptotic solution to PIDE \eqref{eq:PIDE}.  Thus, we expand $u^\eps$ in powers of the small parameter $\eps$
\begin{align}
u^\eps
		&=		\sum_{n=0}^\infty \eps^n u_n . \label{eq:uexpand}
\end{align}
Our goal will be to find an approximation $u^\eps = u_0 + \eps \, u_1 + \Oc(\eps^2)$ for the price of an option.  The choice of expanding in integer powers of $\eps$ is natural given the form of $\Ac^\eps$.
\par
In the formal asymptotic analysis that follows, we insert expansion \eqref{eq:uexpand} into PIDE \eqref{eq:PIDE} and collect terms of like powers of $\eps$, starting at the lowest order.  The $\Oc(1/\eps^2)$ and $\Oc(1/\eps)$ terms are
\begin{align}
\Oc(1/\eps^2):&&
0
		&=		\Ac_0 u_0 , \\
\Oc(1/\eps):&&
0
		&=		\Ac_1 u_0  + \Ac_0 u_1 .
\end{align}
Noting that all terms in $\Ac_0$ and $\Ac_1$ take derivatives with respect to $y$, we choose $u_0=u_0(t,x)$ and $u_1=u_1(t,x)$.  Continuing the asymptotic analysis, the $\Oc(1)$ and $\Oc(\eps)$ terms are
\begin{align}
\Oc(1):&&
0
		&=		(-\d_t + \Ac_2)u_0 + \Ac_0 u_2 , \label{eq:O1} \\
\Oc(\eps):&&
0
		&=		(-\d_t + \Ac_2)u_1 + \Ac_1 u_2 + \Ac_0 u_3 , \label{eq:Oeps}
\end{align}
where we have used the fact that $\Ac_1 u_1 = 0$ in the $\Oc(1)$ equation.  Equations \eqref{eq:O1} and \eqref{eq:Oeps} are equations of the form
\begin{align}
\Ac_0 u
		&=		\chi . \label{eq:fredholm}
\end{align}
Noting that $\int (\Ac_0 u) dF_Y = 0$ we observe that a solution $u$ to \eqref{eq:fredholm} exists if and only if $\chi$ satisfies the \emph{centering condition}
\begin{align}
\< \chi \>
		&:=		\int \chi \, dF_Y = 0 . \label{eq:center}
\end{align}  
Applying the centering condition to \eqref{eq:O1} and \eqref{eq:Oeps} yields
\begin{align}
\Oc(1):&&
0
		&=		(-\d_t + \< \Ac_2 \> ) u_0 , \label{eq:PIDEu0} \\
\Oc(\eps):&&
0
		&=		(-\d_t + \< \Ac_2\> ) u_1 + \< \Ac_1 u_2 \> . \label{eq:PIDEu1.a}
\end{align}
Note, from and \eqref{eq:O1} and \eqref{eq:PIDEu0} we have
\begin{align}
\Ac_0 u_2
		&=		- (-\d_t + \Ac_2)u_0 + (-\d_t + \<\Ac_2\>)u_0 
		=			- \( \Ac_2 - \<\Ac_2\> \) u_0 \\
		&=			- \frac{1}{2}\( \sig^2 - \< \sig^2\> \) \( \d_{xx}^2 - \d_x \) u_0 \\ & \qquad
						- \( \zeta - \< \zeta \> \) \( - \int_\Rb \Big( e^z - 1 - z \Big) \nu(dz) \d_x
							+ \int_\Rb \Big( \theta_z - 1 - z \d_x \Big) \nu(dz) \) u_0 \\
		&=			- \Ac_0 \( \frac{1}{2} \eta \( \d_{xx}^2 - \d_x \)
						- \xi \int_\Rb \Big( e^z - 1 - z \Big) \nu(dz) \d_x
						+ \xi \int_\Rb \Big( \theta_z - 1 - z \d_x \Big) \nu(dz) \) u_0 , \label{eq:A0u2=A0u0}
\end{align}
where we have introduced $\eta(y)$ and $\xi(y)$ as solutions to 
\begin{align}
\Ac_0 \eta
		&=		\sig^2 - \<\sig^2\> , &
\Ac_0 \xi
		&=		\zeta - \< \zeta \> . 		\label{eq:eta.xi}
\end{align}
Thus, from \eqref{eq:PIDEu1.a} and \eqref{eq:A0u2=A0u0} we find
\begin{align}
\Oc(\eps):&&
(-\d_t + \< \Ac_2\> )u_1
		&=		- \Bc u_0 , \label{eq:PIDEu1}
\end{align}
where the operator $\Bc$ is given by
\begin{align}
\Bc
		&=		\left\langle - \Ac_1 \( \frac{1}{2} \eta(y) \( \d_{xx}^2 - \d_x \)
					- \xi \int_\Rb \Big( e^z - 1 - z \Big) \nu(dz) \d_x
					+ \xi \int_\Rb \Big( \theta_z - 1 - z \d_x \Big) \nu(dz) \) \right\rangle \\
		&=		V_3 \( \d_{xxx}^3 - \d_{xx}^2 \) 
					+ U_3 \( - \int_\Rb \Big( e^z - 1 - z \Big) \nu(dz) \d_{xx}^2
						+ \int_\Rb \Big( \theta_z - 1 - z \d_x \Big) \d_x \nu(dz) \) \label{eq:B} \\ &\qquad
					+ V_2 \( \d_{xx}^2 - \d_x \)
					+ U_2 \( - \int_\Rb \Big( e^z - 1 - z \Big) \nu(dz) \d_x
						+ \int_\Rb \Big( \theta_z - 1 - z \d_x \Big) \nu(dz) \) ,  
\end{align}
and the constants ($V_3$, $U_3$, $V_2$, $U_2$) are defined as
\begin{align}
V_3
		&= 		- \frac{\rho}{2}\< \beta \sigma \d_y \eta \> , &
U_3
		&=		- \rho \< \beta \sigma \d_y \xi \> , &
V_2
		&=		\frac{1}{2} \< \beta \Lam \d_y \eta \> , &
U_2
		&=		\< \beta \Lam \d_y \xi \> .
\end{align}
This is as far as we will take the asymptotic analysis.  To review, we have found that $u_0(t,x)$ and $u_1(t,x)$ satisfy PIDEs \eqref{eq:PIDEu0} and \eqref{eq:PIDEu1} respectively.  We also impose the following BCs
\begin{align}
\Oc(1):&&
u_0(0,x)
	&=		h(x) , \label{eq:BCu0} \\
\Oc(\eps):&&
u_1(0,x)
	&=		0 . \label{eq:BCu1}
\end{align}

%
%

\subsection{Explicit solution for $u_0(t,x)$ and $u_1(t,x)$}
\label{sec:explicit}
In order to find explicit formulas for $u_0(t,x)$ and $u_1(t,x)$, we note that the operator
\begin{align}
\< \Ac_2 \>
		&=		\< \gam\> \d_x + \frac{1}{2} \< \sig^2 \> \d_{xx}^2 + \<\zeta\> \int_\Rb \(\theta_z - 1 - z \d_x \) \nu(dz) ,
					\label{eq:<A2>} \\
\<\gam\>
		&=		-\frac{1}{2}\<\sig^2\> - \<\zeta\> \int_\Rb (e^z-1-z)\nu(dz) , 
\end{align}
is the generator of a L\'evy process with L\'evy triplet $(\<\gam\>,\<\sig^2\>,\<\zeta\>\nu)$.  Thus, we may apply standard results from the classical theory of Fourier transforms to obtain solutions to PIDEs \eqref{eq:PIDEu0} and \eqref{eq:PIDEu1}.
\begin{theorem}
\label{thm:u0u1}
Assume $h$ has a generalized Fourier transform
\begin{align}
\hh(\lam)
	&:=	\frac{1}{\sqrt{2\pi}} \int_\Rb dx \, e^{-i \lam x} h(x) < \infty &
	&\textrm{for some} &
\lam
	&:= \lam_r + i \lam_i & 
	&\textrm{where} &
\lam_r, \lam_i
	&\in 	\Rb .
\end{align}
Define
\begin{align}
\phi_\lam
		&=		i \< \gam\> \lam - \frac{1}{2} \< \sig^2 \> \lam^2 
					+ \<\zeta\> \int_\Rb \(e^{i \lam z} - 1 - i \lam z \) \nu(dz) , \label{eq:phi} \\
B_\lam
		&=		V_3 \( -i\lam^3 + \lam^2 \) 
					+ U_3 \( \lam^2 \int_\Rb \Big( e^{z} - 1 - z \Big) \nu(dz) 
						+ i \lam \int_\Rb \Big( e^{i \lam z} - 1 - i \lam z \Big) \nu(dz) \) \\ &\qquad
					+ V_2 \( -\lam^2 - i \lam \) 
					+ U_2 \( -i \lam \int_\Rb \Big( e^{z} - 1 - z \Big) \nu(dz) 
						+ \int_\Rb \Big( e^{i \lam z} - 1 - i \lam z \Big) \nu(dz) \) . \label{eq:B.lambda}
\end{align}
Assume that $\phi_\lam$ is analytic in an infinite strip parallel to the real axis which contains $i \lam_i$.  Then the solution $u_0(t,x)$ to PIDE \eqref{eq:PIDEu0} with BC \eqref{eq:BCu0} is
\begin{align}
u_0(t,x)
		&=		\frac{1}{\sqrt{2\pi}} \int_\Rb d\lam_r \, e^{t \phi_\lam} \hh(\lam) e^{i \lam x} , \label{eq:u0.FT}
\end{align}
and the solution $u_1(t,x)$ to PIDE \eqref{eq:PIDEu1} with BC \eqref{eq:BCu1} is
\begin{align}
u_1(t,x)
		&=		\frac{1}{\sqrt{2\pi}} \int_\Rb d\lam_r \, t \, e^{t \phi_\lam} \hh(\lam) B_\lam e^{i \lam x}  . \label{eq:u1.FT}
\end{align}
\end{theorem}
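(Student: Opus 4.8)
The plan is to prove both formulas by the same mechanism: the plane waves $e^{i\lam x}$ are (generalized) eigenfunctions of every operator appearing in PIDEs \eqref{eq:PIDEu0} and \eqref{eq:PIDEu1}, so that at each fixed $\lam$ the problem collapses to a scalar linear ODE in $t$ for the Fourier amplitude. Concretely, I would first establish the two symbol identities
\begin{align}
\< \Ac_2 \> e^{i\lam x} &= \phi_\lam \, e^{i\lam x}, & \Bc \, e^{i\lam x} &= B_\lam \, e^{i\lam x},
\end{align}
with $\phi_\lam$ and $B_\lam$ as in \eqref{eq:phi} and \eqref{eq:B.lambda}, and then verify that the stated integrals solve the PIDEs by differentiating under the integral sign.

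For the first identity I would apply \eqref{eq:<A2>} to $e^{i\lam x}$ using $\d_x e^{i\lam x} = i\lam e^{i\lam x}$, $\d_{xx}^2 e^{i\lam x} = -\lam^2 e^{i\lam x}$, and the action of the shift operator $\theta_z e^{i\lam x} = e^{i\lam z} e^{i\lam x}$. The differential part reproduces $i\<\gam\>\lam - \tfrac{1}{2}\<\sig^2\>\lam^2$, while the integral operator $\int_\Rb(\theta_z - 1 - z\d_x)\nu(dz)$ produces $\<\zeta\>\int_\Rb(e^{i\lam z} - 1 - i\lam z)\nu(dz)$; its convergence is guaranteed by the first two integrability conditions in \eqref{eq:conditions}, and indeed this simply recovers the L\'evy--Khintchine exponent of the L\'evy generator $\<\Ac_2\>$ identified before the theorem. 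The identity for $\Bc$ is the same calculation applied term-by-term to \eqref{eq:B}, sending $\d_x \mapsto i\lam$, $\d_{xx}^2 \mapsto -\lam^2$, $\d_{xxx}^3 \mapsto -i\lam^3$ and $\theta_z \mapsto e^{i\lam z}$; matching each of the four groups of terms against \eqref{eq:B.lambda} is routine, with the classical $\int_\Rb(e^z-1-z)\nu(dz)$ integrals finite by the second and third conditions of \eqref{eq:conditions}.

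With the two symbol identities in hand the verification is direct. For $u_0$ in \eqref{eq:u0.FT}, applying $-\d_t + \<\Ac_2\>$ under the integral replaces the integrand's $t$-dependence by $(-\phi_\lam + \phi_\lam)e^{t\phi_\lam}\hh(\lam)e^{i\lam x} = 0$, so \eqref{eq:PIDEu0} holds; at $t=0$ the factor $e^{t\phi_\lam}=1$ and \eqref{eq:u0.FT} is precisely the (generalized) Fourier inversion of $\hh$, giving $u_0(0,x)=h(x)$, which is \eqref{eq:BCu0}. For $u_1$ in \eqref{eq:u1.FT} the only change is the prefactor $t$: since $\d_t(t\,e^{t\phi_\lam}) = (1 + t\phi_\lam)e^{t\phi_\lam}$ while $\<\Ac_2\>$ contributes $t\phi_\lam e^{t\phi_\lam}$, the two $t\phi_\lam$ terms cancel and one is left with $-\tfrac{1}{\sqrt{2\pi}}\int_\Rb d\lam_r\, e^{t\phi_\lam}\hh(\lam)B_\lam e^{i\lam x}$. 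By the second symbol identity this equals $-\Bc u_0$, so \eqref{eq:PIDEu1} holds, and the prefactor $t$ makes $u_1(0,x)=0$, giving \eqref{eq:BCu1}.

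The hard part will be the rigorous justification of differentiating under the integral and of integrating along the shifted contour $\{\lam_r + i\lam_i : \lam_r \in \Rb\}$, rather than the algebra above. I would argue that on this horizontal line $\phi_\lam$ and $B_\lam$ grow at most polynomially in $\lam_r$ (the L\'evy integrals are bounded there by \eqref{eq:conditions}, and the polynomial terms dominate), so that when $\hh(\lam)$ decays sufficiently fast along the line the integrals and their formal $t$- and $x$-derivatives converge absolutely and uniformly on compact $t$-sets, legitimizing differentiation under the integral. The hypothesis that $\phi_\lam$ be analytic in a strip parallel to the real axis containing $i\lam_i$ is exactly what permits the representation to be taken along the shifted contour and controls $\mathrm{Re}\,\phi_\lam$ (so that $e^{t\phi_\lam}$ does not blow up), which is the ingredient needed both for the convergence estimates and for the Fourier-inversion step that supplies the boundary conditions.
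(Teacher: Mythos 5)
Your proposal is correct and follows essentially the same route as the paper: both rest on the symbol identities $\<\Ac_2\>^{(*)}e^{\mp i\lam x}=\phi_\lam e^{\mp i\lam x}$ and $\Bc e^{i\lam x}=B_\lam e^{i\lam x}$, and on reducing each PIDE at fixed $\lam$ to a scalar (in)homogeneous ODE in $t$ whose solutions are $e^{t\phi_\lam}\hh(\lam)$ and $t\,e^{t\phi_\lam}\hh(\lam)B_\lam$. The only cosmetic difference is that the paper \emph{derives} the formulas by Fourier-transforming the PIDEs (via the formal adjoint acting on $e^{-i\lam x}$ and a Dirac-delta identity) while you \emph{verify} the stated representations by differentiating under the integral, with somewhat more attention to the contour-shift and convergence issues than the paper itself gives.
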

\begin{proof}
See appendix \ref{sec:u0u1}.
\end{proof}
\begin{remark}
Those who are familiar with L\'evy processes will recognize $\phi_\lam$ as the characteristic L\'evy exponent corresponding to L\'evy triplet $(\<\gam\>,\<\sig^2\>,\<\zeta\>\nu)$.
\end{remark}
\begin{remark}[On calls and puts]
Note that a European call option with payoff function $h(x)=(e^x-e^k)^+$ has a generalized Fourier transform
\begin{align}
\hh(\lam)
		&=		\frac{1}{\sqrt{2\pi}} \int_\Rb dx \, e^{-i \lam x} (e^x - e^k)^+ 
		=			\frac{-e^{k-i k \lam}}{\sqrt{2 \pi } \(i \lam + \lam^2\)} ,  & 
\lam
	&=	\lam_r + i \lam_i , &
\lam_i
	&\in (-\infty, -1) . \label{eq:hhat}
\end{align}
Likewise, a European put option with payoff function $h(x)=(e^k-e^x)^+$ has a generalized Fourier transform
\begin{align}
\hh(\lam)
		&=		\frac{1}{\sqrt{2\pi}} \int_\Rb dx \, e^{-i \lam x} (e^x - e^k)^+ 
		=			\frac{-e^{k-i k \lam}}{\sqrt{2 \pi } \(i \lam + \lam^2\)} ,  & 
\lam
	&=	\lam_r + i \lam_i , &
\lam_i
	&\in (0, \infty) . 
\end{align}
For a review of the generalized Fourier transforms as they relate to L\'evy processes, we refer the reader to any of the following: \cite{levendorskiibook,lewis2001simple,lipton2002}.
\end{remark}
From the arguments in \cite{fpss}, Chapter 4, it follows that for fixed $(t,x,y)$ there exists a constant $C$ such that $|u^\eps - (u_0 + \eps \, u_1)| < C \eps^2$ when $h$ is smooth.  We verify this result numerically in Section \ref{sec:example} by comparing the approximate price $u_0 + \eps \, u_1$ of a derivative-asset, calculated using the formulas in Theorem \ref{thm:u0u1}, to the full price $u^\eps$, calculated via Monte Carlo simulation.

%
%

\section{Example: Merton jump-diffusion with stochastic volatility and stochastic jump-intensity}
\label{sec:example}
In this section we provide one specific example within the class of models described in Section \ref{sec:model}.  Specifically, we extend the jump-diffusion model of \cite{merton1976option} to include stochastic volatility and jump-intensity.  We refer to this class of models as the \emph{Extended Merton} class or simply \emph{ExtMerton}.  In the Merton jump-diffusion model, jumps are $\log$-normally distributed.  Thus, we let the measure $\nu$ be given by
\begin{align}
\nu(dz)
		&=		\frac{1}{\sqrt{2 \pi s^2}}\exp \( \frac{-(z-m)^2}{2s^2} \) dz . \label{eq:normal.nu}
\end{align}
Under this specification, we have
\begin{align}
\< \gamma \>
		&=		-\frac{1}{2}\<\sig^2\> - \<\zeta\> \(e^{m + \frac{s^2}{2}} - 1 - m \) , \\
\phi_\lam
		&=		i \< \gam\> \lam - \frac{1}{2} \< \sig^2 \> \lam^2 
					+ \<\zeta\> \( e^{i \lam m - \frac{1}{2}s^2 \lam^2} -  1 - i \lam m \) , \\
B_\lam
		&=		V_3 \( -i\lam^3 + \lam^2 \) 
					+ U_3 \( \lam^2 \( e^{m+s^2/2} - 1 - m \) + i \lam \( e^{i \lam m - s^2 \lam^2/2} - 1 - i \lam m \) \) \\ &\qquad
					+ V_2 \( -\lam^2 - i \lam \) 
					+ U_2 \( - i \lam \( e^{m+s^2/2} - 1 - m \) + \(e^{i \lam m - s^2 \lam^2/2} - 1 - i \lam m \) \) .
\end{align}
For a European call option with payoff $h(X_t)=(e^{X_t} - e^k)^+$, the generalized Fourier transform of $h(x)$ is given by \eqref{eq:hhat}.  The values of ($\<\sig^2\>$, $\<\zeta\>$, $V_3$, $U_3$, $V_2$, $U_2$), which are needed to compute $u_1$, depend on the particular choice of $\sig(y)$ and $\zeta(y)$ as well as a specific choice for the $Y$ process.  In the numerical examples below we let $\alpha(y)=-y$, $\beta(y)=\beta$, and $\Lam(y)=\Lam$ so that
\begin{align}
dY_t
		&=		\( - \frac{1}{\eps^2} Y_t - \frac{1}{\eps} \Lam \, \beta \) dt + \frac{1}{\eps} \beta \, d\Bt_t ,
\end{align}
and we choose $\sig(y)=a e^y$ and $\zeta(y)= b e^y$.  With these choices the invariant distribution of $Y$ under the physical measure $\Pb$ is normal $F_Y \sim \Nc (0,\tfrac{\beta^2}{2})$ and we can compute explicitly
\begin{align}
\<\sig^2\> 
		&= a^2 e^{\beta^2}, &
\<\zeta\> 
		&= b e^{\frac{\beta^2}{4}}, \\
V_3
		&=	\frac{\rho}{\beta}a^3 e^{\frac{5 \beta^2}{4}} \(e^{\beta^2}-1\) , &
U_3
		&=	\frac{\rho }{\beta }2 a b \left(e^{\beta ^2}-e^{\frac{\beta ^2}{2}}\right) , \\
V_2
		&=	- \beta \Lam a^2 e^{\beta^2} , &
U_2
		&=	- \beta  \Lam  b e^{\frac{\beta^2}{4}} .
\end{align}
The implied volatility $I$ corresponding to a European call option with price $u$ is defined implicitly though
\begin{align}
u^{BS}(I)
	&=	u	 , 
\end{align}
where $u^{BS}(I)$ is the price of the call option (with the same strike and maturity) as computed in the Black-Scholes framework assuming a volatility of $I$.  
In figure \ref{fig:ImpVol} we fix the time to maturity at $t=1/10$ and we plot the implied volatility smile induced by the approximate price of European calls $u_0 + \eps \, u_1$ for $\eps=\{0.1, .033, 0.01\}$.  For comparison, we also plot the implied volatility smile induced by the full price $u^\eps$ (computed using Monte Carlo simulation).
As expected, as $\eps$ goes to zero, the implied volatility induced by the approximate price $u_0 + \eps \, u_1$ converges to the implied volatility induced by the full price $u^\eps$.  
\par
Note that, within our framework, there is nothing unique about the Merton model.  
By using the methods outlined in this paper, \emph{any} exponential L\'evy model for which one can explicitly compute
\begin{align}
\int_\Rb \( e^{i \lam z} - 1 - i \lam z \) \nu(dz) ,
\end{align}
can be extended to include stochastic volatility and stochastic jump intensity.
Likewise, there is nothing unique about our particular choice of functions $\sig(y)$, $\zeta(y)$ or our choice of driving process $Y$.  One may choose any combination of $\sig(y)$, $\zeta(y)$ and $Y$ that allow one to compute (analytically or numerically) the values of ($\<\sig^2\>$, $\<\zeta\>$, $V_3$, $U_3$, $V_2$, $U_2$).  Thus, the framework described in this paper provides considerable modeling flexibility.


\begin{figure}
\centering
\begin{tabular}{ | c | }
\hline
\includegraphics[width=.5\textwidth,height=.26\textheight]{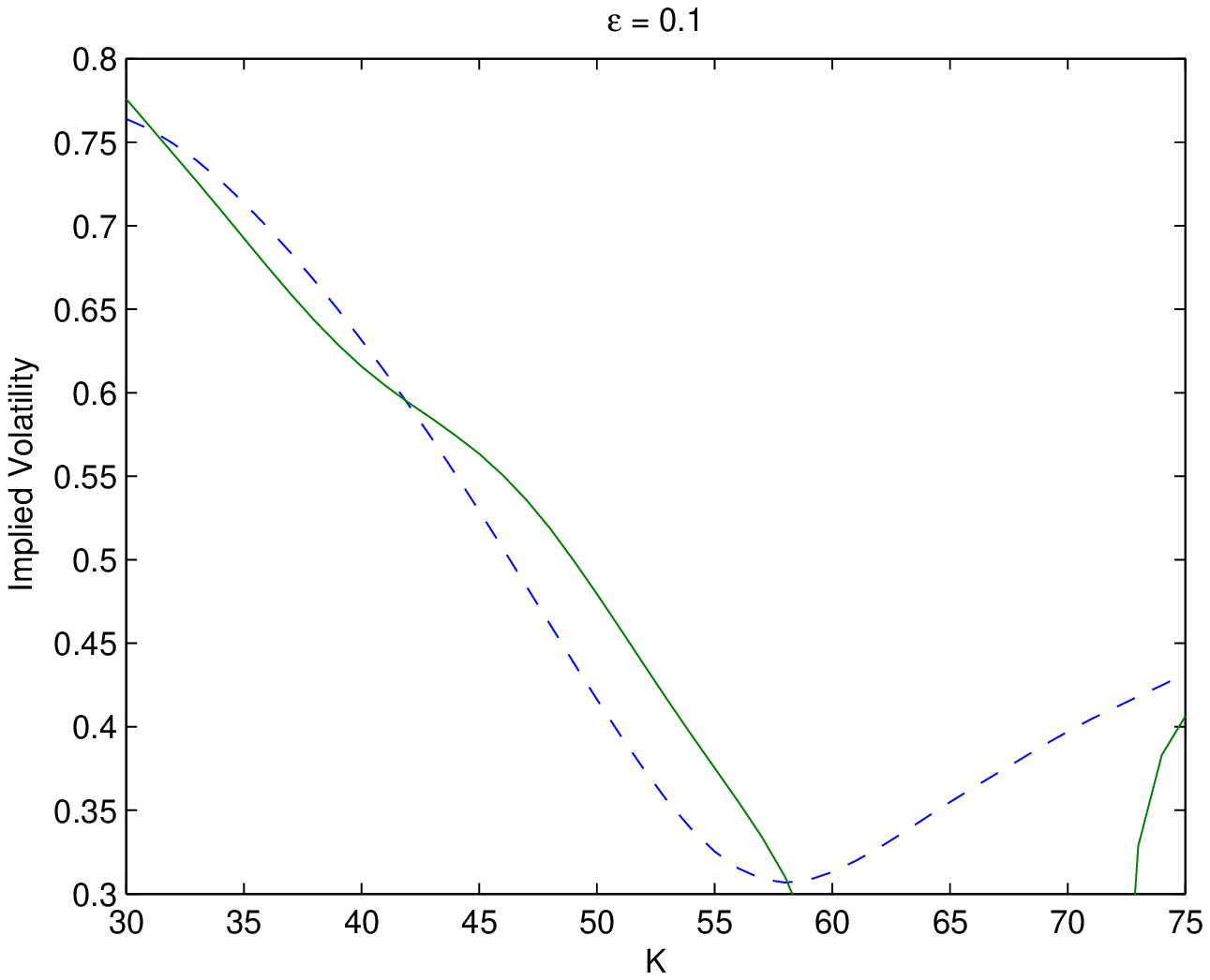} \\ \hline
\includegraphics[width=.5\textwidth,height=.26\textheight]{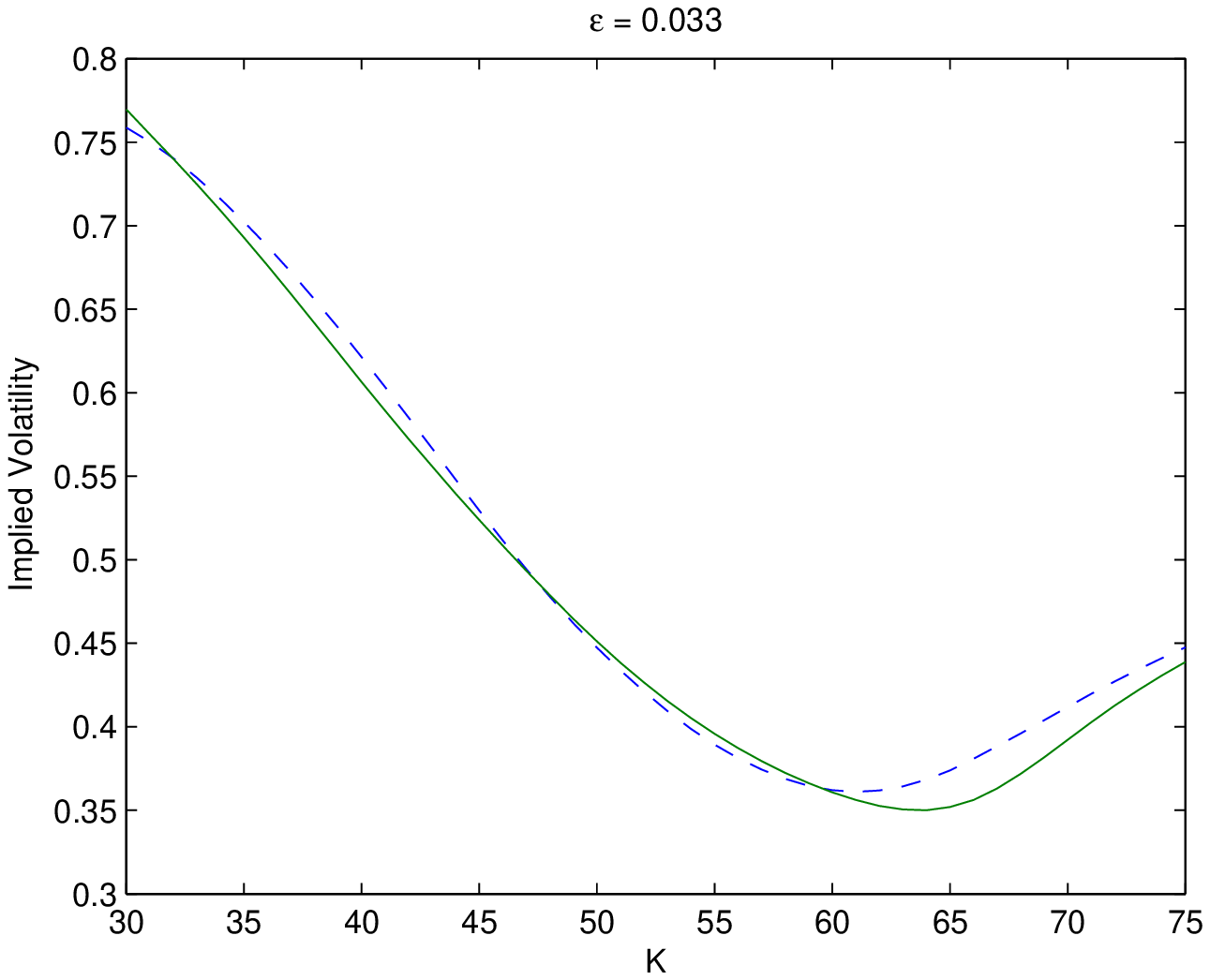} \\ \hline
\includegraphics[width=.5\textwidth,height=.26\textheight]{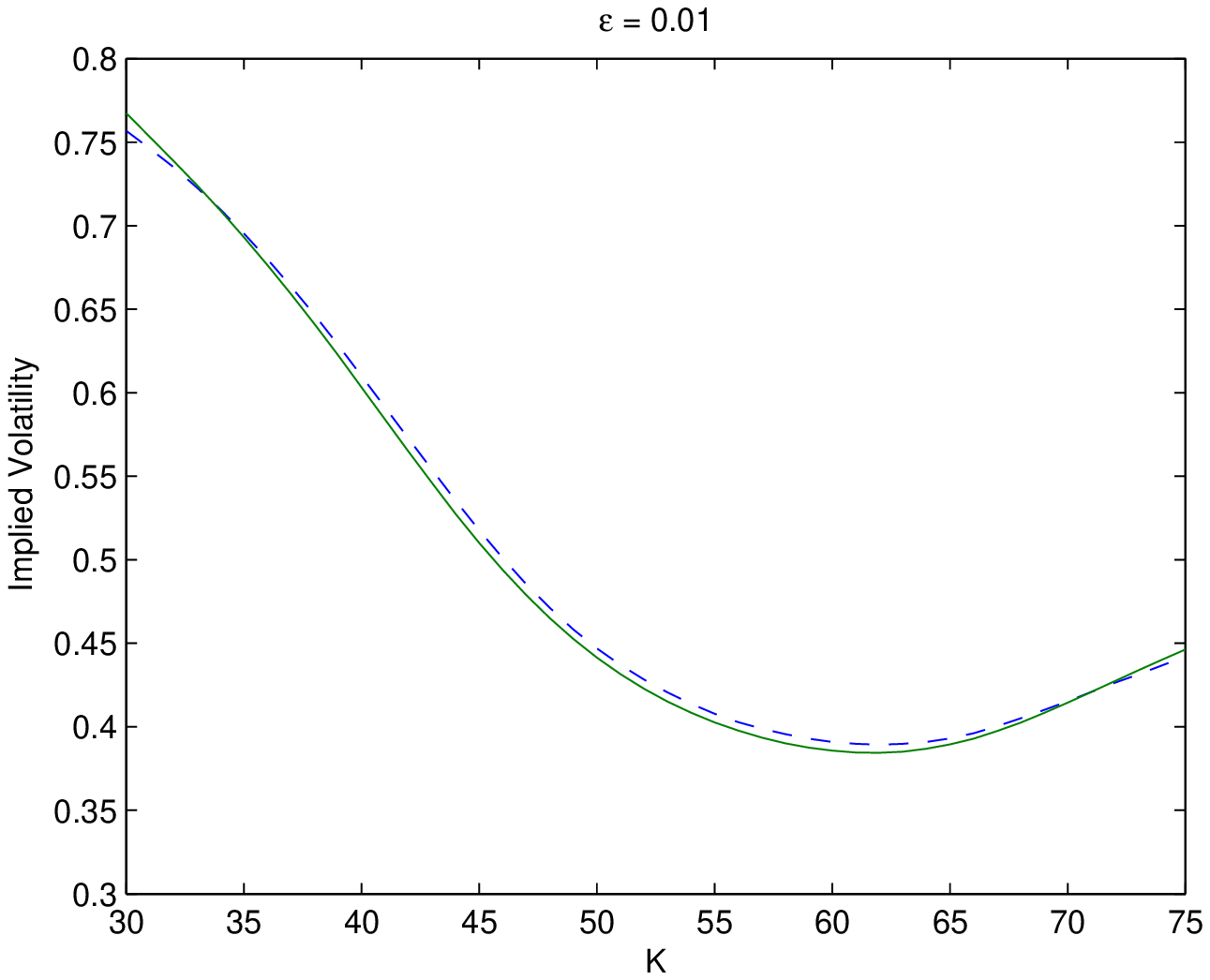} \\ \hline
\end{tabular}
\caption{Using the model described in section \ref{sec:example}, we plot the implied volatility induced by the price of European call option as a function of the strike price $K$.  In each plot, the dashed blue line corresponds to the implied volatility induced by the full price $u^\eps$ (computed via Monte Carlo simulation) and the solid green line corresponds to the implied volatility induced by our approximation $u_0 + \eps u_1$.  For all plots we use the following parameter values: $t=1/10$, $e^x=50$, $m=-0.2$, $s=0.2$, $\rho=-0.7$, $a=0.2$, $b=1.5$, $\beta=1.0$ and $\Lam=0.25$.}
\label{fig:ImpVol}
\end{figure}

%
%

\section{Calibration to S\&P500 index options}
\label{sec:calibration}
In this section we calibrate ExtMerton jump-diffusion class discussed in Section \ref{sec:example} to the implied volatility surface of S\&P500 options.  For comparison, we also calibrate the classical Merton model and the fast mean-reverting stochastic volatility (FMR-SV) class of models of \citet*{fouque} to the same set of data.
\par
In order to formulate the calibration procedure, we introduce the following notation
\begin{align}
\theta
	&:=	( \<\sig^2\>,\<\zeta\>,m,s,V_2^\eps,V_3^\eps,U_2^\eps,U_3^\eps ) , \\
\Theta
	&:=	\{ \theta : \<\sig^2\> > 0, \<\zeta\> \geq 0, m \in \Rb, s \geq 0 , (V_2^\eps,V_3^\eps,U_2^\eps ,U_3^\eps) \in \Rb^4 \} ,
\end{align}
where we have defined $V_i^\eps:= \eps \, V_i$ and $U_i^\eps := \eps \, U_i$.  Note that the components of $\theta$ are the unobservable parameters needed to compute the approximate price of an option $u_0 + \eps \, u_1$ in the ExtMerton framework, and $\Theta$ is the feasible state space of these parameters.  Note also that we do not assume a specific value for $\eps$, a specific volatility process $Y$, or specific functions: $\sig(y)$ or $\zeta(y)$.  In fact, this is one of the \emph{main features} of the class of models considered in this paper.  By assuming that the driving factor $Y$ is fast-varying and ergodic, specific choices for ($\eps$, $Y$, $\sig(y)$, $\zeta(y)$) are \emph{not needed} to compute the approximate price $u_0 + \eps \, u_1$ of an option (or the corresponding implied volatility).  For the purposes of calibration and pricing, the relevant information about ($\eps$, $Y$, $\sig(y)$, $\zeta(y)$) is neatly contained in $\<\sig^2\>$, $\<\zeta\>$ and the four group parameters $\{V_i^\eps,U_i^\eps,i=1,2\}$.
\par
Let $I^{\text{obs}}(t,k)$ be the observed implied volatility of a European call option with time to maturity $t$ and $\log$ strike $k = \log K$.  Let $I^\eps(t,k;\theta)$ be the implied volatility of a European call option with the same maturity and strike as computed in the ExtMerton framework using parameters $\theta \in \Theta$.  We formulate the calibration problem for the ExtMerton class as a least squares optimization.  That is, we seek $\theta^{*}$ such that
\begin{align}
\min_{\theta \in \Theta} \sum_{i} \( I^{\text{obs}}(t_i,k_i) - I^\eps(t_i,k_i;\theta) \)^2
	&=	\sum_{i} \( I(t_i,k_i) - I^\eps(t_i,k_j;\theta^*) \)^2 . \label{eq:lsq}
\end{align}
Here, the sum runs over all pairs $(t_i,k_i)$ in the data set.  Note: we do \emph{not} calibrate maturity-by-maturity.  The calibration procedures for the Merton model and the FMR-SV class are performed in a similar fashion by solving \eqref{eq:lsq} for $\theta \in \Theta^{\text{Mer}}$ and $\theta \in \Theta^{\text{FMR}}$ respectively, where
\begin{align}
\Theta^{\textrm{Mer}}
	&:=	\{ \theta : \<\sig^2\> > 0, \<\zeta\> \geq 0, m \in \Rb, s \geq 0, (V_2^\eps,V_3^\eps,U_2^\eps ,U_3^\eps) = 0 \} , \\
\Theta^{\textrm{FMR}}
	&:=	\{ \theta : \<\sig^2\> > 0, (\<\zeta\>, m,s) = 0, (V_2^\eps,V_3^\eps) \in \Rb^2, (U_2^\eps ,U_3^\eps) = 0 \} .
\end{align}
Note that by requiring $(V_2^\eps,V_3^\eps,U_2^\eps ,U_3^\eps) = 0$ in $\Theta^{\text{Mer}}$ the effects of stochastic volatility and stochastic jump intensity disappear, and the approximate option price in the ExtMerton class $u_0 + \eps \, u_1$ reduces to the Merton price $u_0$.  Similarly, by requiring that $(\<\zeta\>,m,s,U_2^\eps,U_3^\eps) = 0$ in $\Theta^{\text{FMR}}$, the effect of the jumps disappears (the effects of stochastic volatility remain), and the approximate option price in the ExtMerton class $u_0 + \eps \, u_1$ reduces to the price as computed in the FMR-SV class.
\par
We perform the calibration procedure for all three frameworks (ExtMerton class, classical Merton model, and FMR-SV class) on S\&P500 index options on four separate dates:
\begin{itemize}
\item January 4, 2010 encompassing maturities of 47, 75 and 103 days,
\item October 1, 2010 encompassing maturities of 50, 78 and 113 days,
\item December 19, 2011 encompassing maturities of 59, 88, 122, 177, 273 and 363 days, and
\item January 11, 2012 encompassing maturities of 66, 100, 155, 251 and 341 days.
\end{itemize}
To perform the calibration we use 
Matlab's built-in non-linear least squares optimizer: \texttt{fmincon}.
The obtained fits for all three models are plotted in Figures
\ref{fig:ImpVol2010Jan}, \ref{fig:ImpVol2010Oct}, \ref{fig:merton1} and \ref{fig:merton2}.
For each plot, the units of the horizontal axis are \emph{log-moneyness}: $\text{LM} := k-x$.  The vertical axis represents implied volatility.
Summarizing statistics can be found in Table \ref{tab:stats}.
\par
A visual inspection of figures \ref{fig:ImpVol2010Jan}, \ref{fig:ImpVol2010Oct}, \ref{fig:merton1} and \ref{fig:merton2} clearly supports the use of the ExtMerton class over both the Merton model and the FMR-SV class.  The visual evidence is confirmed by the obtained root mean-square error (RMSE), which for the ExtMerton class is of the same order as the implied volatility bid-ask spread.  Furthermore, the ExtMerton RMSE is less than half the RMSE of the classical Merton model and roughly one fourth the RMSE of the FMR-SV class.  Intuitively, the reason for the improved fit in ExtMerton class is that to obtain a tight fit at longer maturities one requires a model with stochastic volatility, whereas short maturities require a model with jumps in order to reproduce the strong smile.
\par
To further support the addition of stochastic volatility and jump intensity we consider other L\'evy measures: Gumbel, Variance Gamma, uniform and Dirac.  For each measure we calibrate the corresponding L\'evy model and its extended counterpart to S\&P500 implied volatilities from December 19, 2011.  As in the Merton model, we observe that the RMSE for the extended models is of the order of the implied volatility bid-ask spread and roughly one half the RMSE of the classical L\'evy counterparts.  Summarizing statistics can be found in table \ref{tab:stats}.


\begin{figure}
\centering
\begin{tabular}{ | c | c | c | }
\hline
Extended Merton & Merton & FMR-SV \\ \hline
47 days-to-maturity & 47 days-to-maturity & 47 days-to-maturity  \\
\includegraphics[width=.33\textwidth,height=.2\textheight]{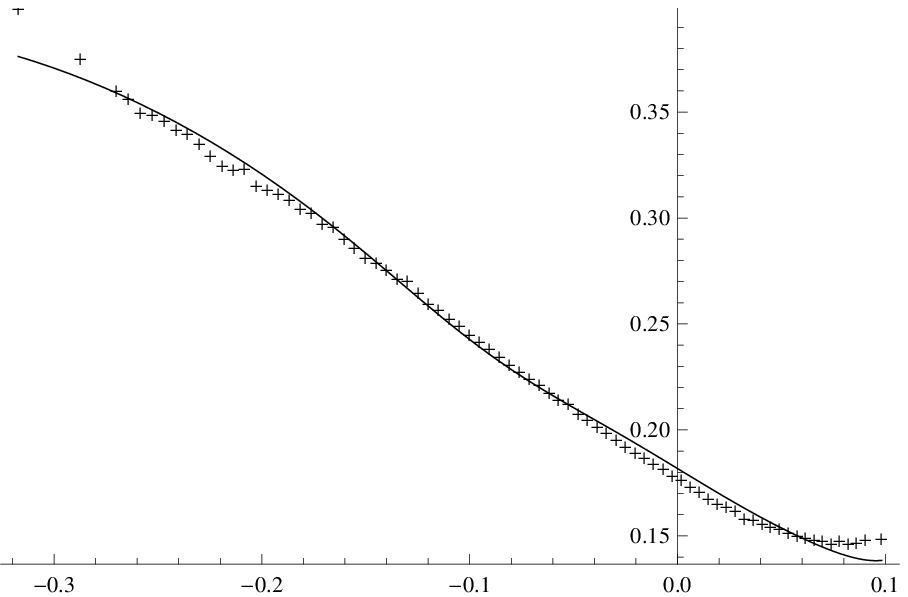} &
\includegraphics[width=.33\textwidth,height=.2\textheight]{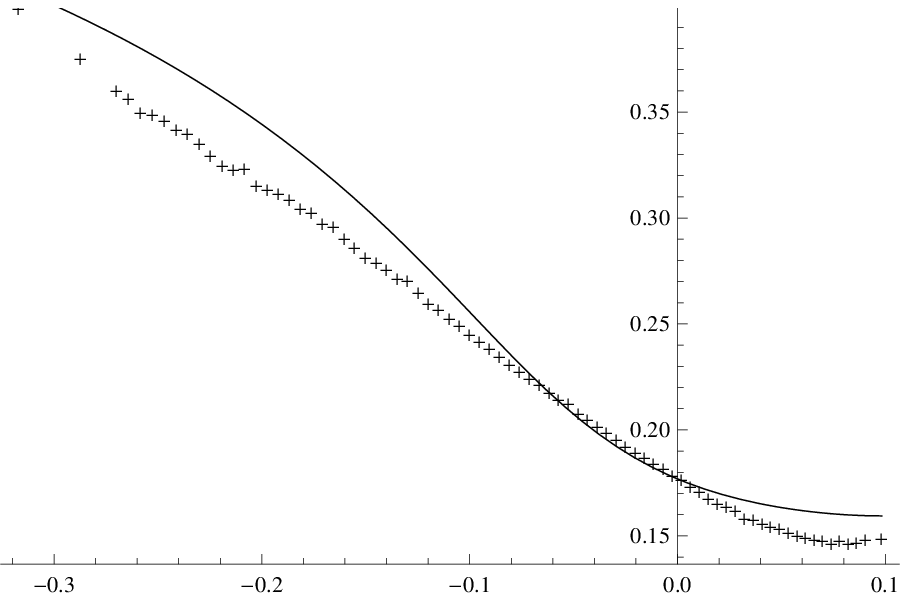} &
\includegraphics[width=.33\textwidth,height=.2\textheight]{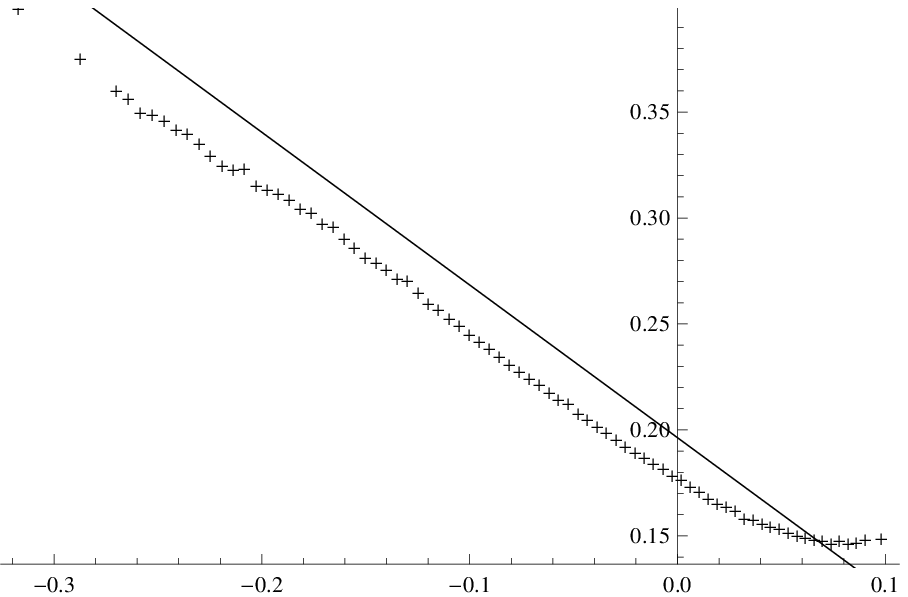} \\ \hline
75 days-to-maturity& 75 days-to-maturity & 75 days-to-maturity  \\
\includegraphics[width=.33\textwidth,height=.2\textheight]{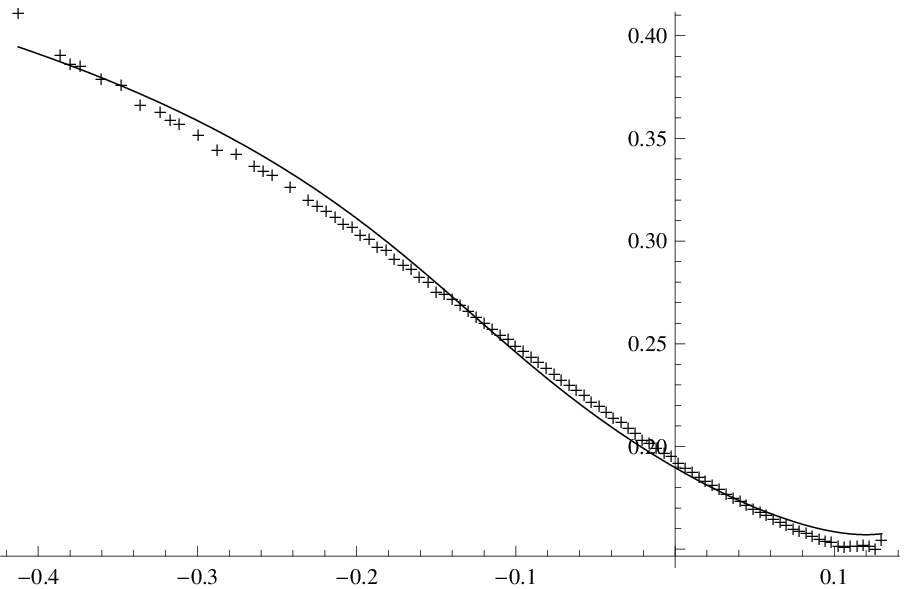} &
\includegraphics[width=.33\textwidth,height=.2\textheight]{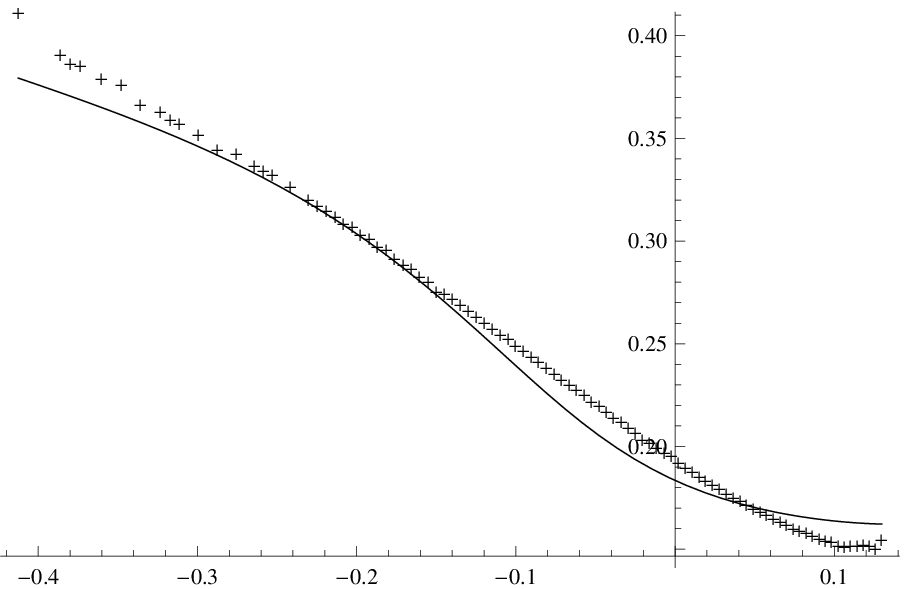} &
\includegraphics[width=.33\textwidth,height=.2\textheight]{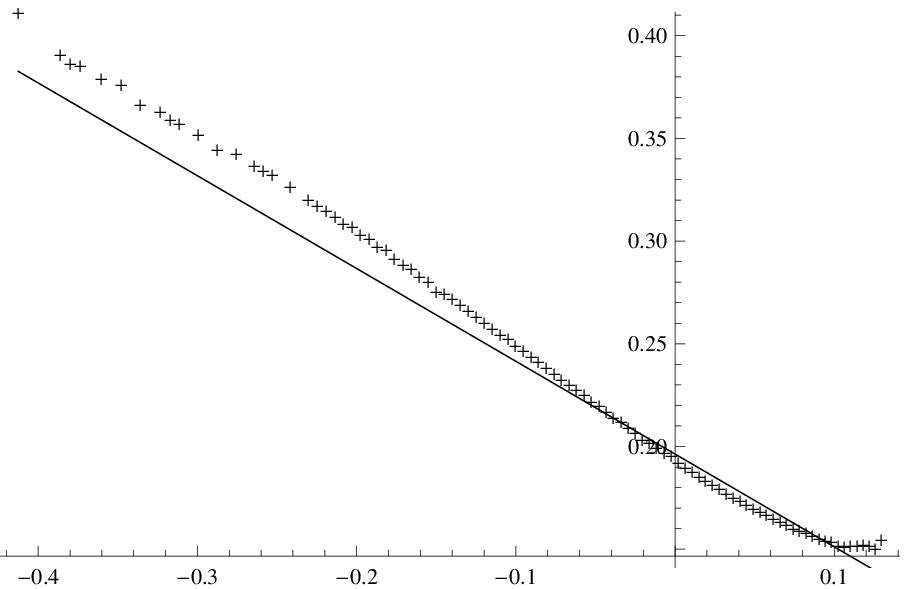} \\ \hline
103 days-to-maturity & 103 days-to-maturity & 103 days-to-maturity  \\
\includegraphics[width=.33\textwidth,height=.2\textheight]{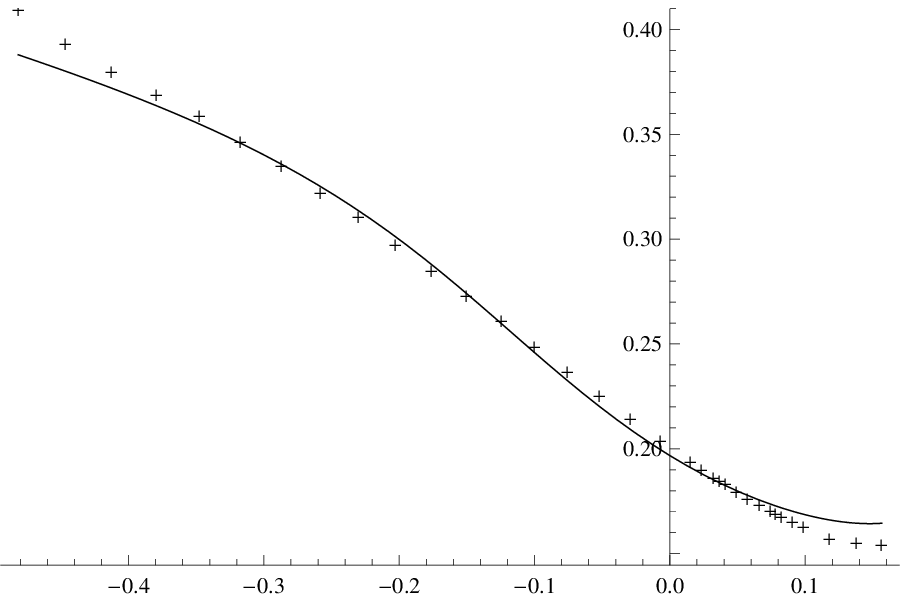} &
\includegraphics[width=.33\textwidth,height=.2\textheight]{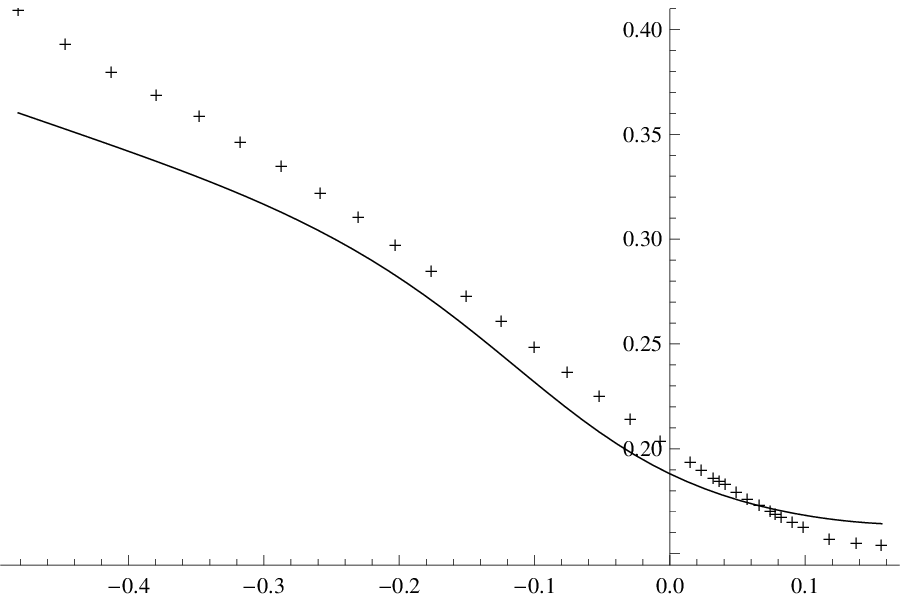} &
\includegraphics[width=.33\textwidth,height=.2\textheight]{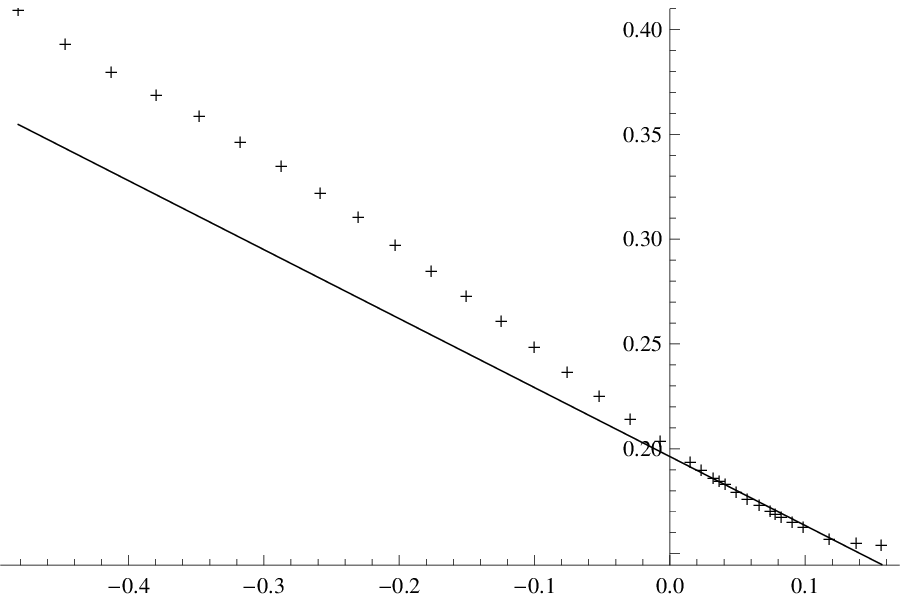} \\ \hline
\end{tabular}
\caption{Implied volatility fit to S\&P500 index options from January 4, 2010.}
\label{fig:ImpVol2010Jan}
\end{figure}

\begin{figure}
\centering
\begin{tabular}{ | c | c | c | }
\hline
Extended Merton & Merton & FMR-SV \\ \hline
50 days-to-maturity & 50 days-to-maturity & 50 days-to-maturity  \\
\includegraphics[width=.33\textwidth,height=.2\textheight]{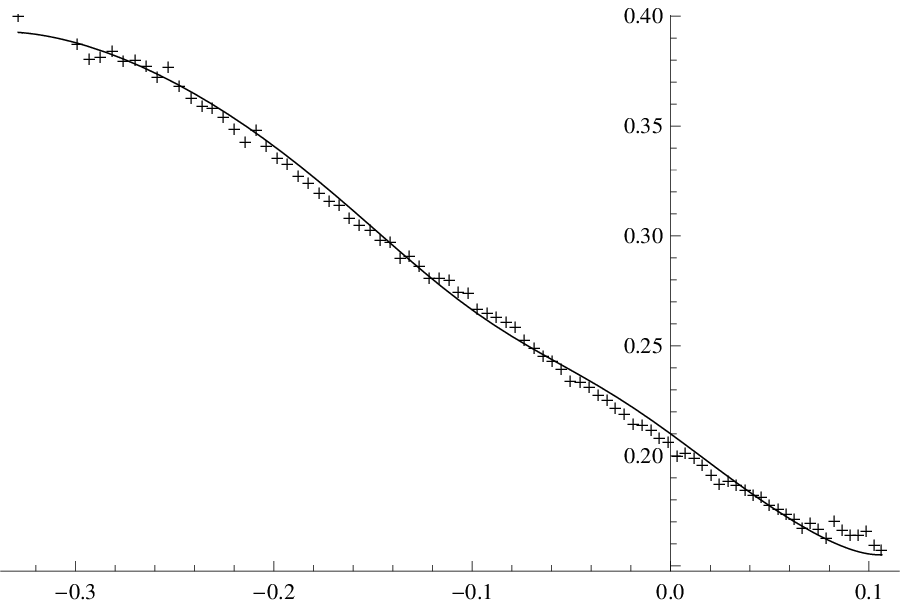} &
\includegraphics[width=.33\textwidth,height=.2\textheight]{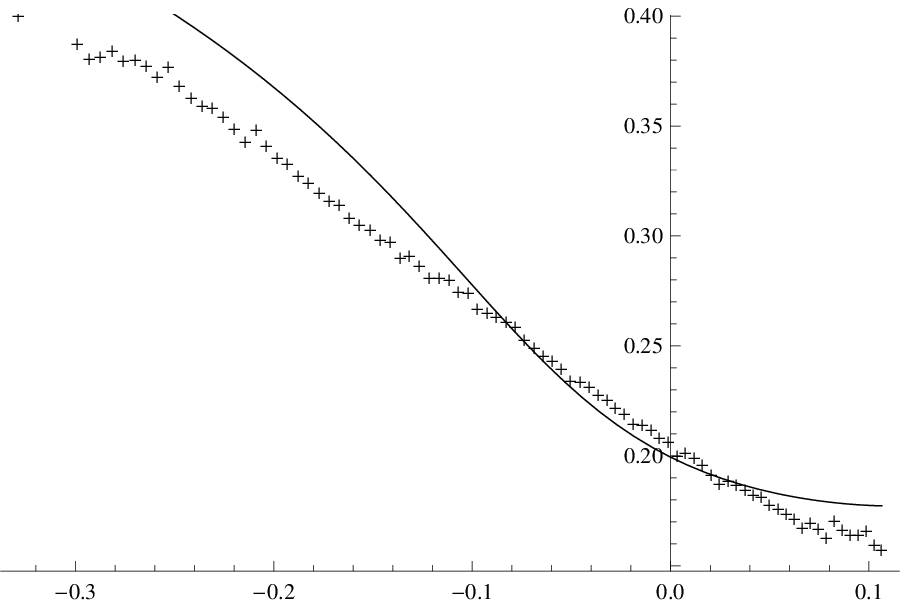} &
\includegraphics[width=.33\textwidth,height=.2\textheight]{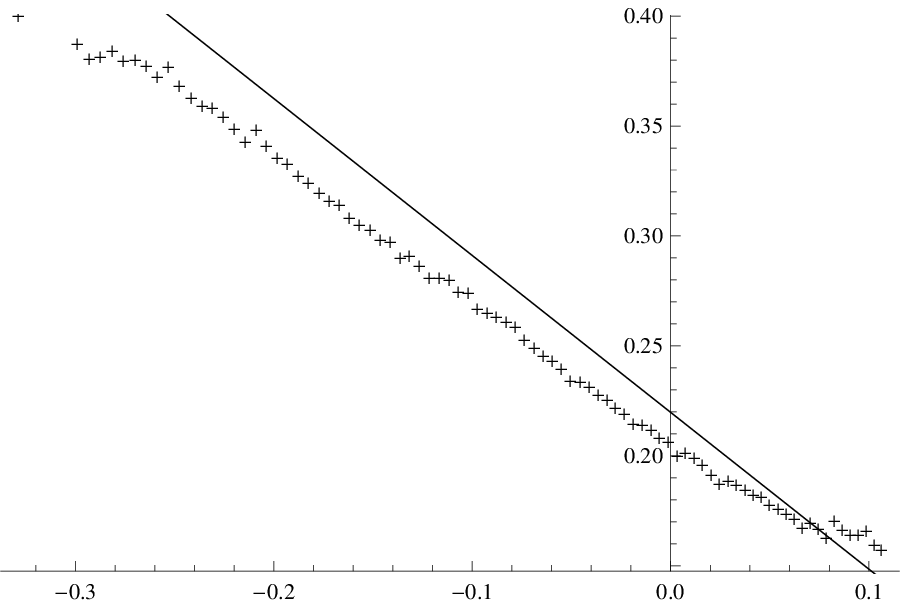} \\ \hline
78 days-to-maturity& 78 days-to-maturity & 78 days-to-maturity  \\
\includegraphics[width=.33\textwidth,height=.2\textheight]{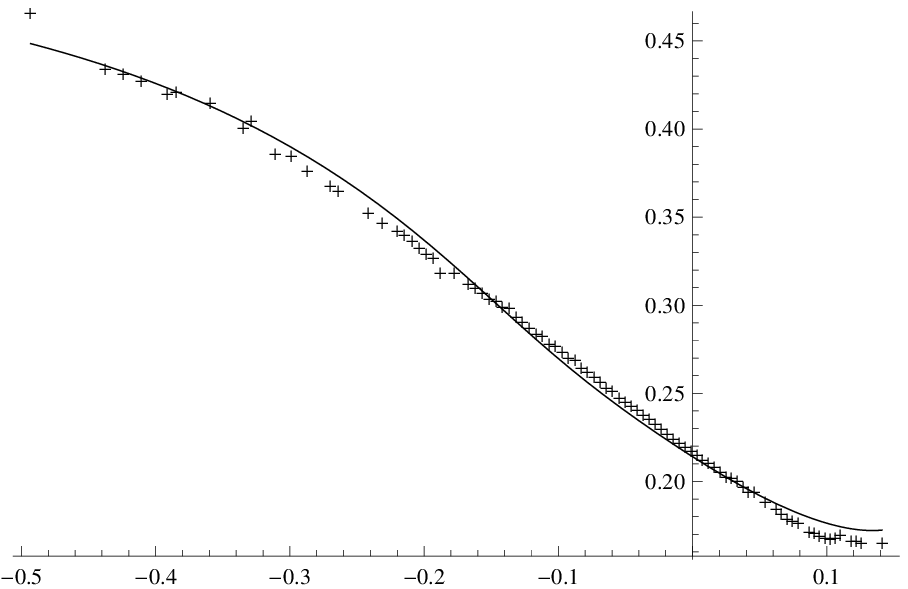} &
\includegraphics[width=.33\textwidth,height=.2\textheight]{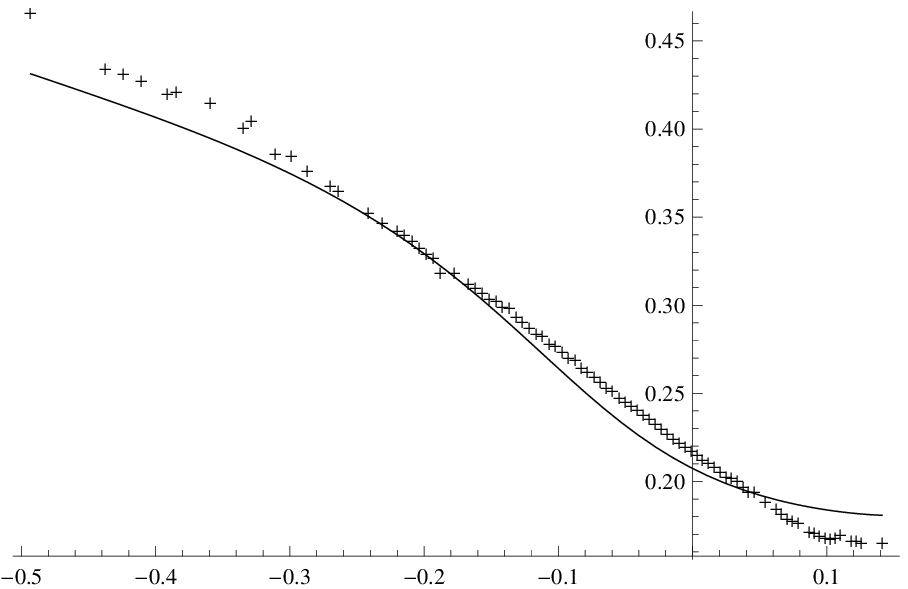} &
\includegraphics[width=.33\textwidth,height=.2\textheight]{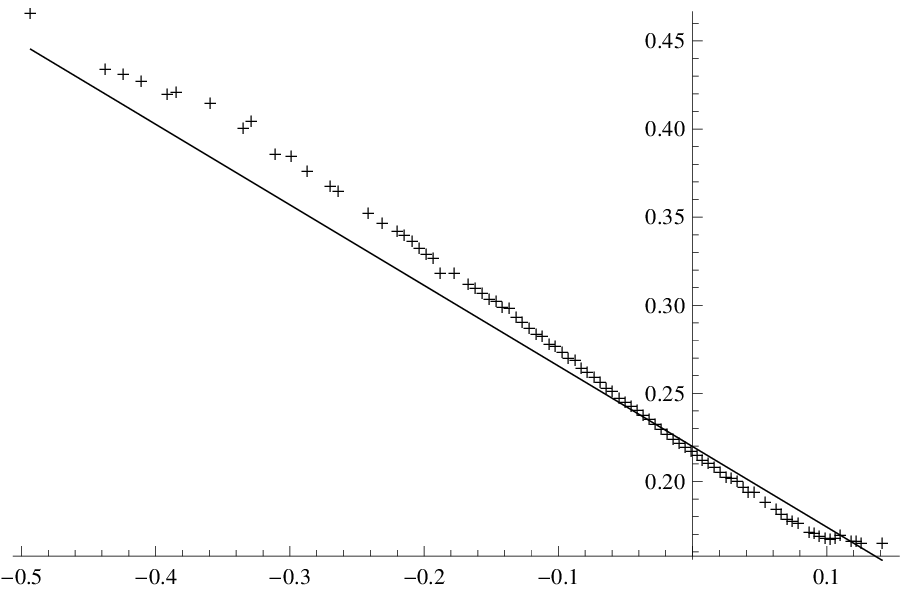} \\ \hline
113 days-to-maturity & 113 days-to-maturity & 113 days-to-maturity  \\
\includegraphics[width=.33\textwidth,height=.2\textheight]{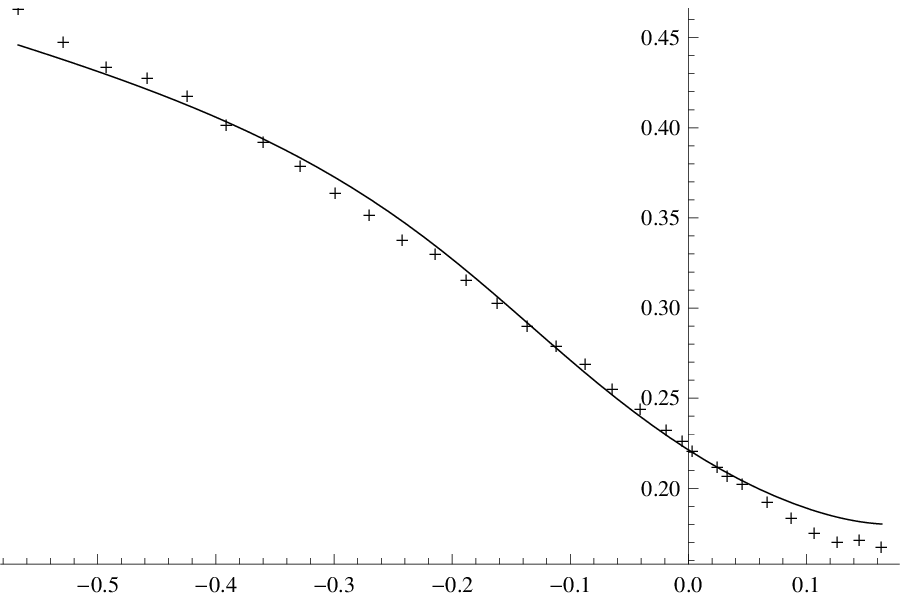} &
\includegraphics[width=.33\textwidth,height=.2\textheight]{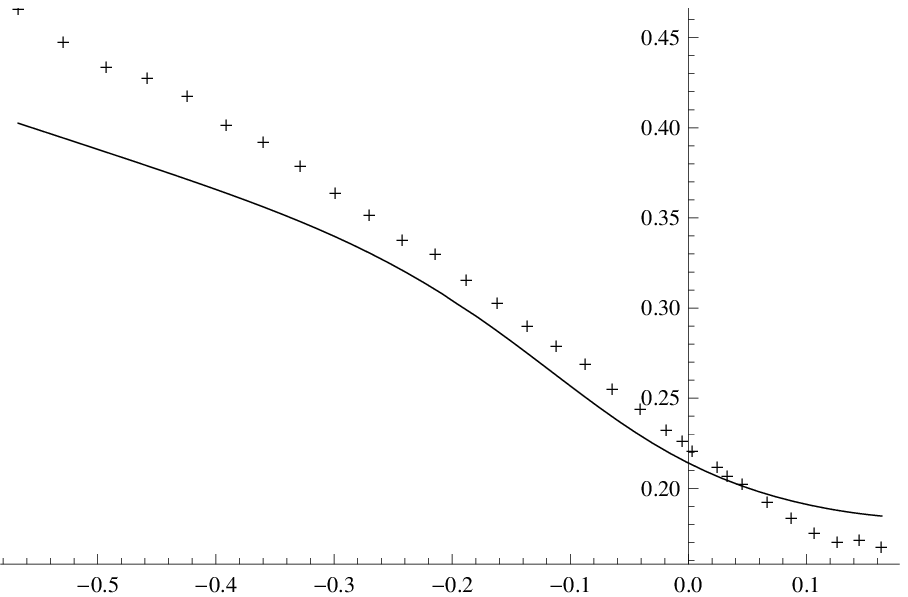} &
\includegraphics[width=.33\textwidth,height=.2\textheight]{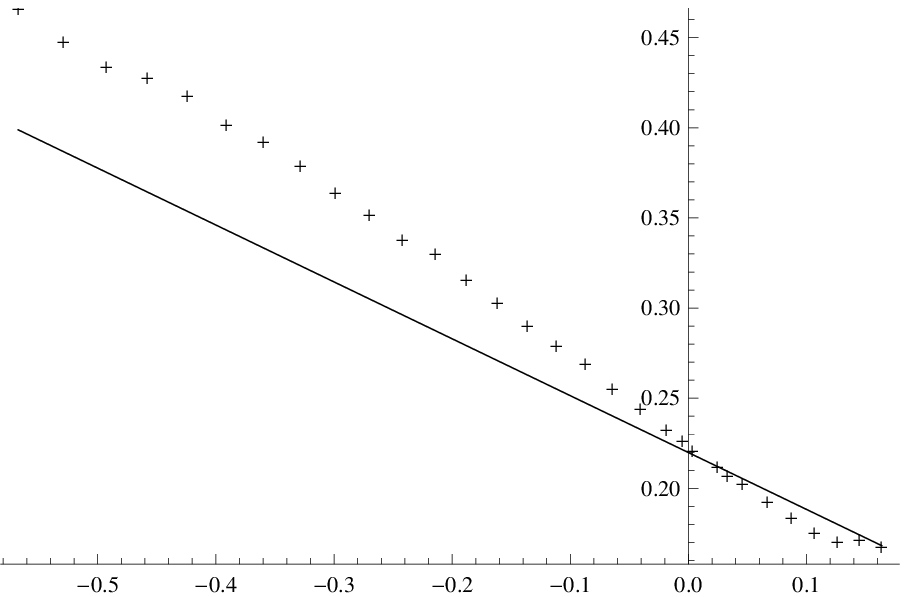} \\ \hline
\end{tabular}
\caption{Implied volatility fit to S\&P500 index options from October 1, 2010.}
\label{fig:ImpVol2010Oct}
\end{figure}

\begin{figure}
\centering
\begin{tabular}{ | c | c | c | }
\hline
Extended Merton & Merton & FMR-SV \\ \hline
59 days-to-maturity & 59 days-to-maturity & 59 days-to-maturity  \\
\includegraphics[width=.33\textwidth,height=.118\textheight]{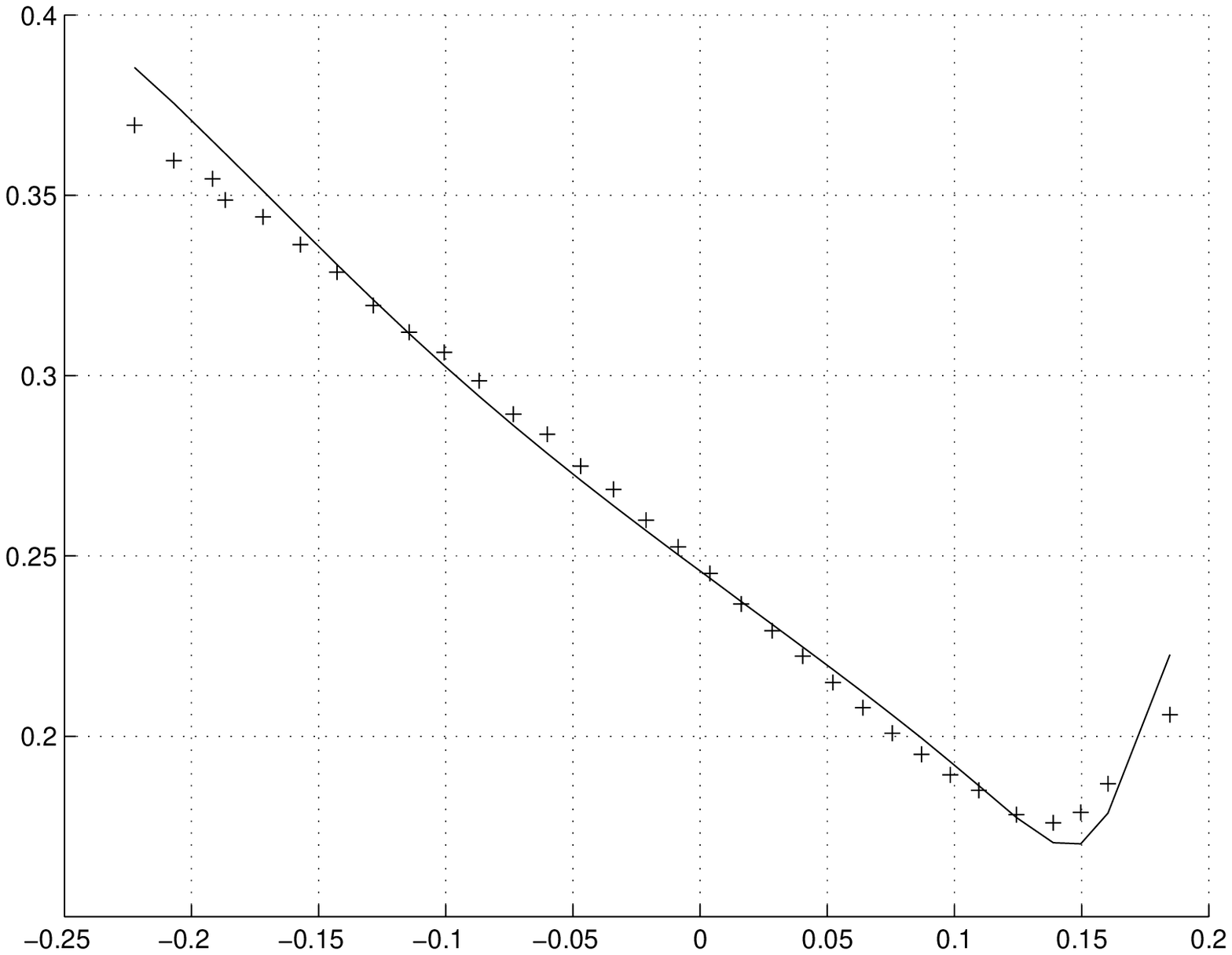} &
\includegraphics[width=.33\textwidth,height=.118\textheight]{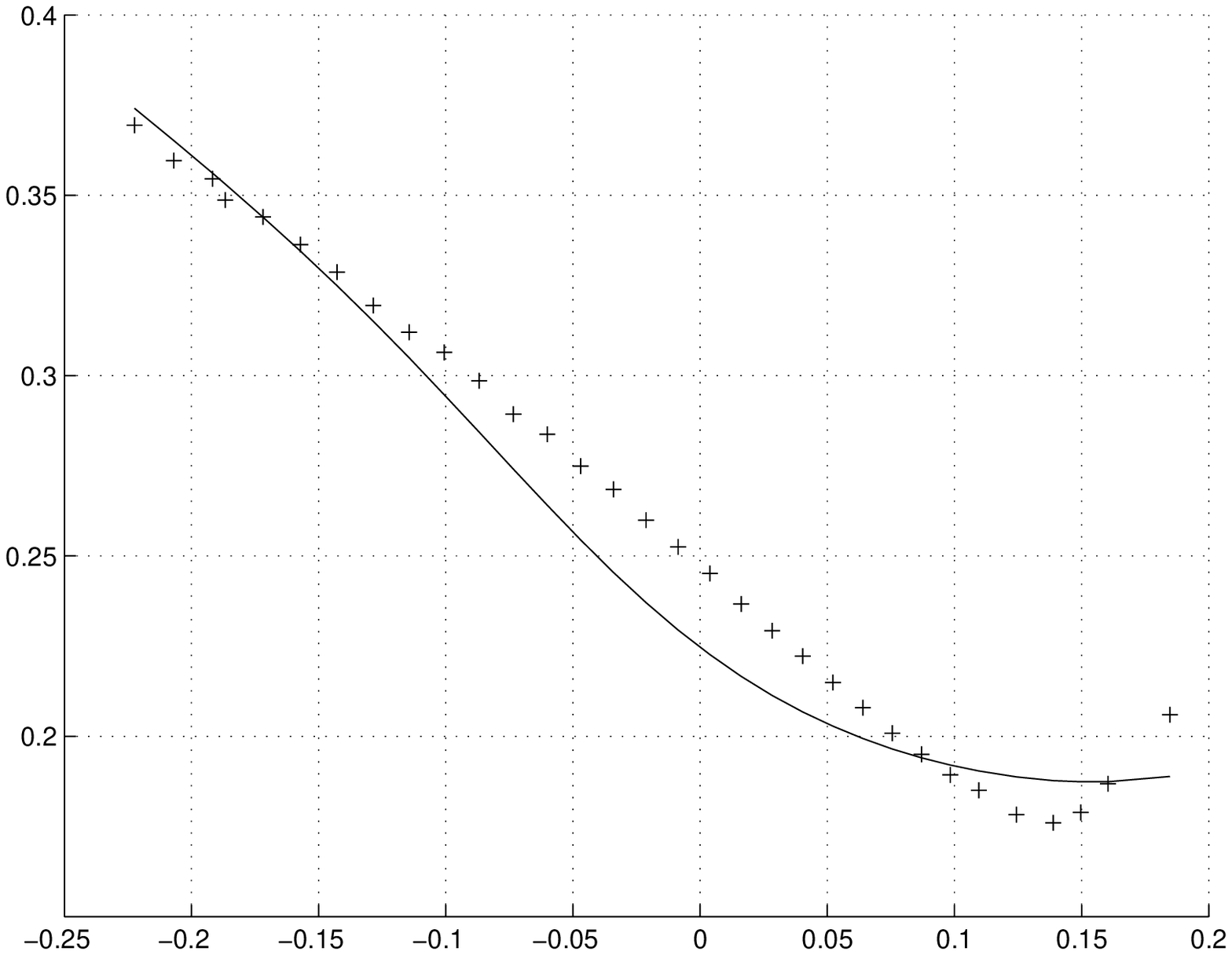} &
\includegraphics[width=.33\textwidth,height=.118\textheight]{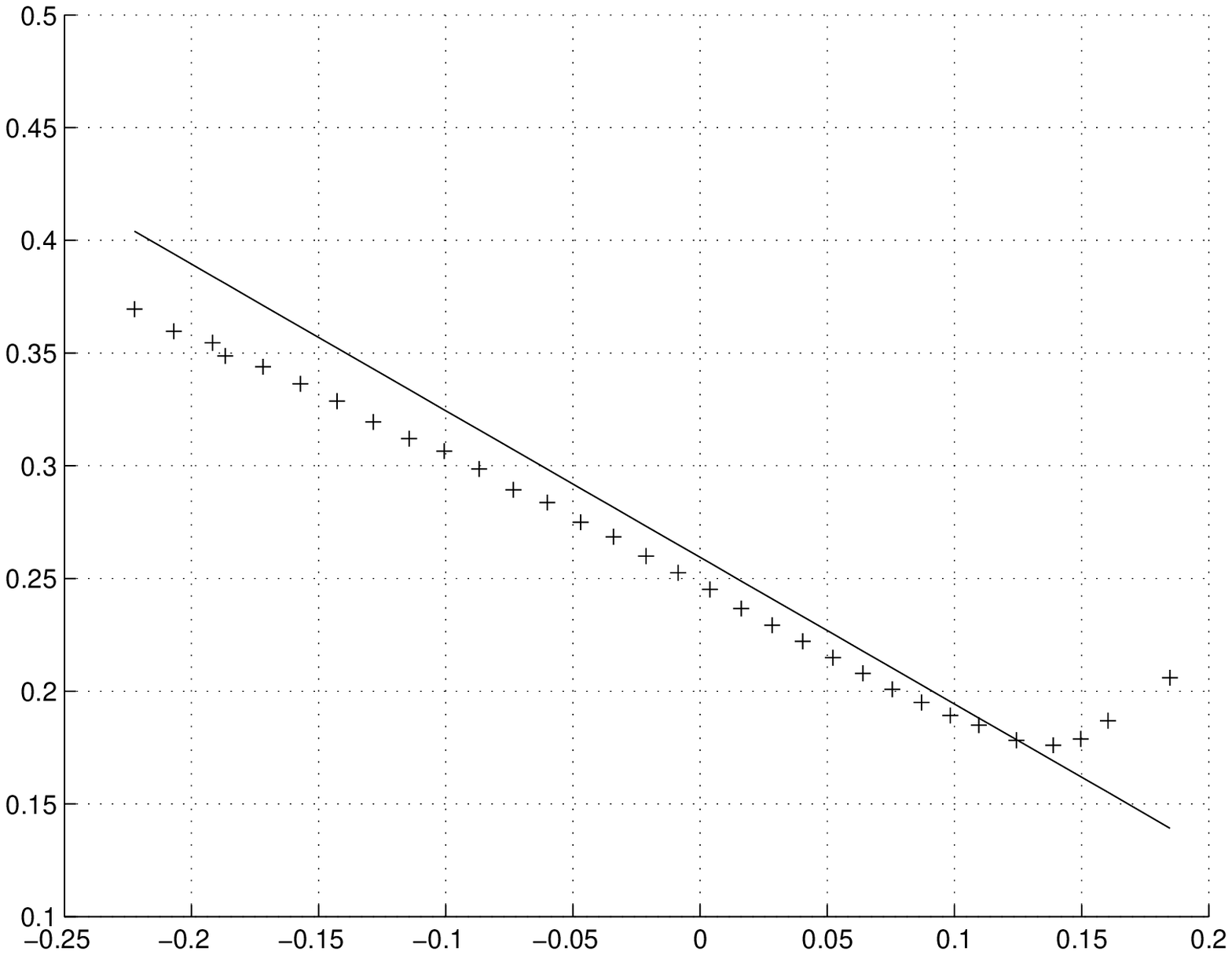} \\ \hline
88 days-to-maturity& 88 days-to-maturity & 88 days-to-maturity  \\
\includegraphics[width=.33\textwidth,height=.118\textheight]{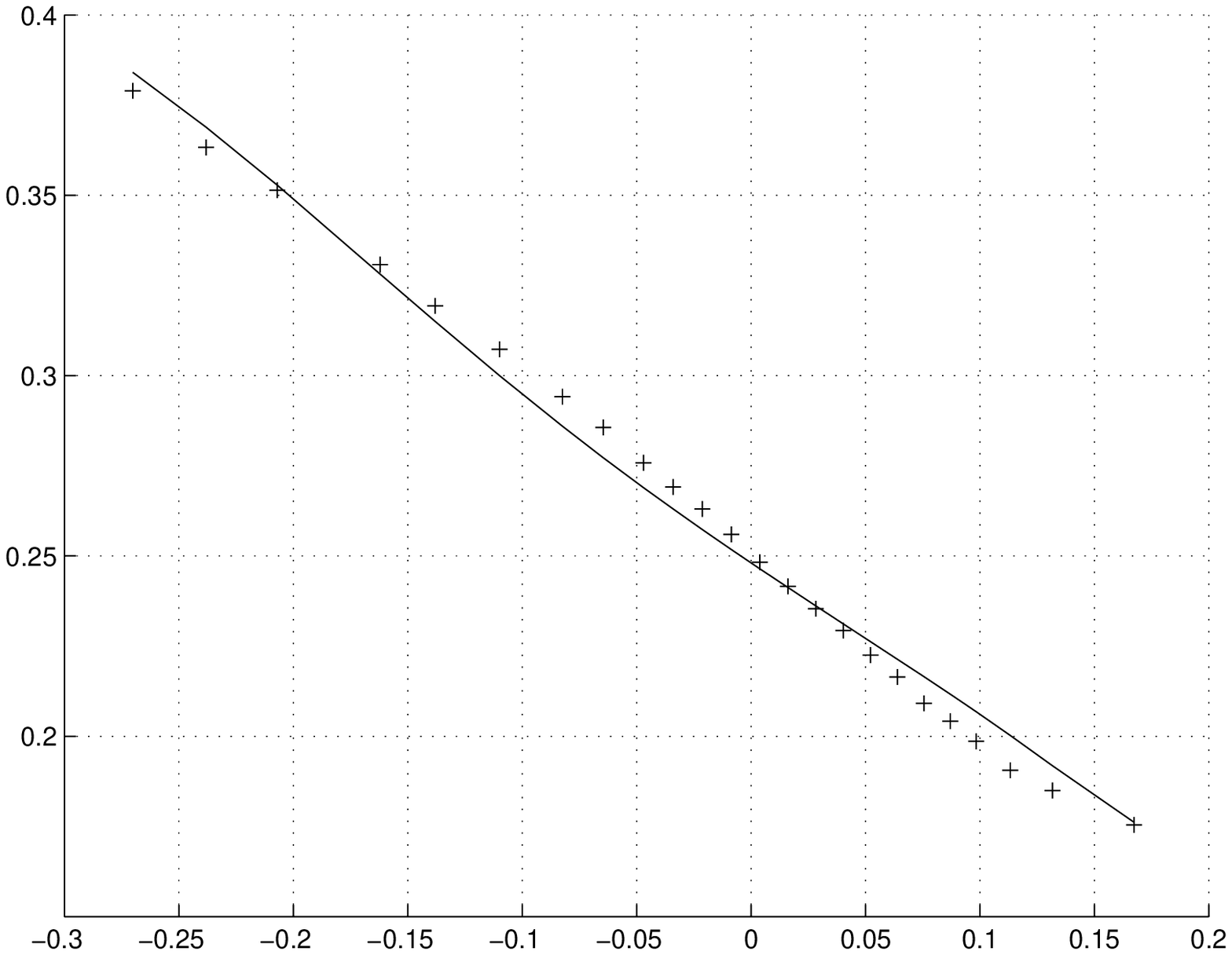} &
\includegraphics[width=.33\textwidth,height=.118\textheight]{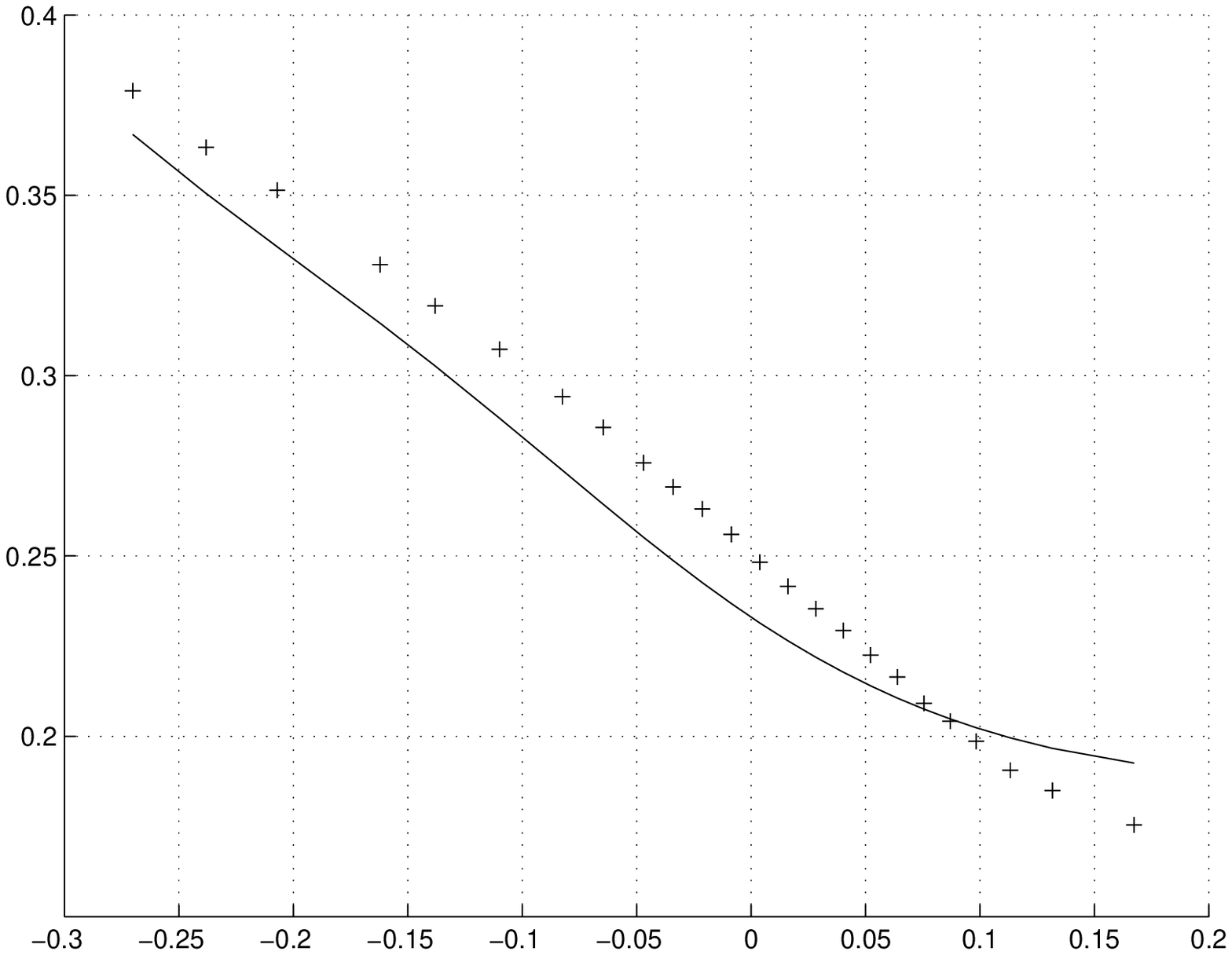} &
\includegraphics[width=.33\textwidth,height=.118\textheight]{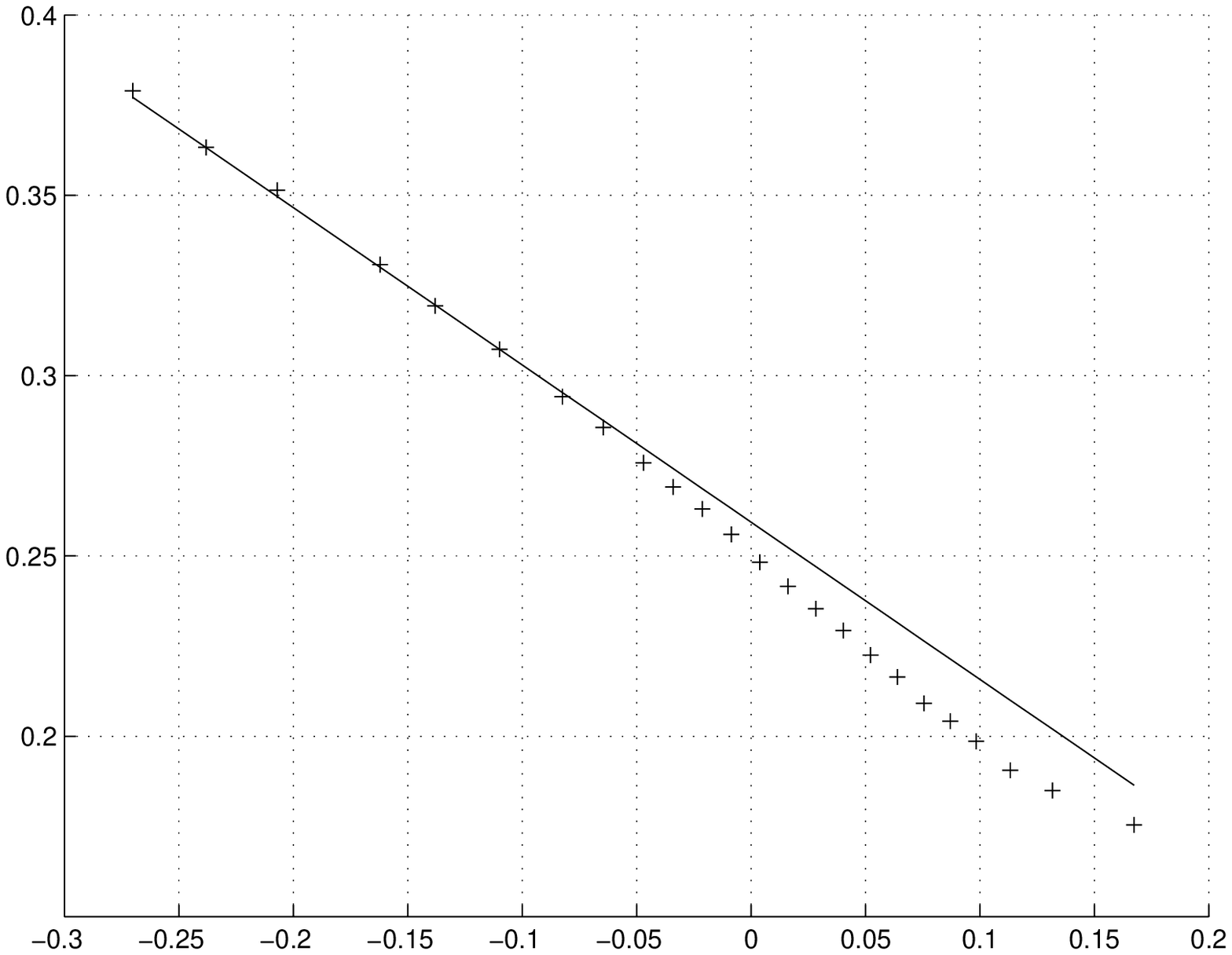} \\ \hline
122 days-to-maturity & 122 days-to-maturity & 122 days-to-maturity  \\
\includegraphics[width=.33\textwidth,height=.118\textheight]{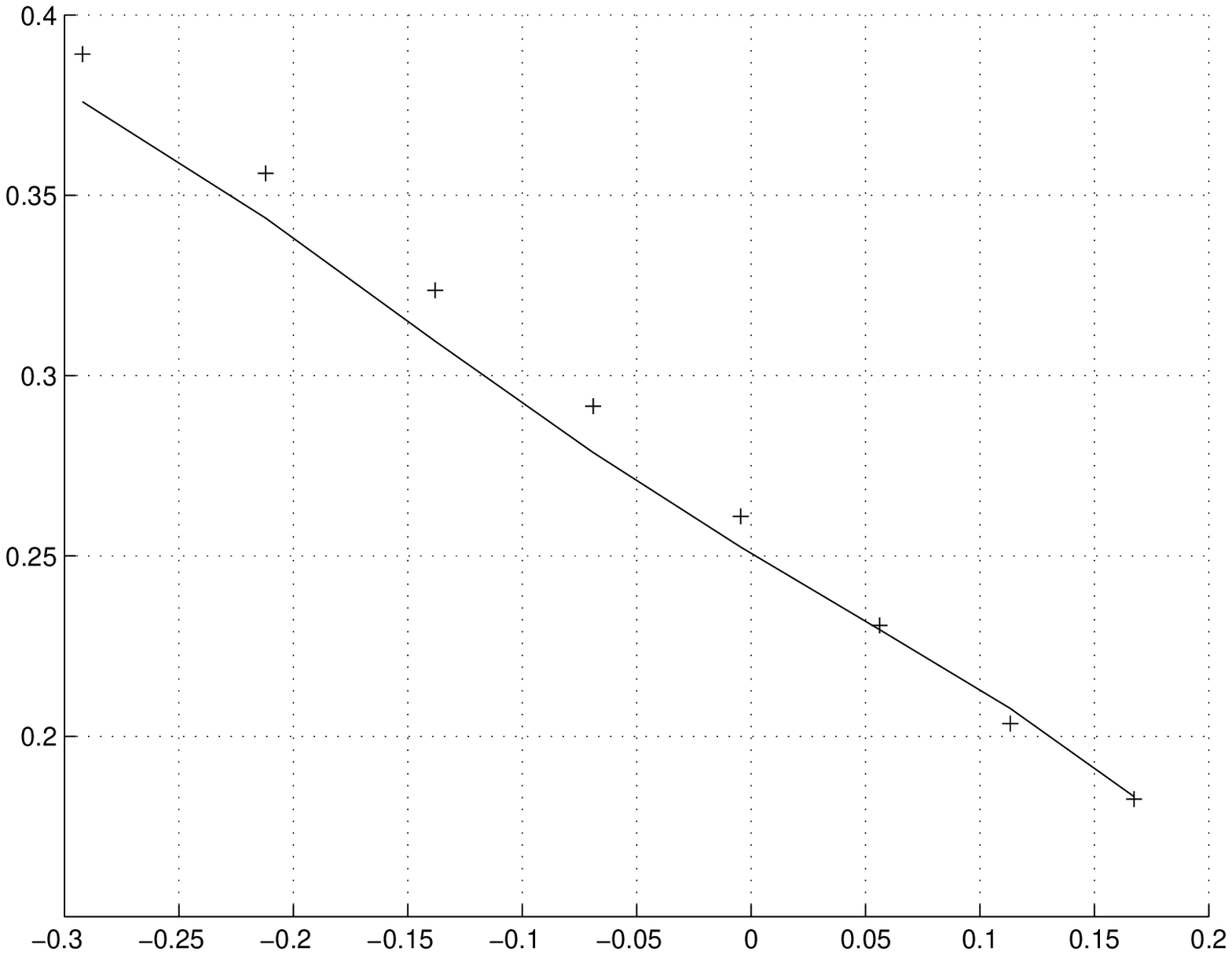} &
\includegraphics[width=.33\textwidth,height=.118\textheight]{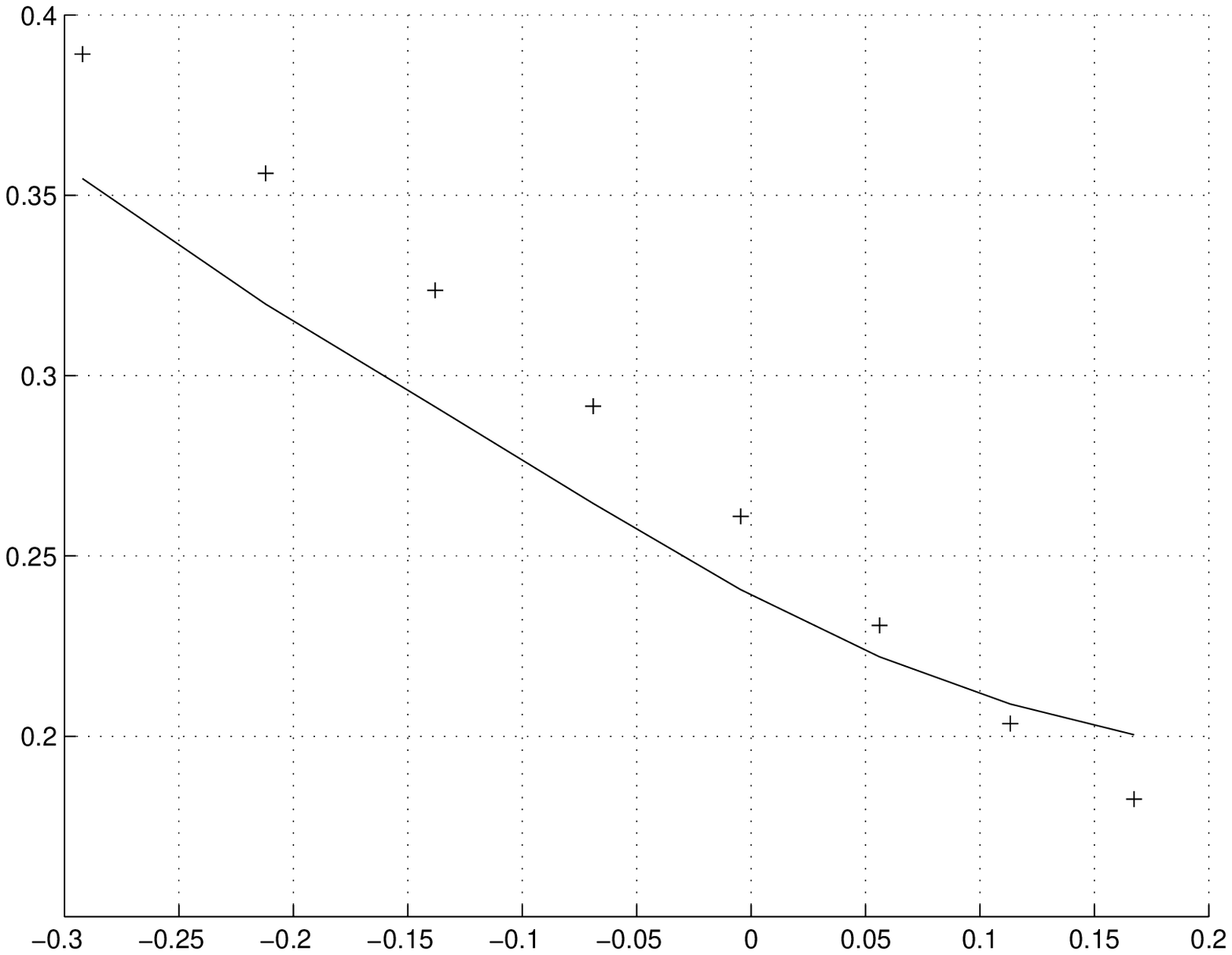} &
\includegraphics[width=.33\textwidth,height=.118\textheight]{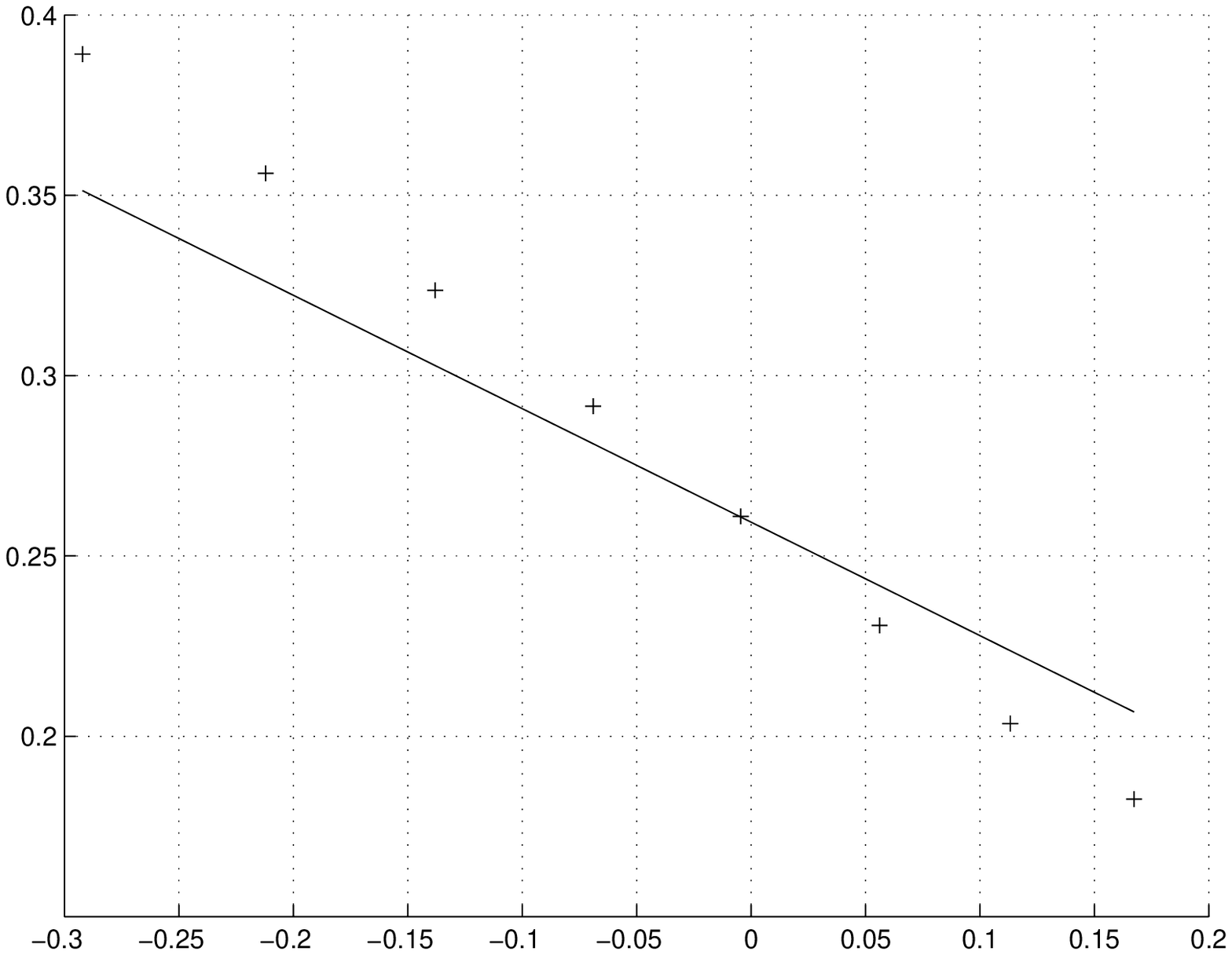} \\ \hline
177 days-to-maturity & 177 days-to-maturity &177 days-to-maturity  \\
\includegraphics[width=.33\textwidth,height=.118\textheight]{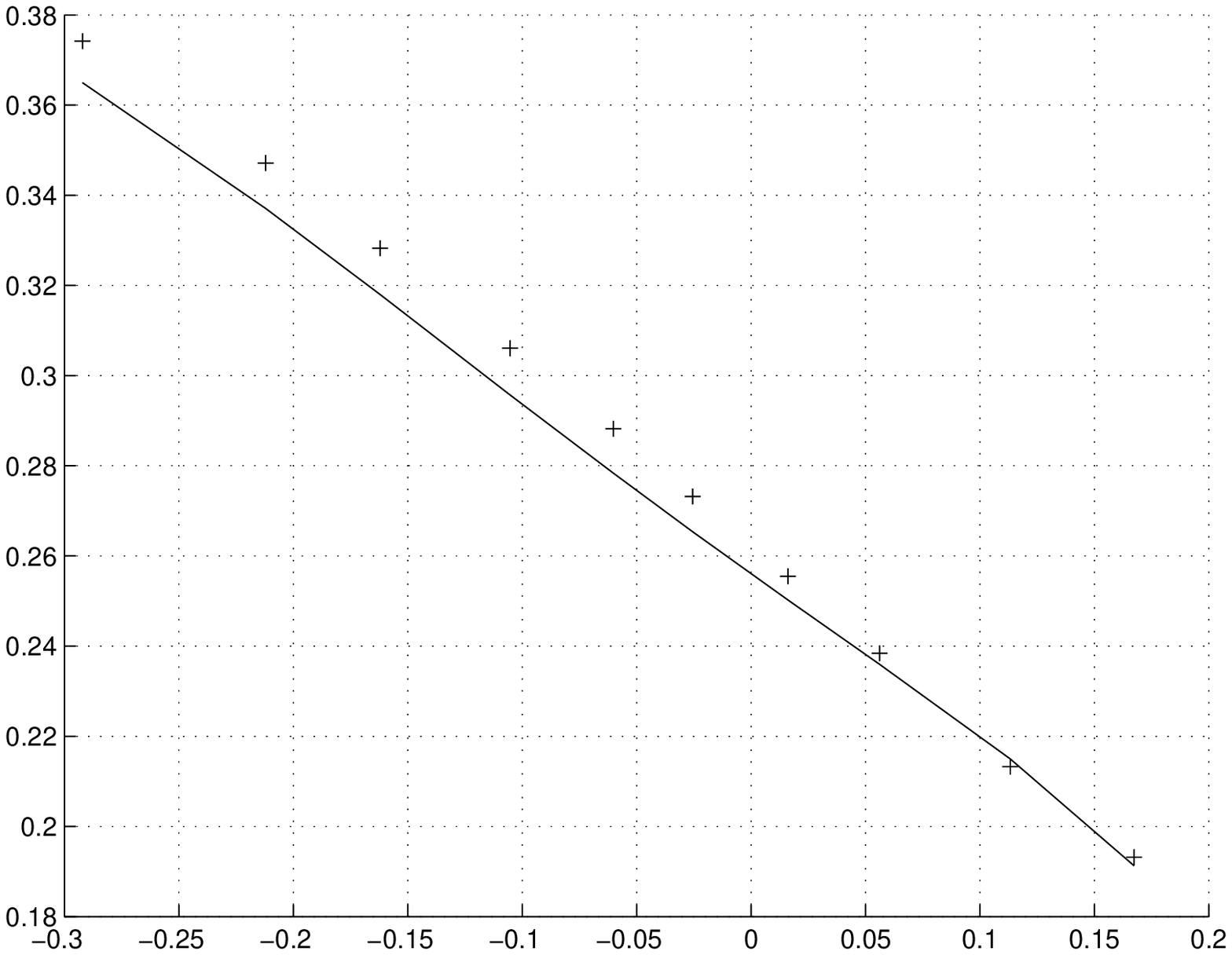} &
\includegraphics[width=.33\textwidth,height=.118\textheight]{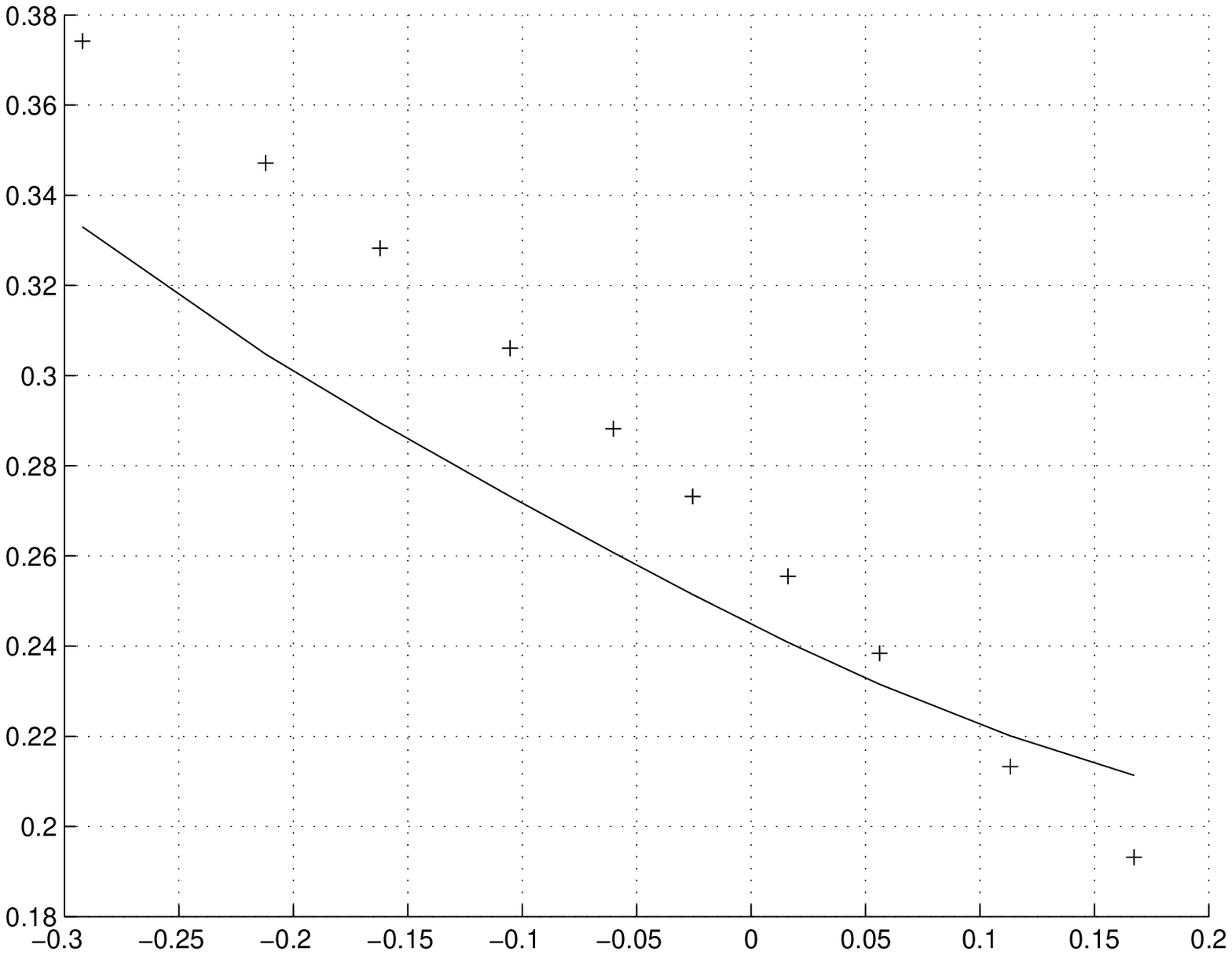} &
\includegraphics[width=.33\textwidth,height=.118\textheight]{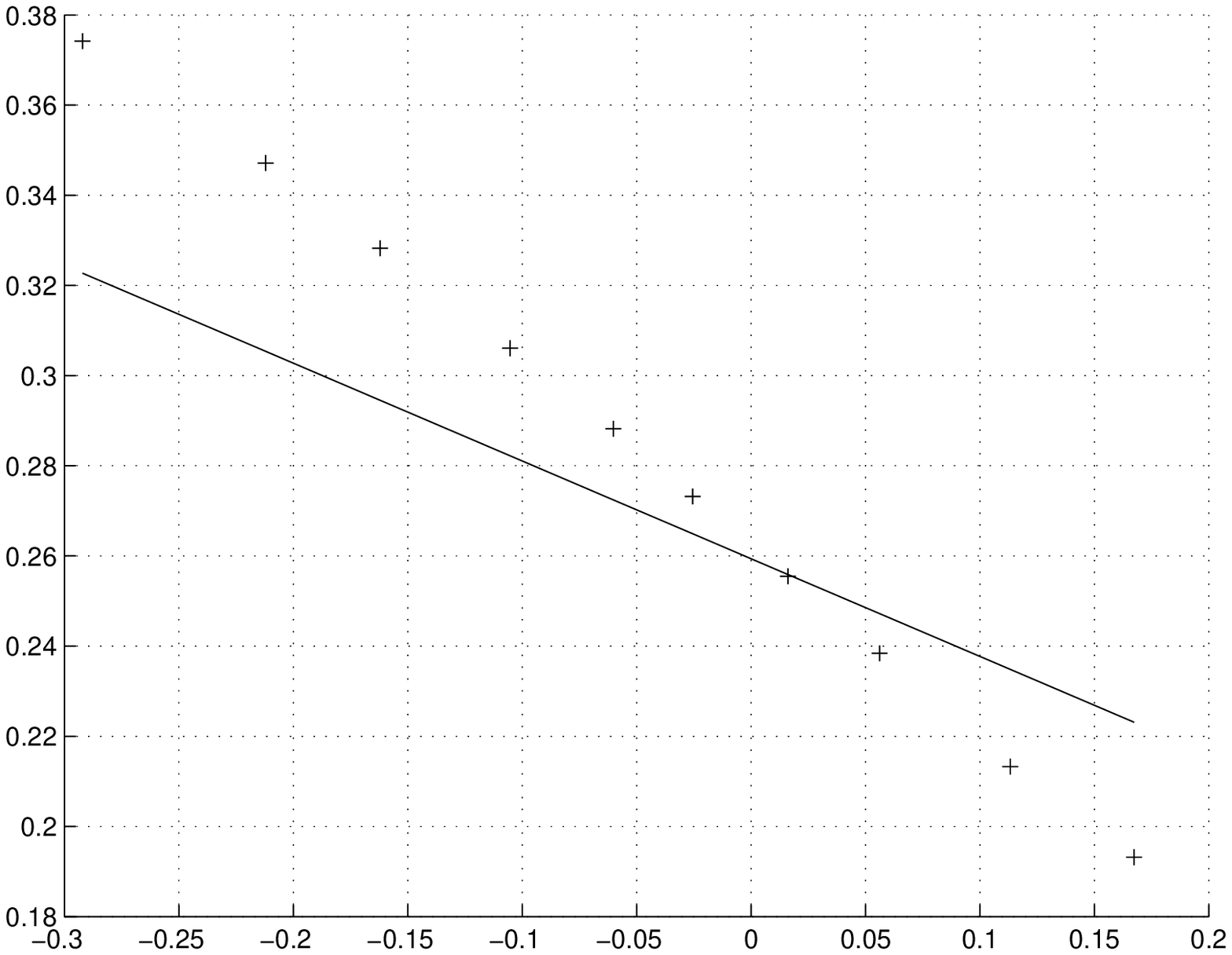} \\ \hline
273 days-to-maturity& 273 days-to-maturity & 273 days-to-maturity  \\
\includegraphics[width=.33\textwidth,height=.118\textheight]{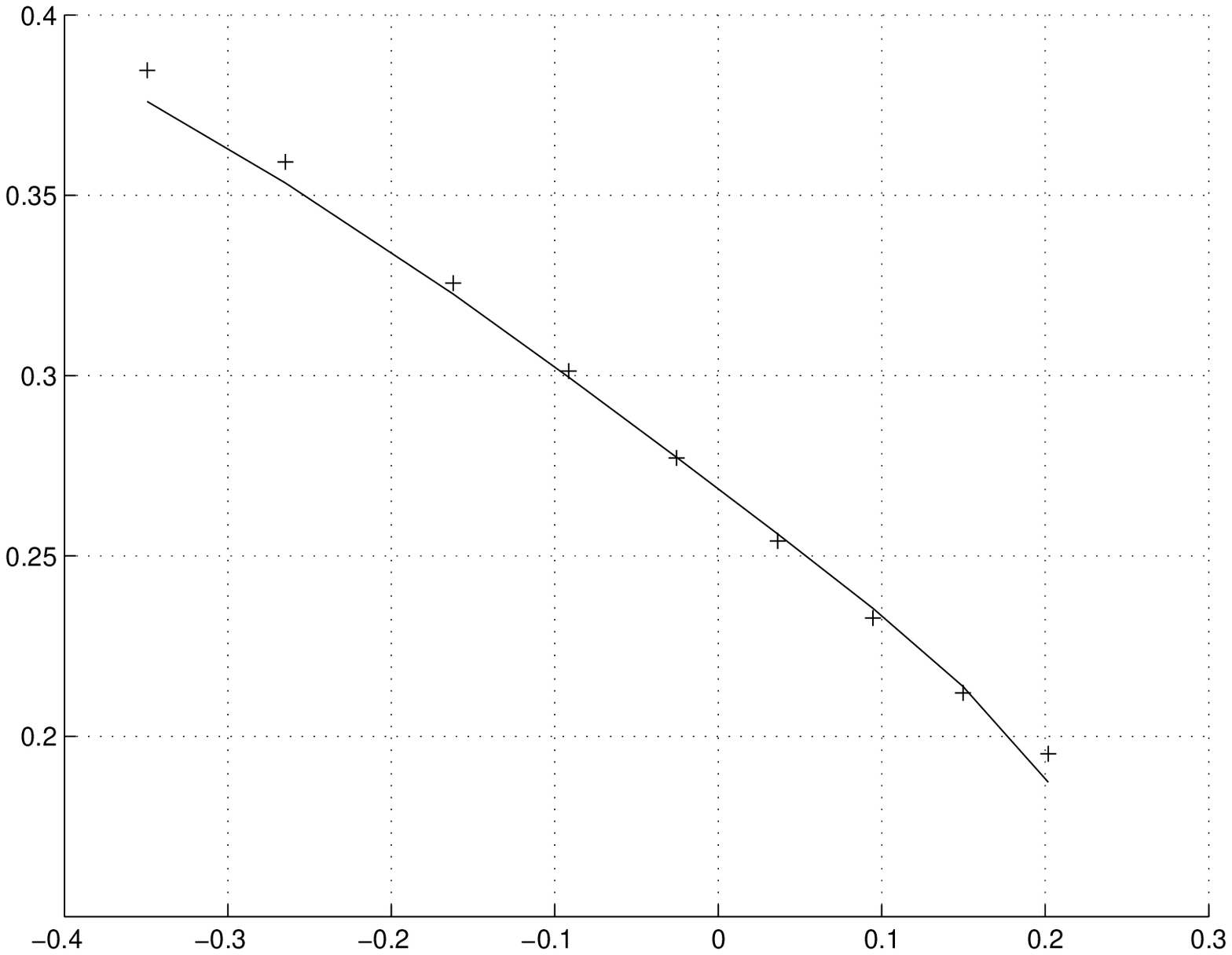} &
\includegraphics[width=.33\textwidth,height=.118\textheight]{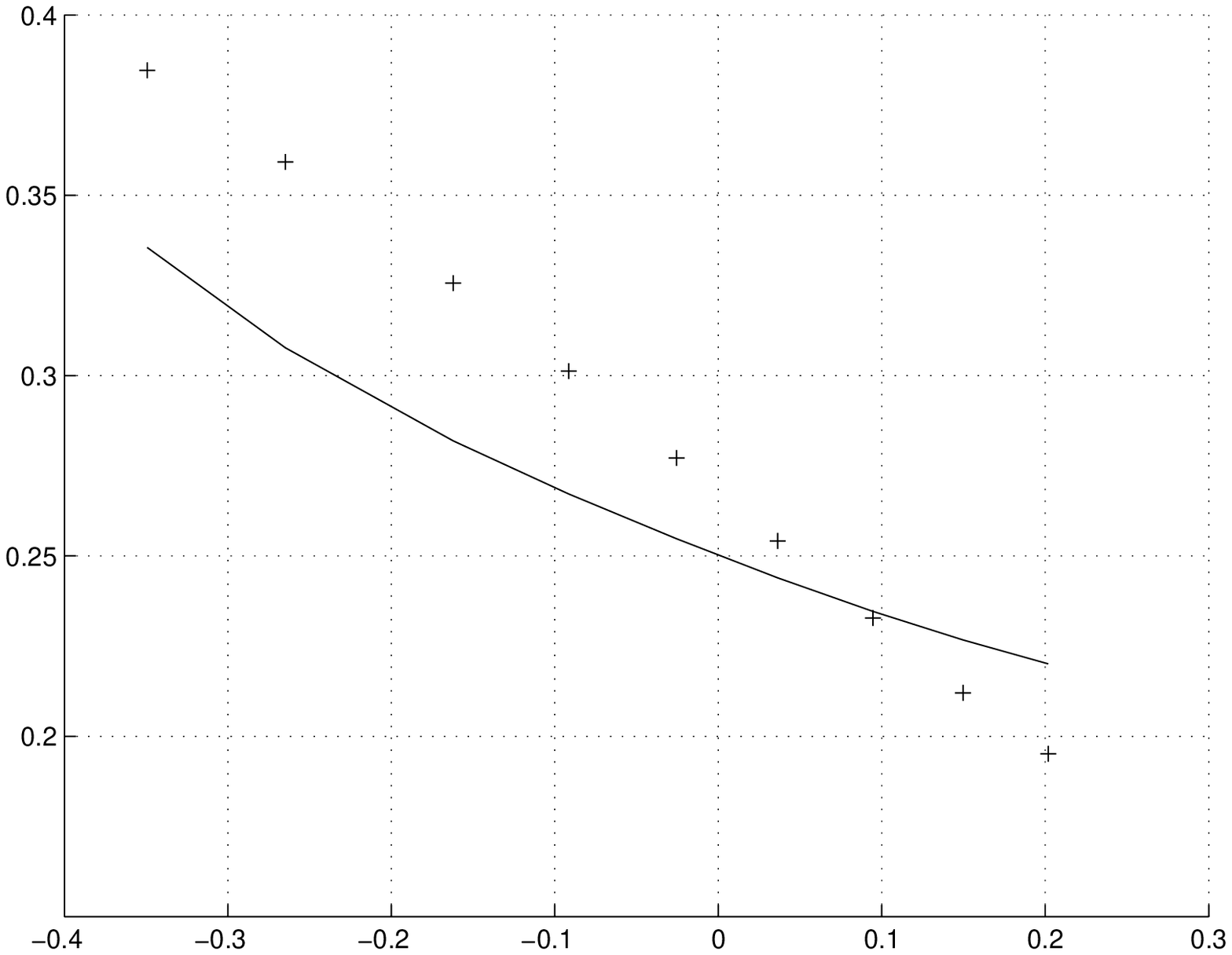} &
\includegraphics[width=.33\textwidth,height=.118\textheight]{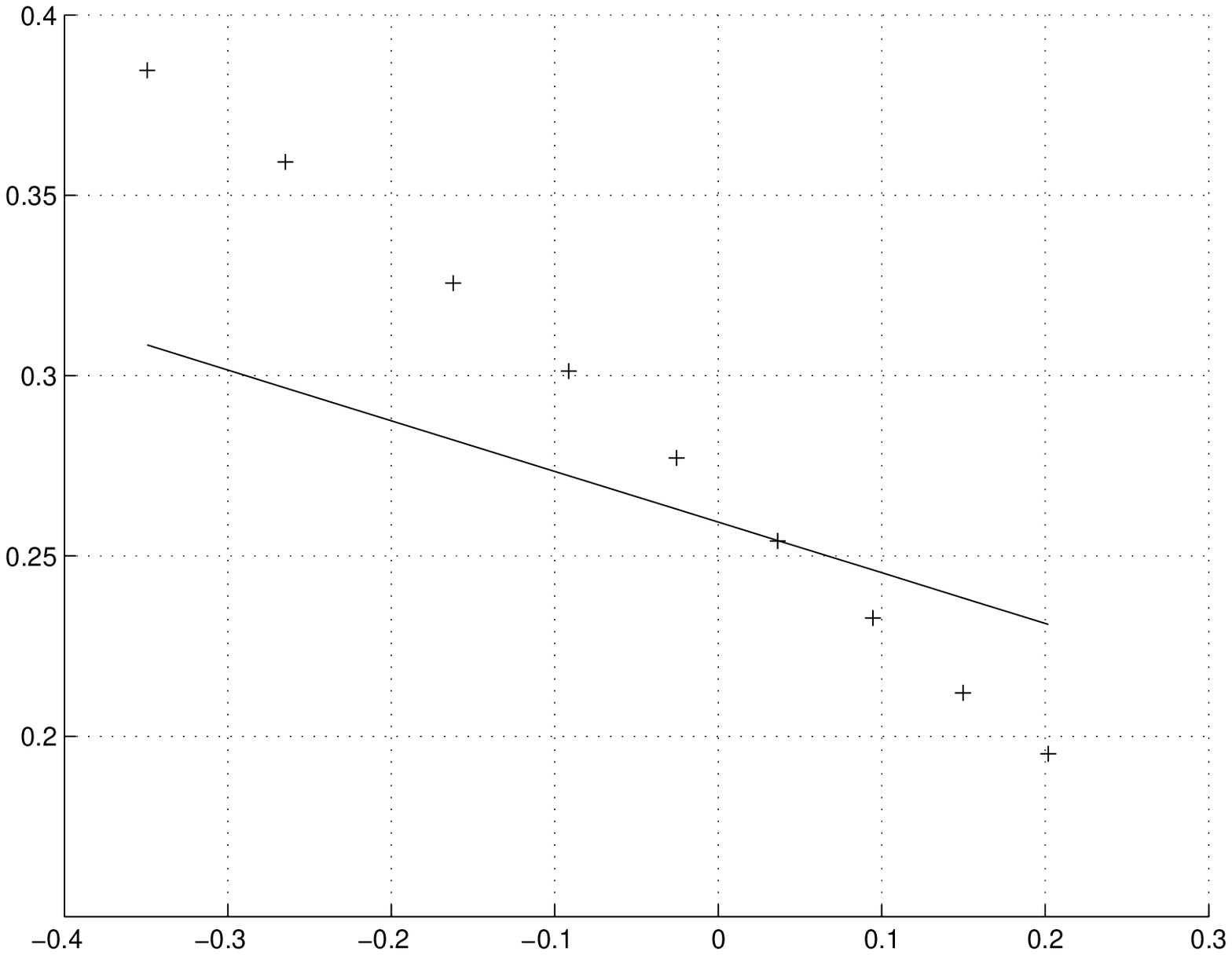} \\ \hline
363 days-to-maturity & 363 days-to-maturity & 363 days-to-maturity  \\
\includegraphics[width=.33\textwidth,height=.118\textheight]{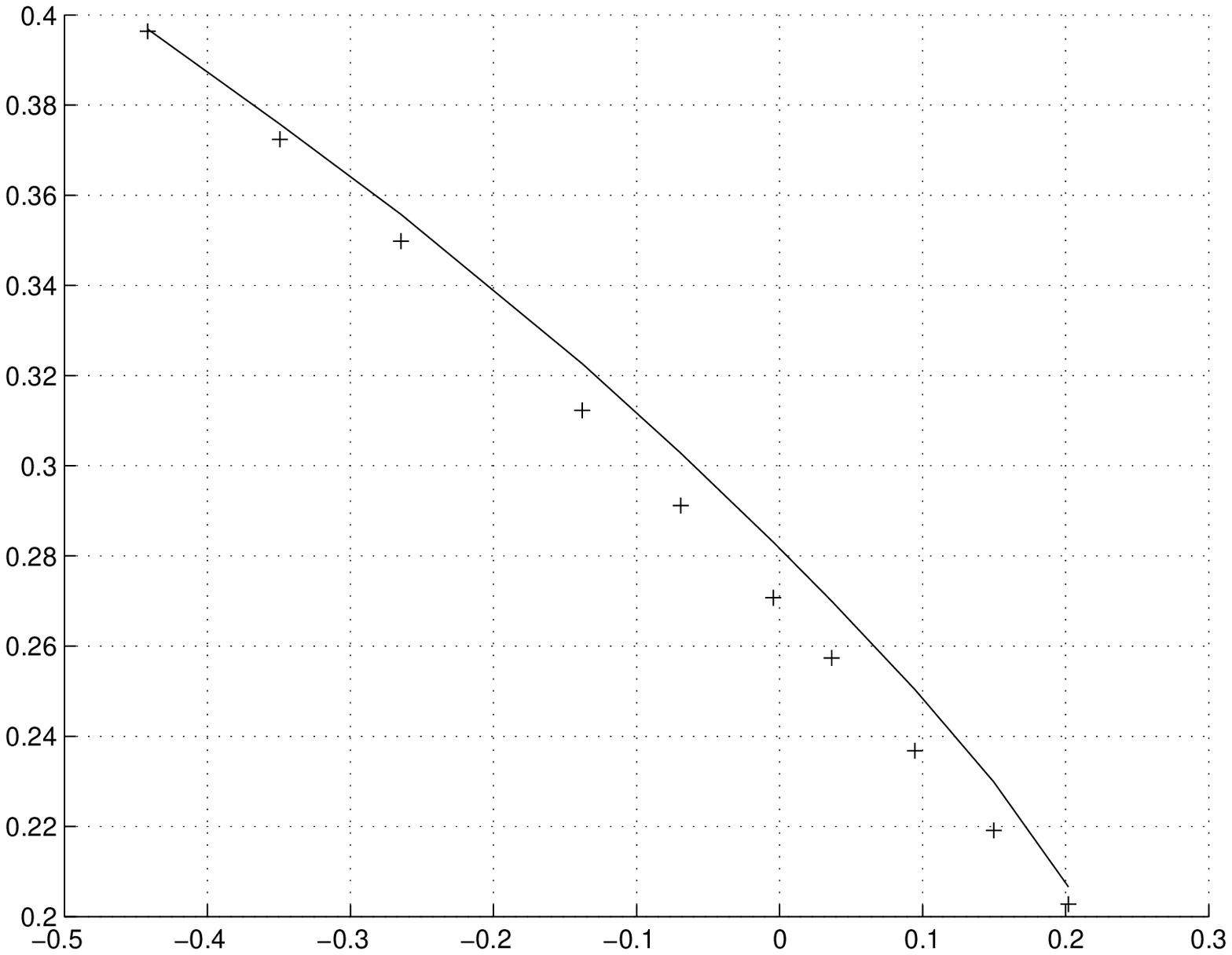} &
\includegraphics[width=.33\textwidth,height=.118\textheight]{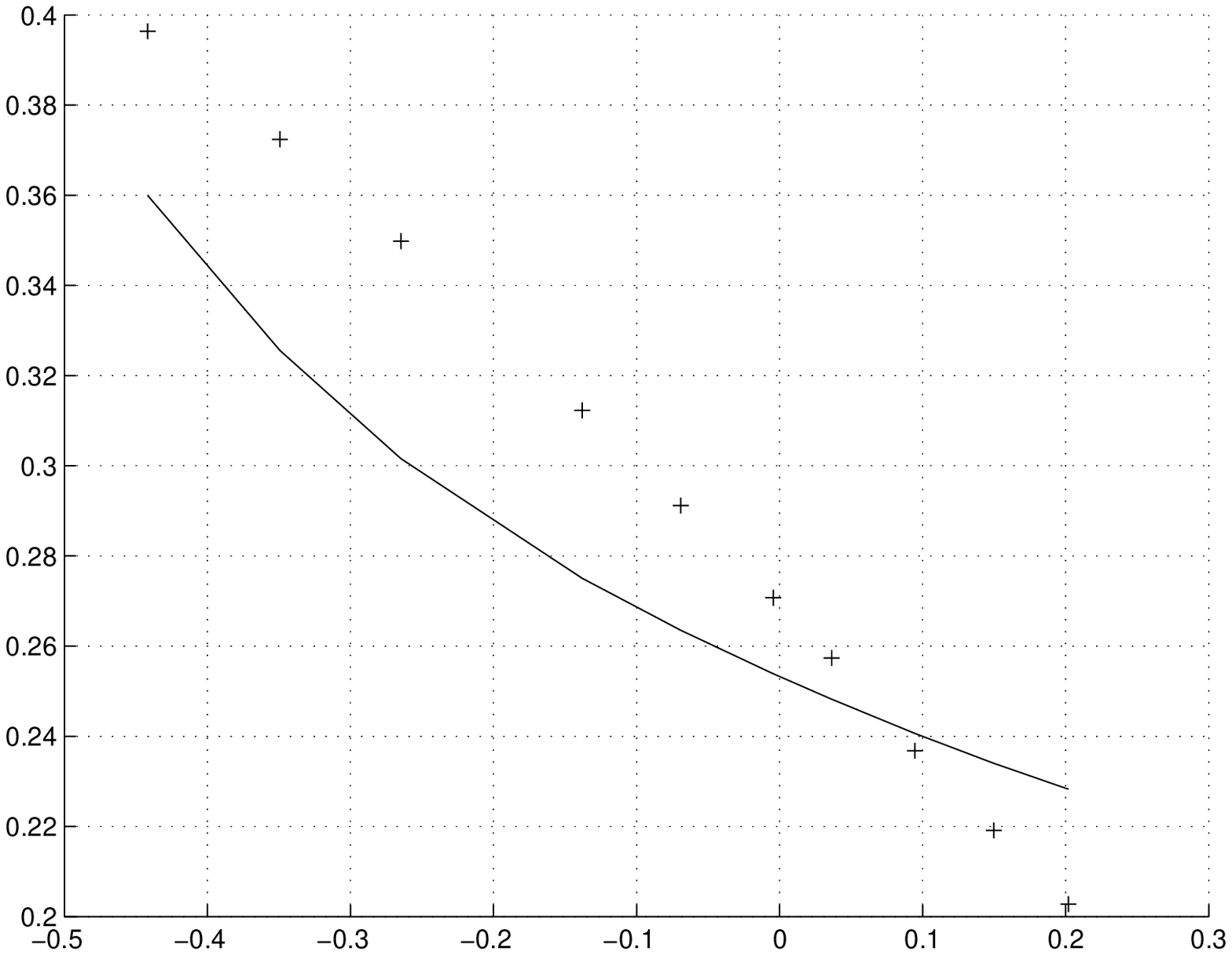} &
\includegraphics[width=.33\textwidth,height=.118\textheight]{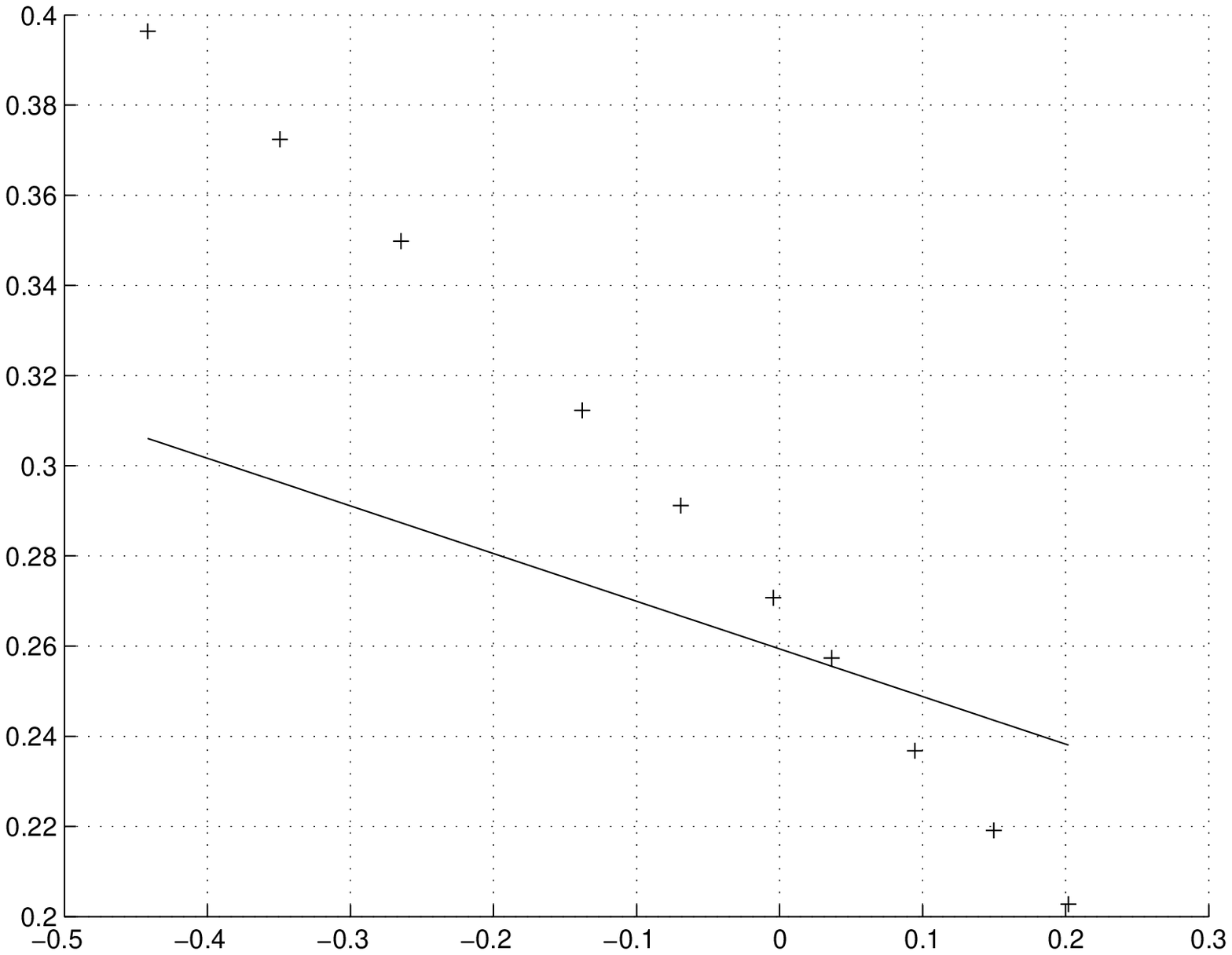} \\ \hline
\end{tabular}
\caption{Implied volatility fit to S\&P500 index options from December 19, 2011.}
\label{fig:merton1}
\end{figure}

\begin{figure}
\centering
\begin{tabular}{ | c | c | c | }
\hline
Extended Merton & Merton & FMR-SV \\ \hline
66 days-to-maturity & 66 days-to-maturity & 66 days-to-maturity  \\
\includegraphics[width=.33\textwidth,height=.12\textheight]{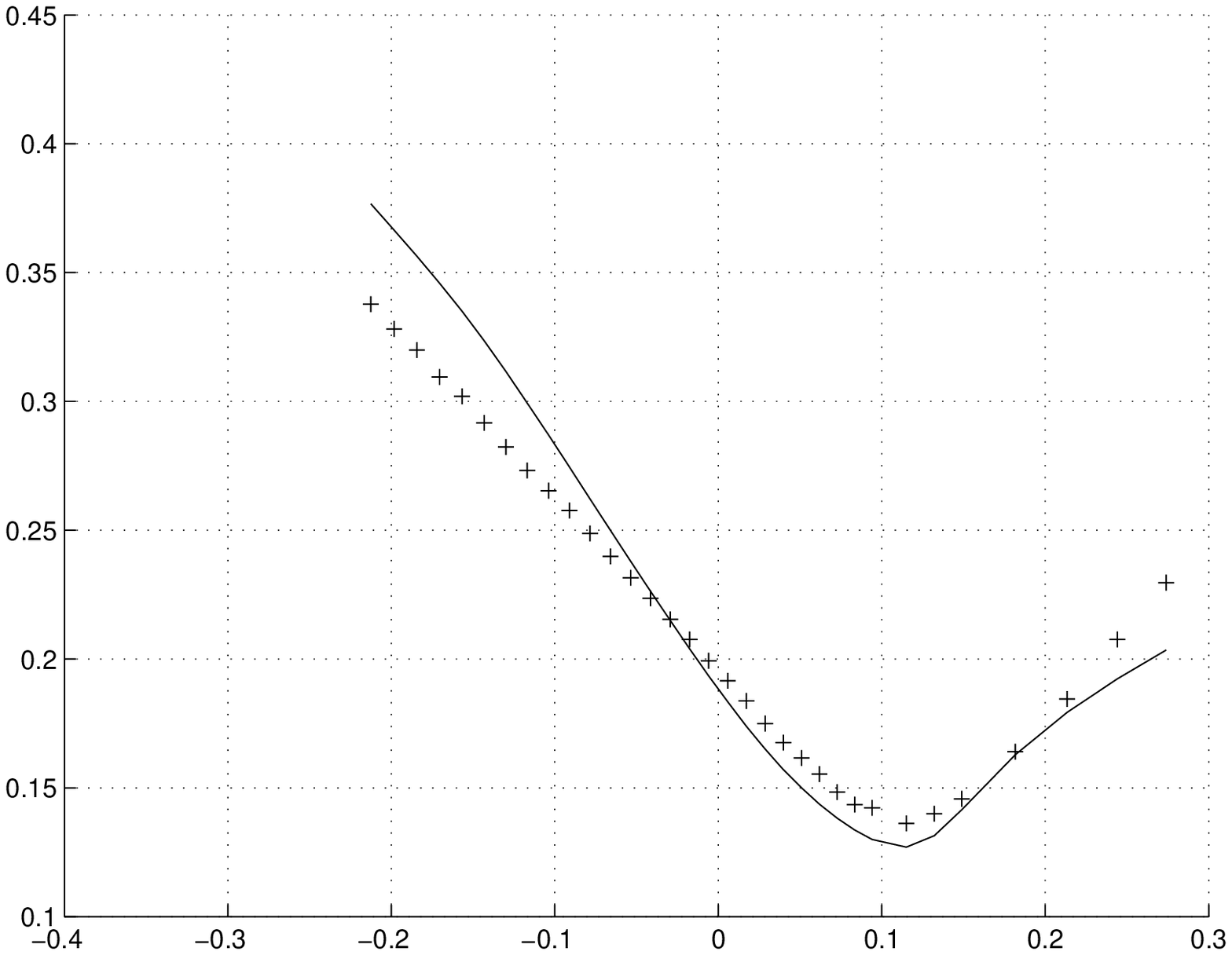} &
\includegraphics[width=.33\textwidth,height=.12\textheight]{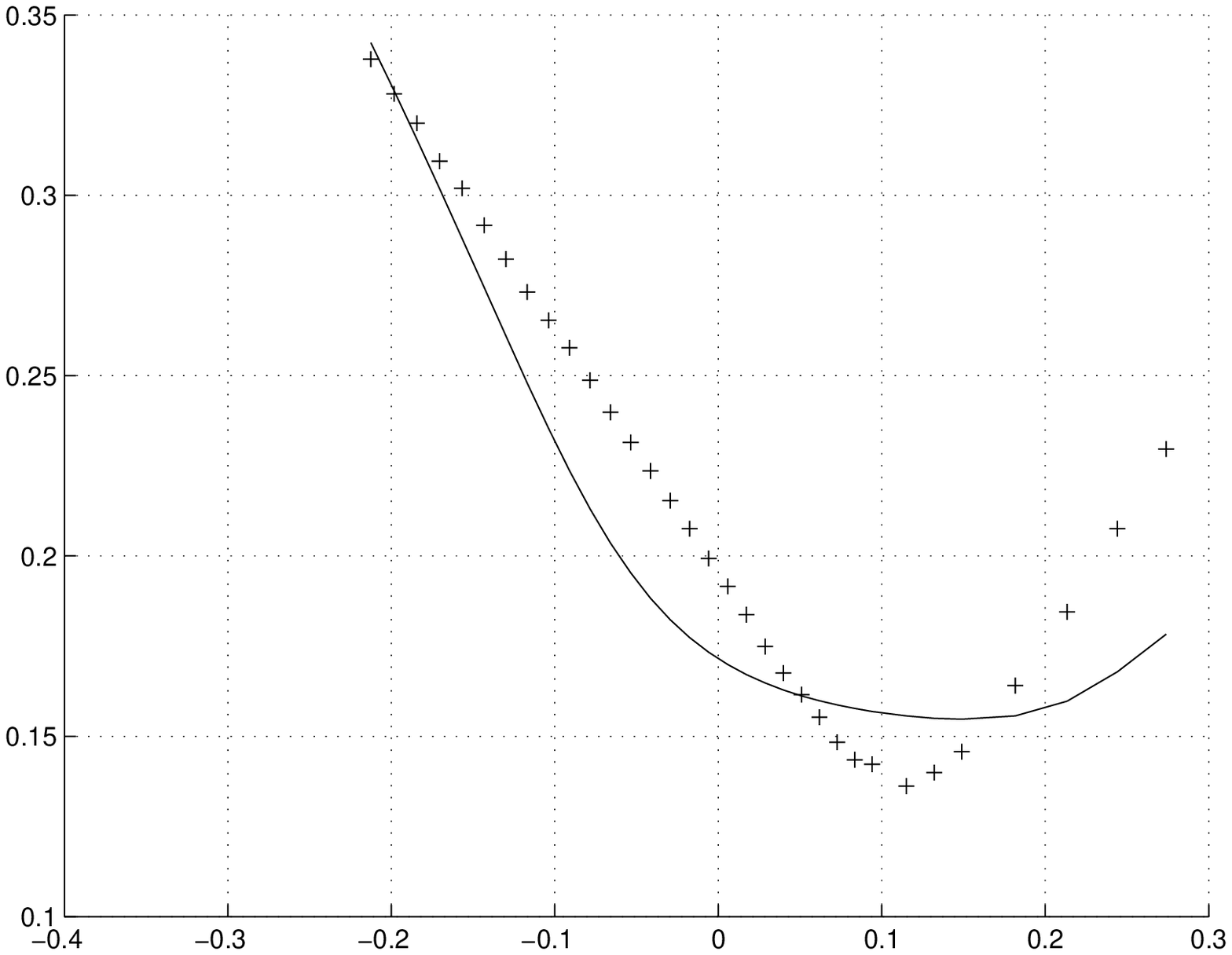} &
\includegraphics[width=.33\textwidth,height=.12\textheight]{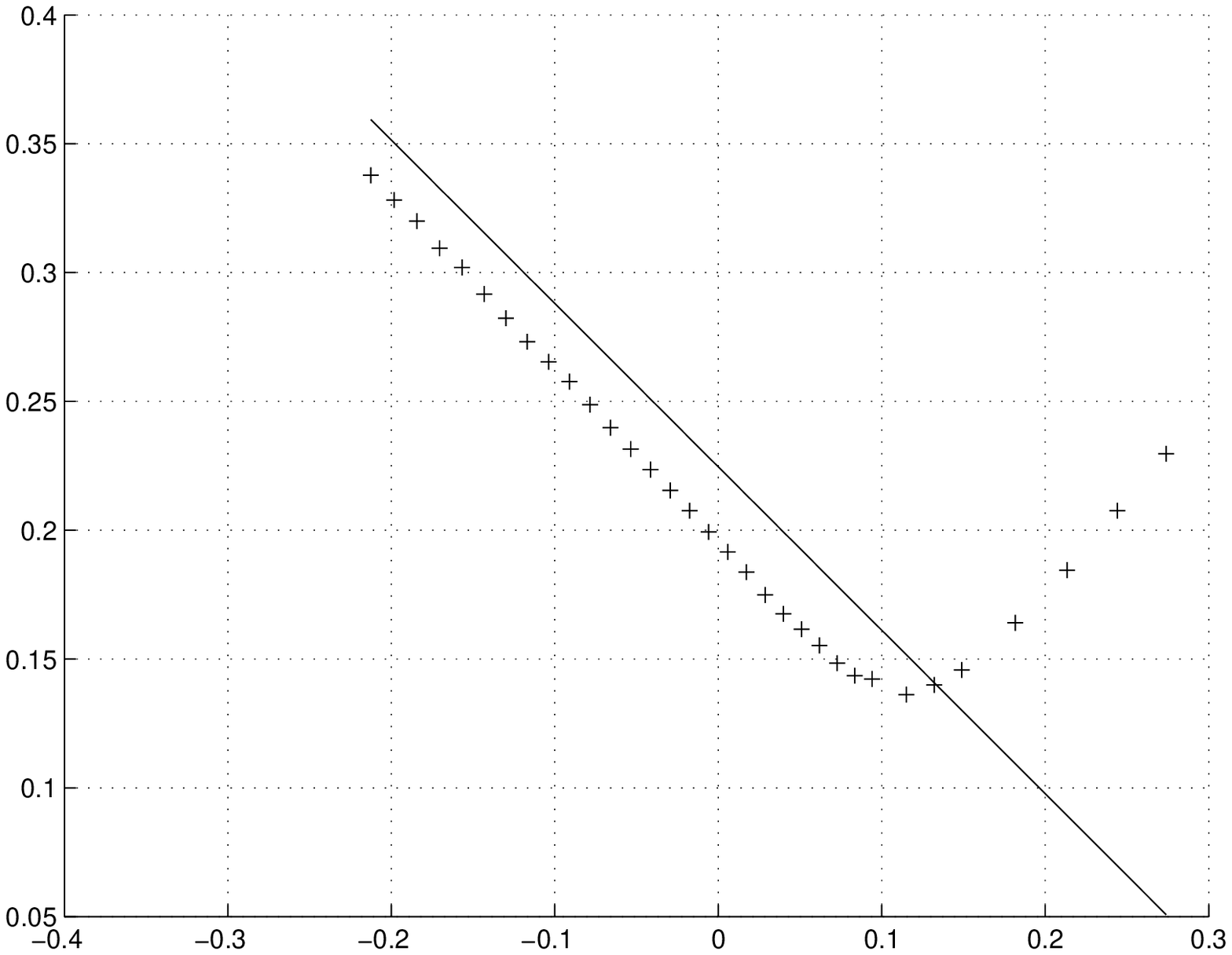} \\ \hline
100 days-to-maturity& 100 days-to-maturity & 100 days-to-maturity  \\
\includegraphics[width=.33\textwidth,height=.12\textheight]{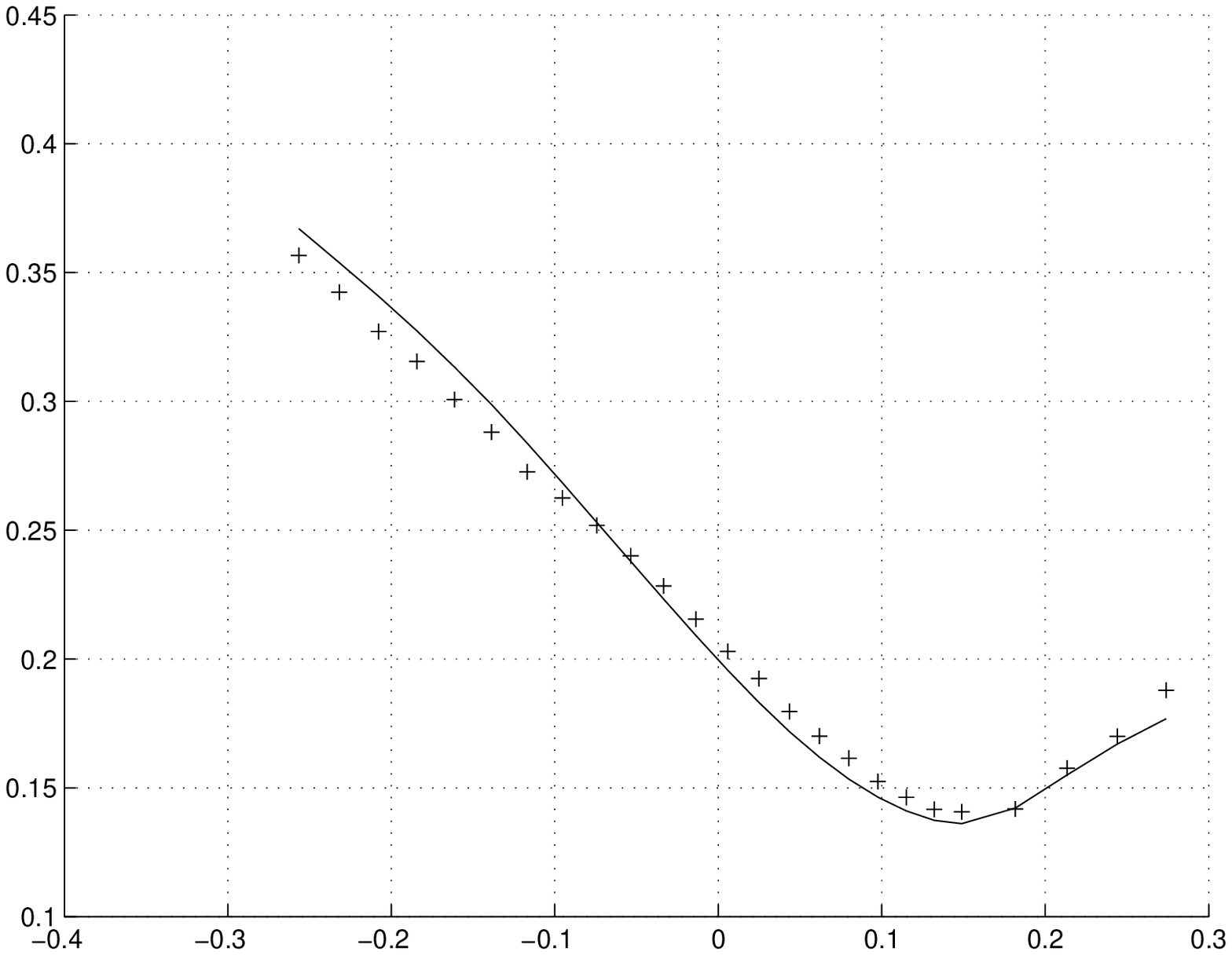} &
\includegraphics[width=.33\textwidth,height=.12\textheight]{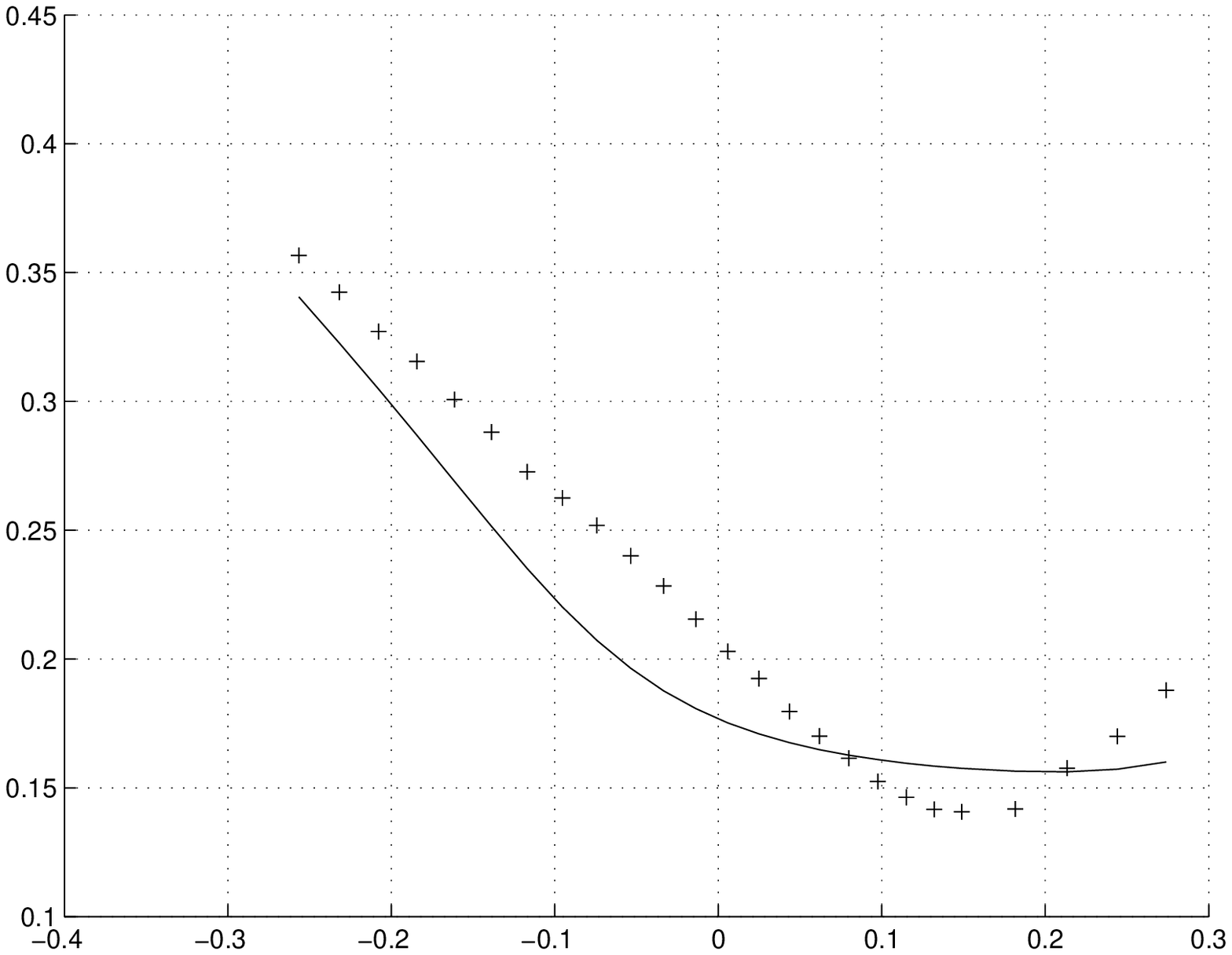} &
\includegraphics[width=.33\textwidth,height=.12\textheight]{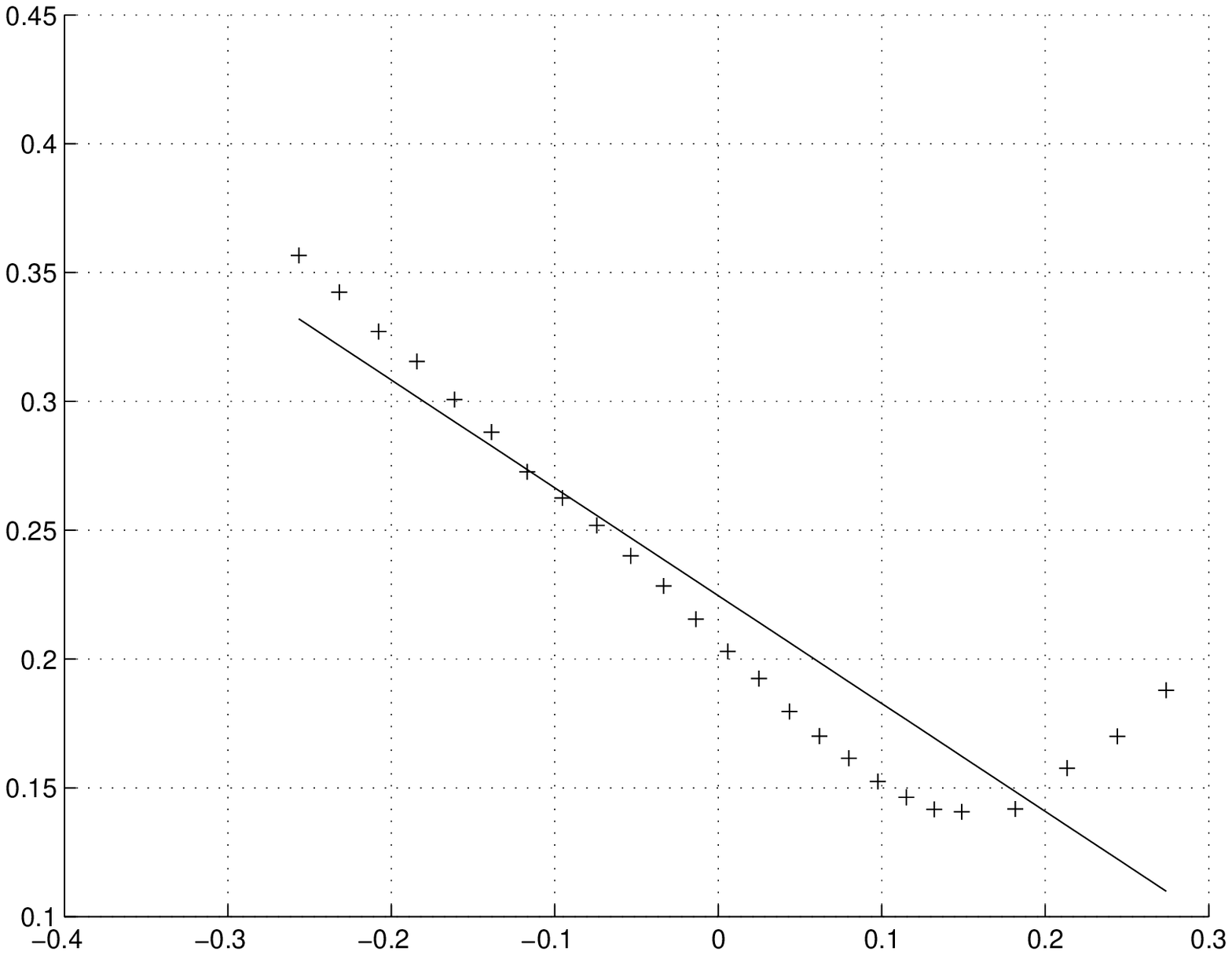} \\ \hline
155 days-to-maturity & 155 days-to-maturity & 155 days-to-maturity  \\
\includegraphics[width=.33\textwidth,height=.12\textheight]{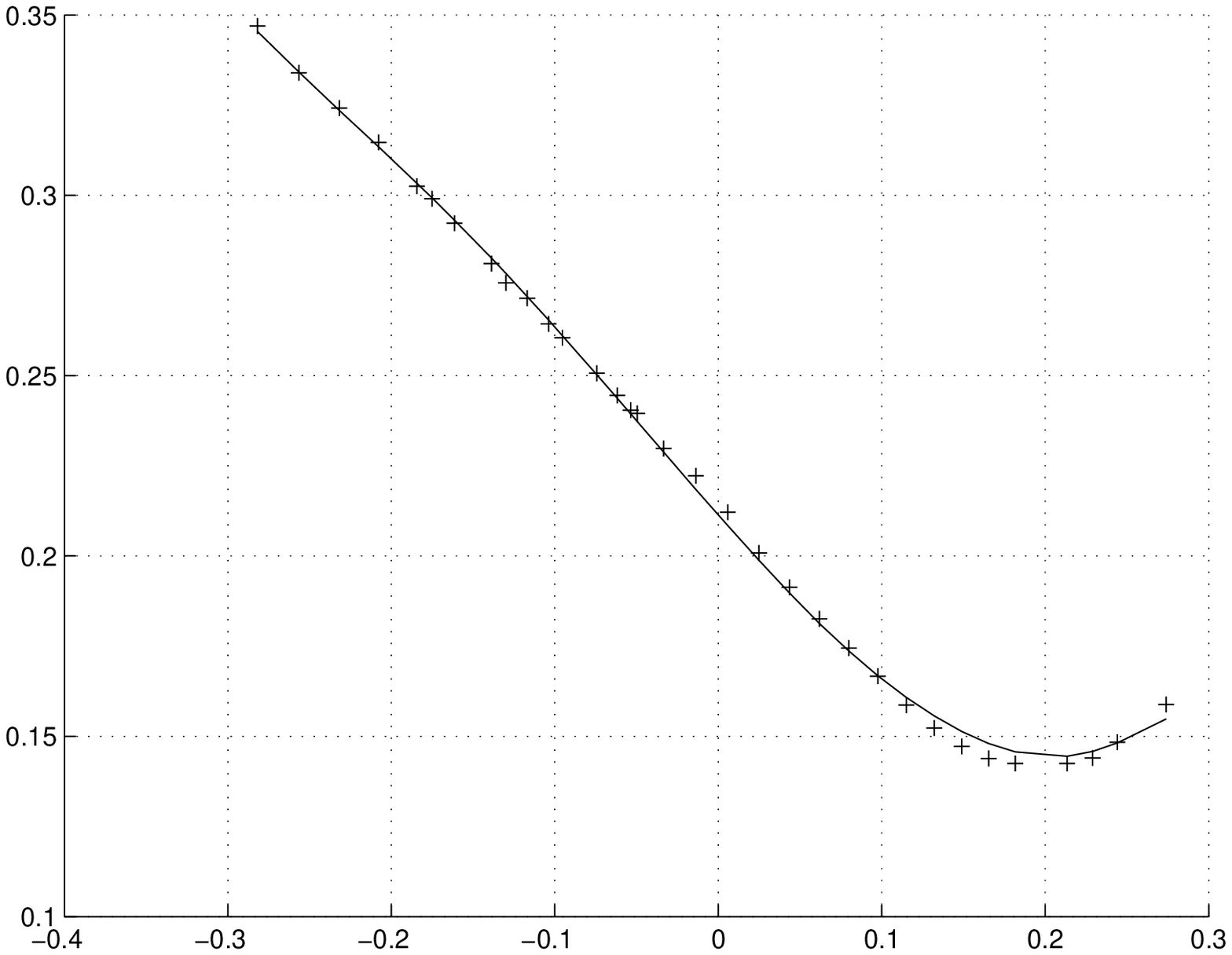} &
\includegraphics[width=.33\textwidth,height=.12\textheight]{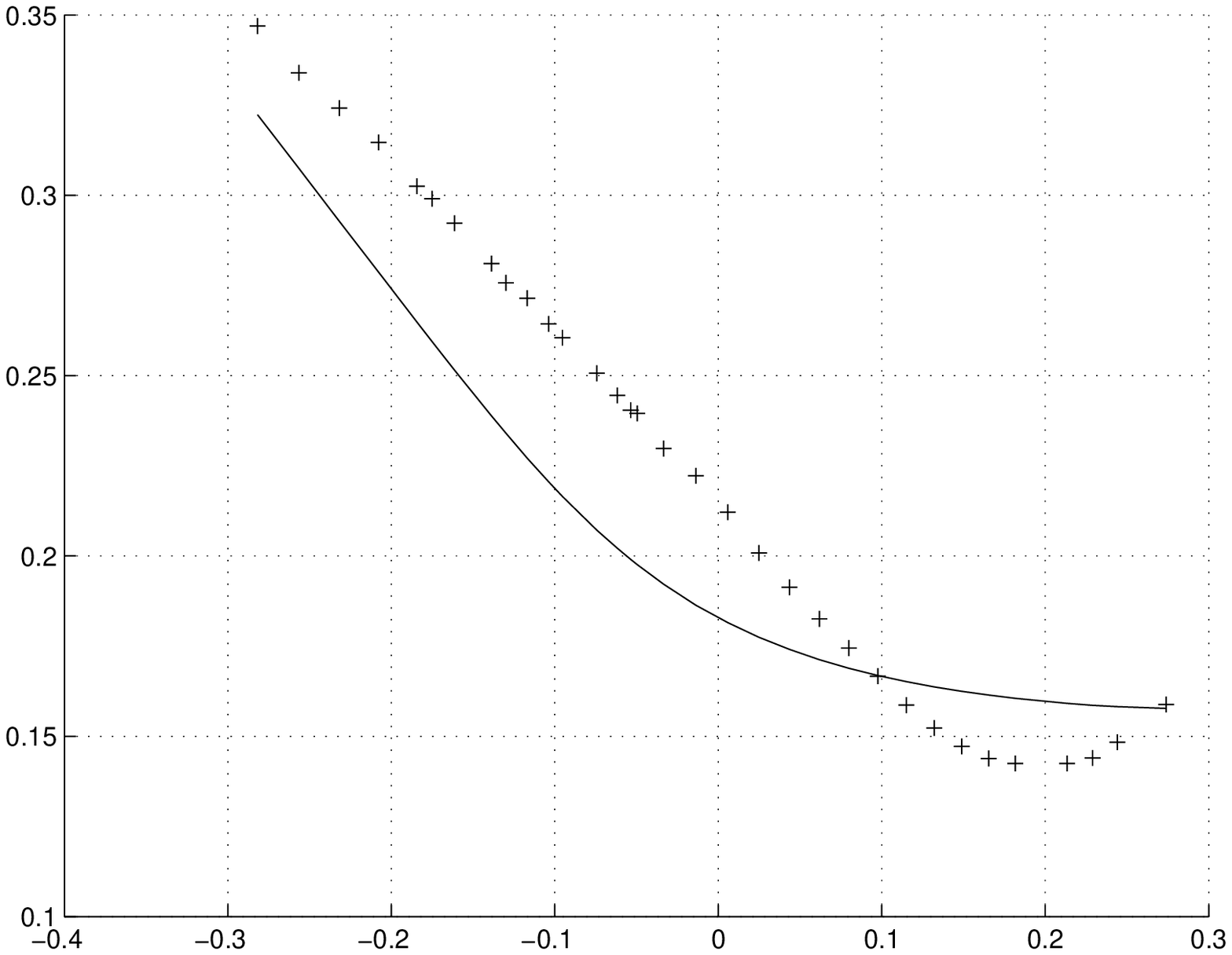} &
\includegraphics[width=.33\textwidth,height=.12\textheight]{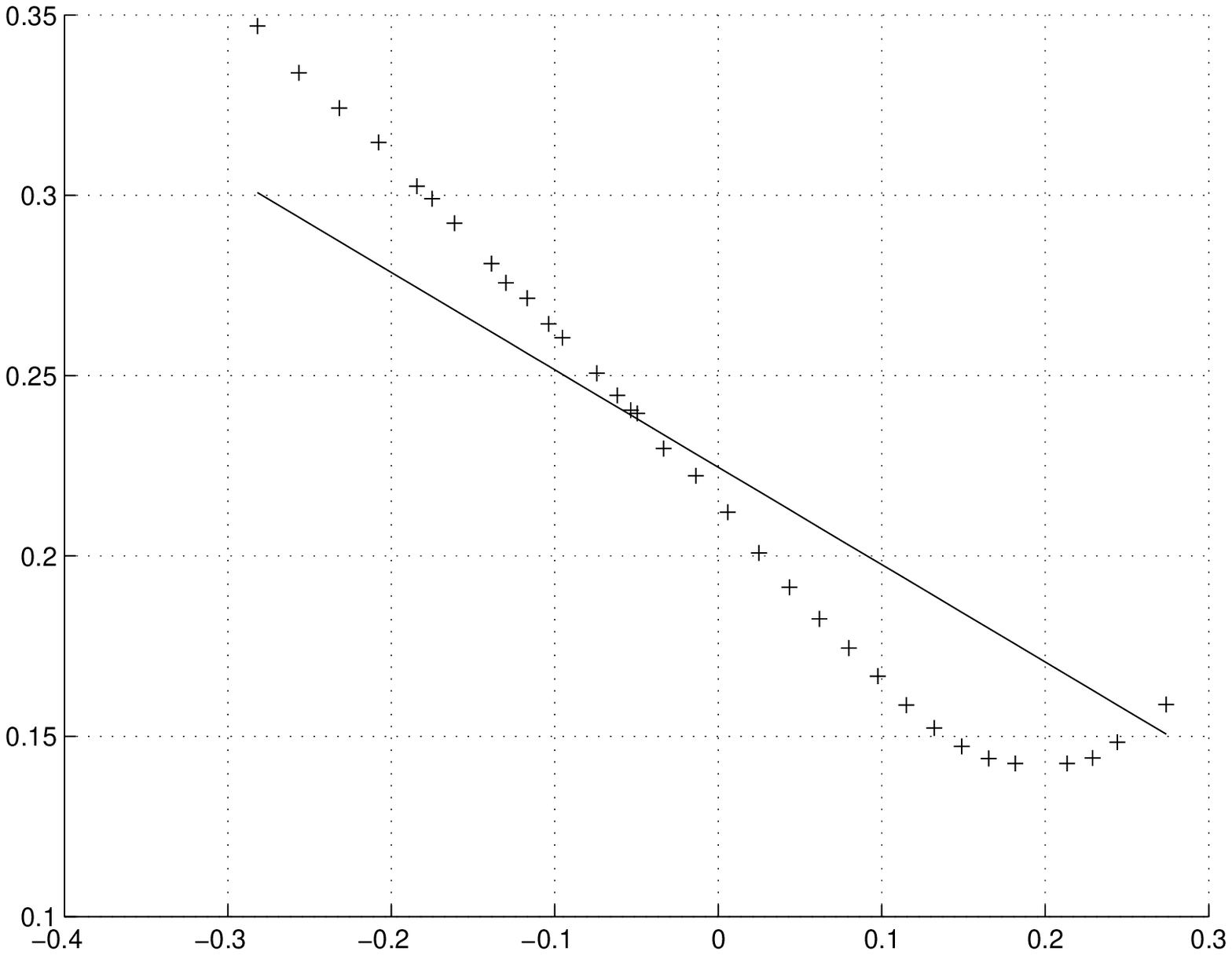} \\ \hline
251 days-to-maturity & 251 days-to-maturity &251 days-to-maturity  \\
\includegraphics[width=.33\textwidth,height=.12\textheight]{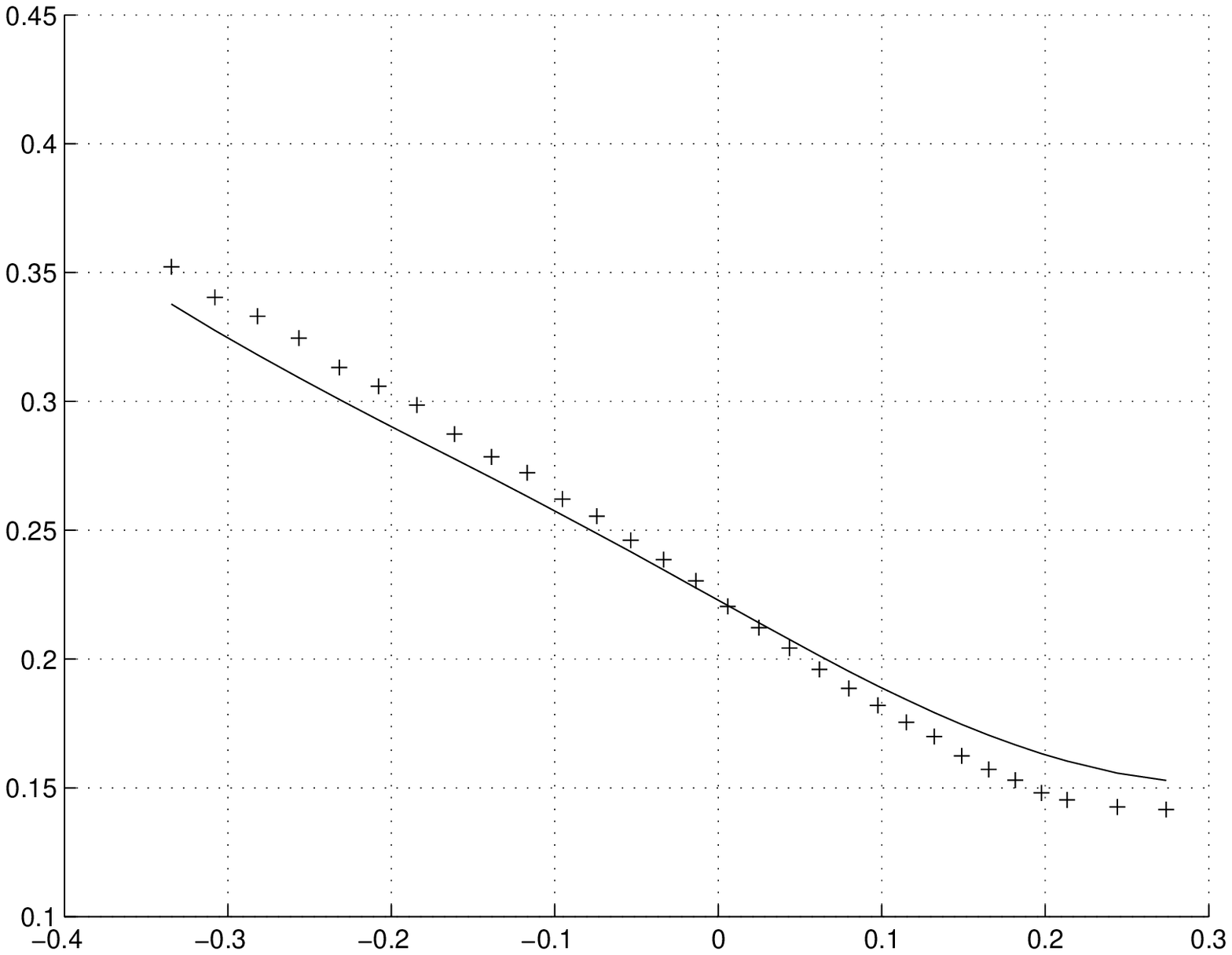} &
\includegraphics[width=.33\textwidth,height=.12\textheight]{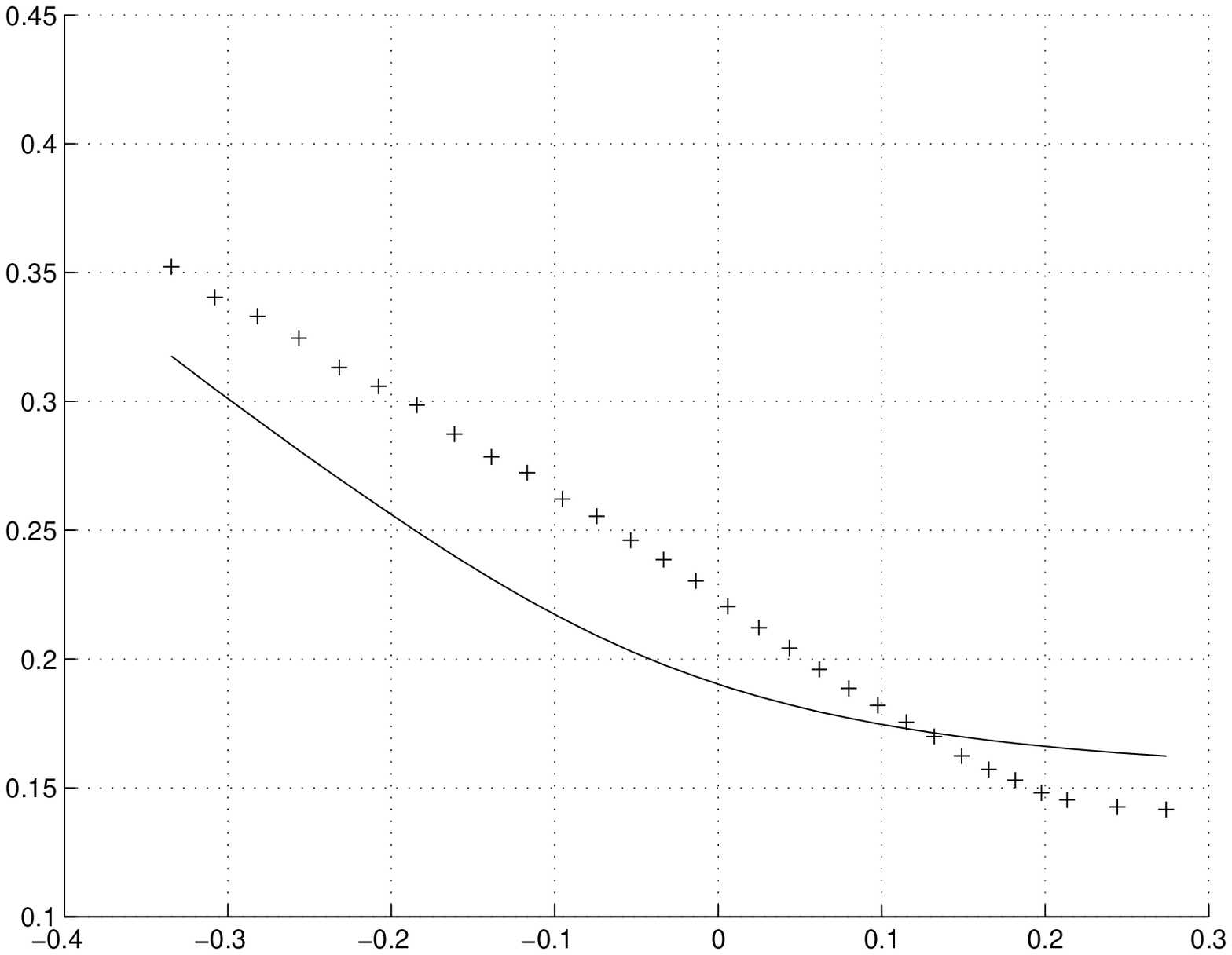} &
\includegraphics[width=.33\textwidth,height=.12\textheight]{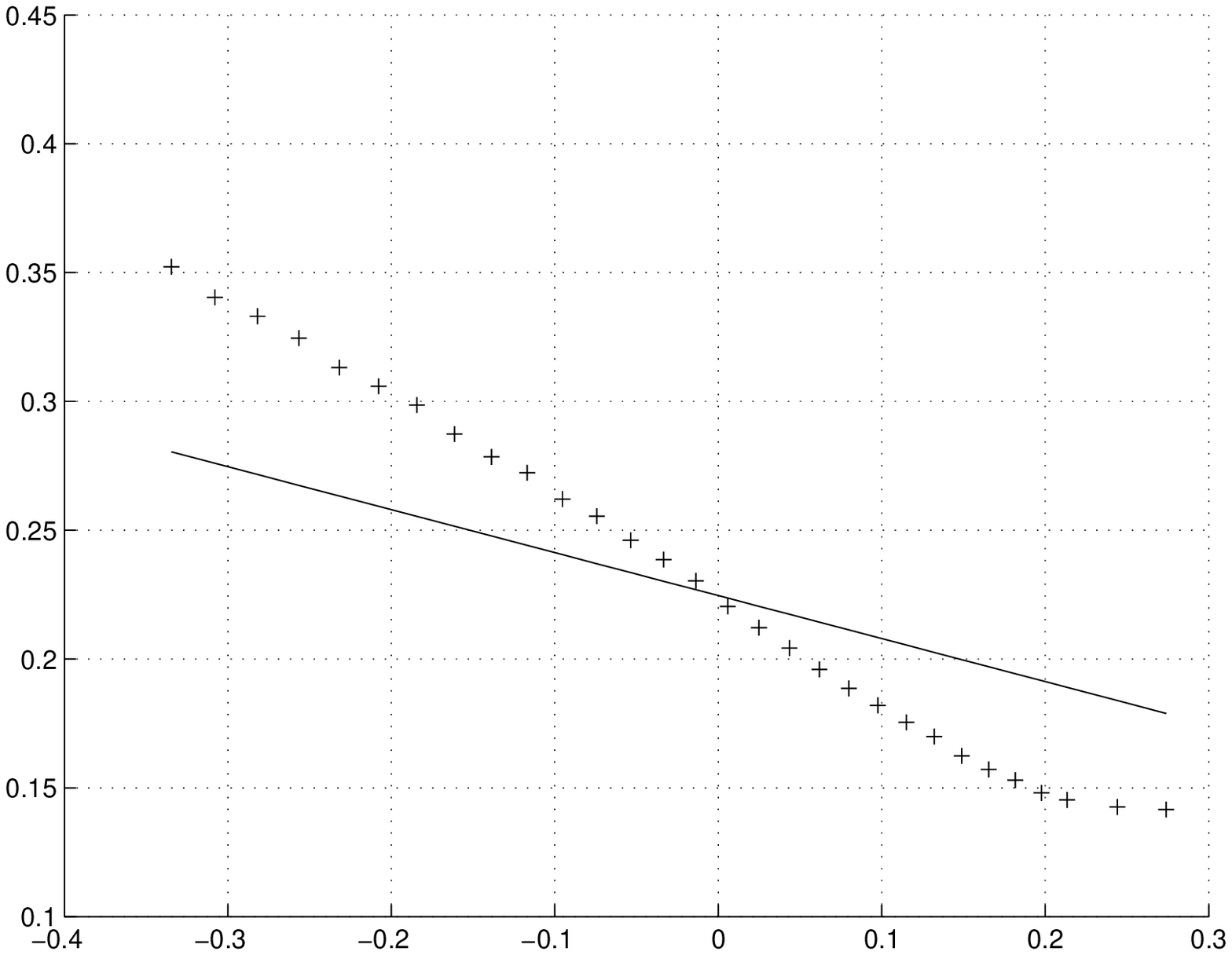} \\ \hline
341 days-to-maturity& 341 days-to-maturity & 341 days-to-maturity  \\
\includegraphics[width=.33\textwidth,height=.12\textheight]{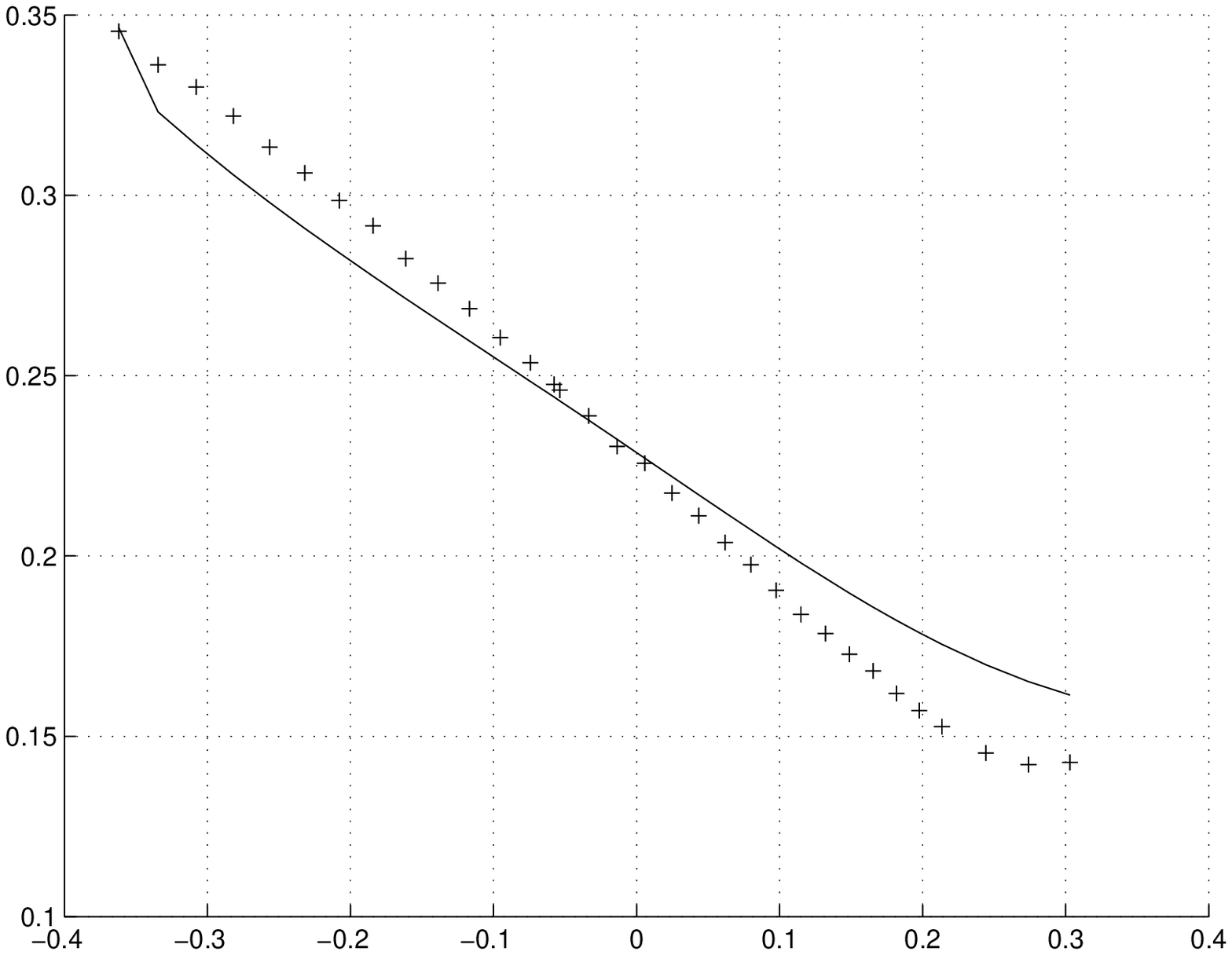} &
\includegraphics[width=.33\textwidth,height=.12\textheight]{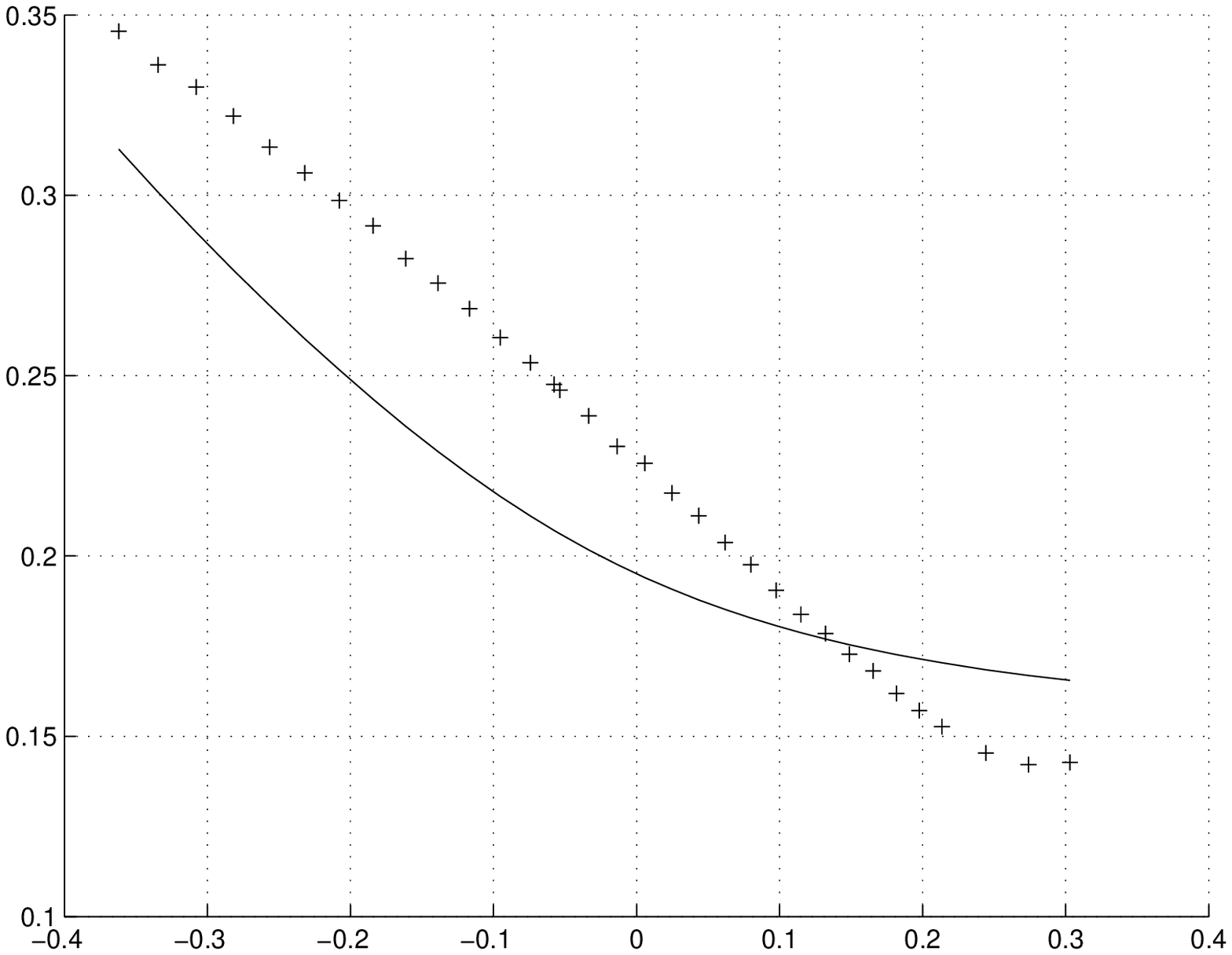} &
\includegraphics[width=.33\textwidth,height=.12\textheight]{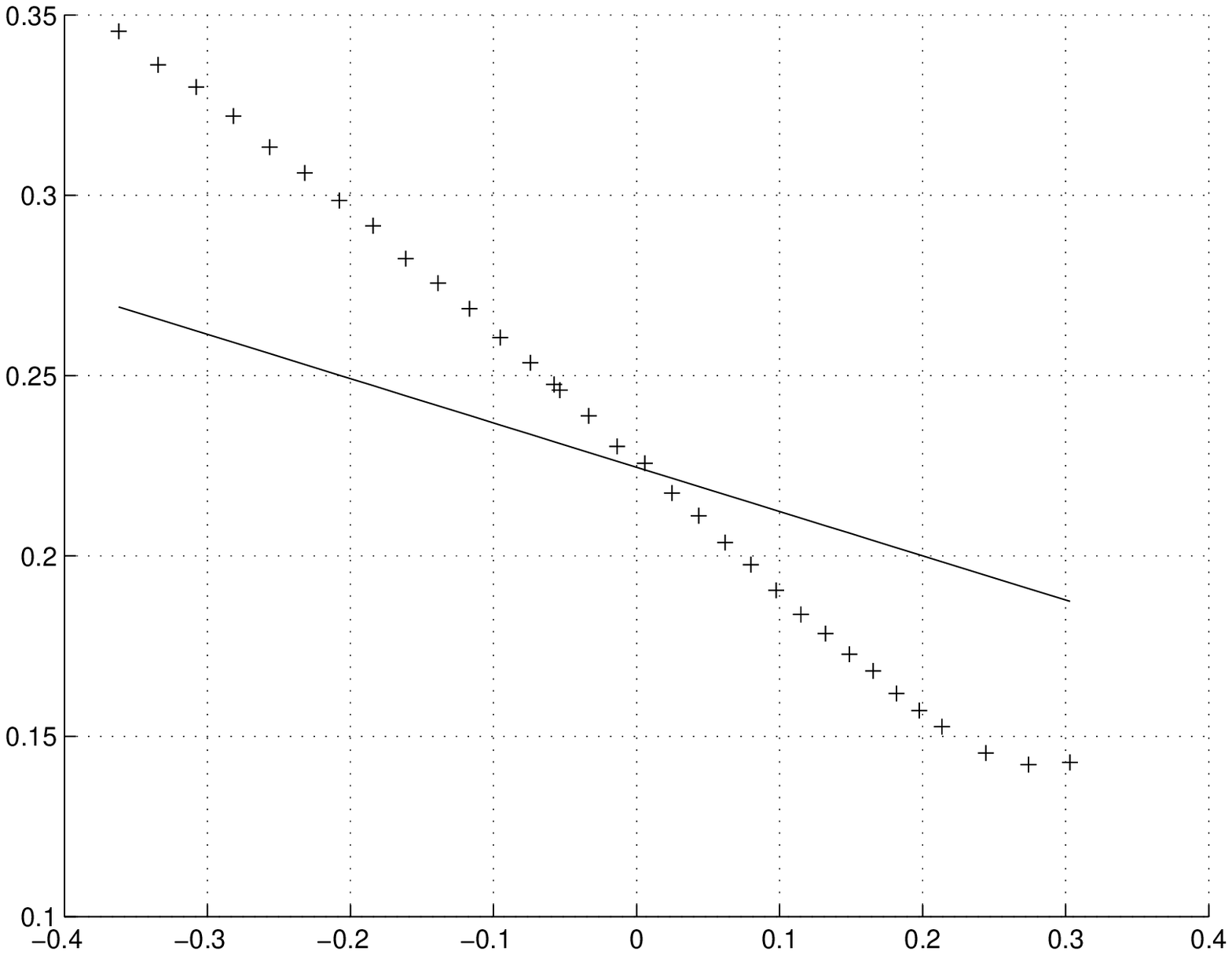} \\ \hline
\end{tabular}
\caption{Implied volatility fit to S\&P500 index options from January 11, 2012.}
\label{fig:merton2}
\end{figure}

%
%

\section{Extension to multiscale stochastic volatility and jump intensity}
\label{sec:multiscale}
The results of this paper can be extended in a straightforward manner to include \emph{multiscale} stochastic volatility and jump intensity.  We briefly describe how this may be done.  Our intent in this section is not to be rigorous, but rather to give a flavor of the computations involved in this extension.  To begin, we modify the dynamics of $S$ slightly.  Letting $S=e^X$ we have
\begin{align}
\left.
\begin{aligned}
dX_t
		&=		\gam(Y_t,Z_t) \, dt + \sig(Y_t,Z_t) \, d\Wt_t^x + \int_\Rb s \, d\Nt_t(Y_t,Z_t,ds) , &
X_0
		&= 		x , \\
dY_t
		&=		\( \frac{1}{\eps^2} \alpha(Y_t) - \frac{1}{\eps} \Lam(Y_t,Z_t) \, \beta(Y_t) \) dt + \frac{1}{\eps} \beta(Y_t) d\Wt_t^y , &
Y_0
		&=		y , \\
dZ_t
		&=		\Big( \del^2 c(Z_t) - \del \, \Gamma(Y_t,Z_t) \, g(Z_t) \Big) dt + \del \, g(Z_t) d\Wt_t^z , &
Z_0
		&=		z .
\end{aligned} \right\} \qquad (\text{under $\Pt$}) \label{eq:dXdYdZ}
\end{align}
Here, $Z$ is a \emph{slow-varying} factor, in the sense that its infinitesimal generator under $\Pb$ is scaled by $\del^2$, which is assumed to be a small parameter: $\del^2 << 1$.  The Brownian motions $\Wt^x$, $\Wt^y$, $\Wt^z$ have correlations $\rho_{xy}$, $\rho_{xz}$ and $\rho_{yz}$ (which must be such that the covariance matrix is positive definite), the compensated Poisson random measure $\Nt(Y,Z,ds)$ satisfies
\begin{align}
d\Nt_t(Y_t,Z_t,ds)
		&=		dN_t(Y_t,Z_t,ds) - \zeta(Y_t,Z_t)\nu(ds) dt , \\
\Et[dN_t(Y_t,Z_t,ds)|Y_t,Z_t]
		&=		\zeta(Y_t,Z_t) \nu(ds) dt ,
\end{align}
and the drift $\gamma(Y_t,Z_t)$ is given by
\begin{align}
\gam(Y_t,Z_t)
		&=		-\frac{1}{2}\sig^2(Y_t,Z_t) - \zeta(Y_t,Z_t) \int_\Rb (e^s-1-s ) \nu(ds) .
\end{align}
Using risk-neutral pricing, the value $u^{\eps,\del}(t,x,y,z)$ of a European option in this setting is
\begin{align}
u^{\eps,\del}(t,x,y,z)
		&=		\Et_{x,y,z} \[ h(X_t) \] , &
h(x)
		&:=		H(e^x) .
\end{align}
From the Kolmogorov backward equation, the function $u^{\eps,\del}$ satisfies the following PIDE and BC
\begin{align}
\(-\d_t + \Ac^{\eps,\del} \) u^{\eps,\del}
		&=		0 , &
u^{\eps,\del}(0,x,y,z)
		&=		h(x) , \label{eq:PIDE.2}
\end{align}
where the partial integro-differential operator $\Ac^{\eps,\del}$ is the generator of $(X,Y,Z)$.  The operator $\Ac^{\eps,\del}$ has the following form
\begin{align}
\Ac^{\eps,\del}
		&=		\frac{1}{\eps^2} \Ac_0 + \frac{1}{\eps} \Ac_1 + \Ac_2 + \frac{\del}{\eps} \Mc_3 + \del \, \Mc_1 + \del^2 \, \Mc_2 .
\end{align}
Terms containing $\del$ in \eqref{eq:PIDE.2} are small in the small-$\del$ limit, giving rise to a \emph{regular} perturbation.  Thus, \eqref{eq:PIDE.2} has the form of a combined singular-regular perturbation about the $\Oc(1)$ operator $(-\d_t + \Ac_2 )$.  Following \cite{fpss} we seek a solution $u^{\eps,\del}$ of the form
\begin{align}
u^{\eps,\del}
		&=		\sum_{n=0}^\infty \sum_{m=0}^\infty \eps^n \del^m u_{n,m} . \label{eq:uexpand.2}
\end{align}
Our goal is to find an approximation $u^{\eps,\del} = u_{0,0} + \eps \, u_{1,0} + \del \, u_{0,1} + \Oc(\eps^2 + \del^2)$.
A formal asymptotic analysis yields the following PIDEs for $u_{0,0}$, $u_{1,0}$ and $u_{0,1}$
\begin{align}
\Oc(1):&&
(-\d_t + \<\Ac_2\>)u_{0,0} 
		&=		0, &
u_{0,0}(0,x,z)
		&=		h(x) , \\
\Oc(\eps):&&
(-\d_t + \<\Ac_2\>)u_{1,0} 
		&=		- \Bc u_{0,0} , &
u_{1,0}(0,x,z)
		&=		0 , \\
\Oc(\del):&&
(-\d_t + \<\Ac_2\>)u_{0,1} 
		&=		- \<\Mc_1\> u_{0,0} , &
u_{0,1}(0,x,z)
		&=		0 ,
\end{align}
where, as in Section \ref{sec:asymptotics}, the $y$-dependence has disappeared from $u_{0,0}$, $u_{1,0}$ and $u_{0,1}$.  The operators $\< \Ac_2 \>$, $\Bc$ and $\< \Mc_1 \>$ are given by
\begin{align}
\< \Ac_2 \>
		&=		\< \gam(\cdot,z)\> \d_x + \frac{1}{2} \< \sig^2(\cdot,z) \> \d_{xx}^2 
					+ \< \zeta(\cdot,z)\> \int_\Rb \(e^{s \d_x } - 1 - s \d_x \) \nu(ds) , \\
\Bc
		&=		V_3(z) \( \d_{xxx}^3 - \d_{xx}^2 \) 
					+ U_3(z) \( - \int_\Rb \Big( e^s - 1 - s \Big) \nu(ds) \d_{xx}^2
						+ \int_\Rb \Big( \theta_s - 1 - s \d_x \Big) \d_x \nu(ds) \) \\ &\qquad
					+ V_2(z) \( \d_{xx}^2 - \d_x \)
					+ U_2(z) \( - \int_\Rb \Big( e^s - 1 - s \Big) \nu(ds) \d_x
						+ \int_\Rb \Big( \theta_s - 1 - s \d_x \Big) \nu(ds) \) , \\
\< \Mc_1 \>
		&=		- g(z) \< \Gamma(\cdot,z)\> \d_z + g(z) \rho_{xz} \< \sig(\cdot,z)\> \d_{xz}^2,
\end{align}
where the $z$-dependent parameters ($V_3(z)$, $U_3(z)$, $V_2(z)$, $U_2(z)$) are
\begin{align}
V_3(z)
		&= 		- \frac{\rho_{xy}}{2}\< \beta(\cdot) \sigma(\cdot,z) \d_y \eta(\cdot,z) \> , &
U_3(z)
		&=		- \rho_{xy} \< \beta(\cdot) \sigma(\cdot,z) \d_y \xi(\cdot,z) \> , \\
V_2(z)
		&=		\frac{1}{2} \< \beta(\cdot) \Lam(\cdot,z) \d_y \eta(\cdot,z) \> , &
U_2(z)
		&=		\< \beta(\cdot) \Lam(\cdot,z) \d_y \xi(\cdot,z) \> .		
\end{align}
The expressions for $u_{0,0}$ and $u_{1,0}$ are analogous to those given for $u_0$ and $u_1$ in Theorem \ref{thm:u0u1}.  An expression for $u_{0,1}$ is obtained using Fourier transforms
\begin{align}
\vh(s,\lam,z)
		&:=		\int_\Rb dx \frac{1}{\sqrt{2\pi}}e^{-i \lam x} \< \Mc_1 \> u_{0,0}(s,x,z) \\
u_{0,1}(t,x,z)		
		&=		\int_\Rb d\lam \,  \frac{1}{\sqrt{2\pi}}e^{i \lam x} 
					\int_0^t ds \, e^{(t-s)\phi_\lam(z)} \vh(s,\lam,z) .
\end{align}
Note, care must be taken when computing $\< \Mc_1 \>u_{0,0}$ as both terms in $\< \Mc_1 \>$ contain the operator $\d_z$ and $u_{0,0}$ depends on $z$ through both $\<\sig^2(\cdot,z)\>$ and $\<\zeta(\cdot,z)\>$.  A careful computation shows that $u_{0,1}$ is linear in the following four parameters
\begin{align}
V_1(z)
		&= 		 g(z) \rho_{xz} \< \sig(\cdot,z)\> \d_z\<\sig^2(\cdot,z)\> , &
V_0(z)
		&=		 - g(z) \< \Gamma(\cdot,z)\> \d_z\<\sig^2(\cdot,z)\>, \\
U_1(z)
		&=		 g(z) \rho_{xz} \< \sig(\cdot,z)\> \d_z\<\zeta(\cdot,z)\>, &
U_0(z)
		&=		 - g(z) \< \Gamma(\cdot,z)\> \d_z\<\zeta(\cdot,z)\>.
\end{align}
From the arguments in \cite{fpss}, Chapter 4, it follows that for fixed $(t,x,y)$ there exists a constant $C$ such that $|u^{\eps,\del} - (u_{0,} + \eps \, u_{1,0} + \del \, u_{0,1})| < C ( \eps^2 + \del^2)$ when $h$ is smooth.

%
%

\section{Conclusion}
In this paper, we have introduced a class of exponential L\'evy-type models in which the volatility and jump-intensity are driven stochastically by two factors -- one fast-varying and one slow-varying.  Using techniques from the theory of generalized Fourier transforms, singular and regular perturbation theory we have derived a general formula for the approximate price of any European-style derivative whose payoff function has a generalized Fourier transform.  We test five specific examples of exponential L\'evy-type models with stochastic volatility and jump-intensity (the Extended Merton, Gumbel, Dirac, Variance Gamma, and Uniform) and we show that these model classes provide a closer fit to the S\&P500 implied volatility surface than either the Merton model or the class of fast mean-reverting stochastic volatility models.  Other exponential L\'evy models can be extended in a similar fashion by choosing the appropriate L\'evy measure $\nu$.  We hope this work motivates further research into exponential L\'evy-type models.  A possible extension of this paper, for example, would be to allow the jump \emph{distribution} (rather than just the jump \emph{intensity}) to vary stochastically in time.

\subsection*{Acknowledgments}
The author would like to extend his sincerest thanks to Ramon van Handel, Ronnie Sircar, Jean-Pierre Fouque, Rama Cont, Jose-Luis Menaldi and Erhan Bayraktar, whose comments and suggestions have improved the quality and readability of this manuscript.

%
%

\appendix

%
%

\section{Proof the Theorem \ref{thm:u0u1}}
\label{sec:u0u1}
We wish to solve PIDEs \eqref{eq:PIDEu0} and \eqref{eq:PIDEu1} with BCs \eqref{eq:BCu0} and \eqref{eq:BCu1} respectively.  For simplicity, we solve these equations for a payoff $h \in L^1(\Rb,dx)$.  The results extend to any $h$ with a generalized Fourier transform in a straightforward manner.
\par
To begin, we recall that the \emph{Fourier transform} $\fh$ and \emph{inverse transform} of a function $f \in L^1(\Rb,dx)$ are defined as a pair
\begin{align}
\text{Fourier Transform}:&&
\fh(\lam)
	&:=		\frac{1}{\sqrt{2\pi}} \int_\Rb dx \, e^{-i \lam x} f(x) , & \lam \in \Rb , \\
\text{Inverse Transform}:&&
f(x)
	&=		\frac{1}{\sqrt{2\pi}}  \int d\lam \, e^{i \lam x} \fh(\lam) .
\end{align}
Next, we introduce $\<\Ac_2\>^*$, the formal adjoint of $\<\Ac_2\>$, which satisfies
\begin{align}
\int_\Rb dx \, e^{-i\lam x} \< \Ac_2 \> f(x) dx
	&=	\int_\Rb dx \, f(x)  \< \Ac_2 \>^* e^{-i \lam x} dx .
\end{align}
The operator $\< \Ac_2 \>^*$ can be obtained through by integration by parts, which leads to
\begin{align}
\< \Ac_2 \>^*
		&=		- \< \gam\> \d_x + \frac{1}{2} \< \sig^2 \> \d_{xx}^2 + \<\zeta\> \int_\Rb \(\theta_{-z} - 1 + z \d_x \) \nu(dz) .
\end{align}
We note that
\begin{align}
\<\Ac_2\>^*e^{-i\lam x}
	&= \phi_\lam e^{ -i\lam x}, &
\Bc e^{i\lam x}
	&=	B_\lam e^{i \lam x} ,
\end{align}
where $\phi_\lam$ and $B_\lam$ are given by \eqref{eq:phi} and \eqref{eq:B.lambda} respectively.  To find an expression for $u_0(t,x)$ we Fourier transform PIDE \eqref{eq:PIDEu0} and BC \eqref{eq:BCu0}.  We have, 
\begin{align}
\d_t u_0(t,x)
	&= \< \Ac_2 \> u_0(t,x) & &\Rightarrow &
\frac{1}{\sqrt{2\pi}}  \int_\Rb dx \, e^{-i\lam x} \d_t u_0(t,x)
	&=	\frac{1}{\sqrt{2\pi}}  \int_\Rb dx \, e^{-i\lam x} \< \Ac_2 \> u_0(t,x)  \\ & & & &
	&=	\frac{1}{\sqrt{2\pi}}  \int_\Rb dx \, u_0(t,x) \< \Ac_2 \>^{*} e^{-i\lam x}  \\ & & & &
	&=	\phi_\lam \frac{1}{\sqrt{2\pi}} \int_\Rb dx \, u_0(t,x) e^{-i\lam x} \\ & & &\Rightarrow &
\d_t \uh_0(t,\lam)
	&=	\phi_\lam \uh_0(t,\lam) , \label{eq:uhatODE} \\
u_0(0,x)
	&=	h(x)  & &\Rightarrow &
\uh_0(0,\lam)
	&=	\hh(\lam) . \label{eq:uhatBC}
\end{align}
Note that \eqref{eq:uhatODE} is an ODE in $t$ for $\uh_0(t,\lam)$ with an initial condition \eqref{eq:uhatBC}.  Thus, one deduces
\begin{align}
\uh_0(t,\lam)
	&=	e^{t \phi_\lam} \hh(\lam) & &\Rightarrow &
u_0(t,x)
	&=	\frac{1}{\sqrt{2 \pi}} \int_\Rb d\lam e^{t \phi_\lam} \hh(\lam) e^{i \lam x} ,
\end{align}
which established \eqref{eq:u0.FT}.  Next, to find an expression for $u_1(t,x)$ we first observe that
\begin{align}
\Bc u_0(t,x)
	&=		\frac{1}{\sqrt{2 \pi}} \int_\Rb d\mu \, e^{t \phi_\mu} \hh(\mu) \Bc   e^{i \mu x} 
	=			\frac{1}{\sqrt{2 \pi}} \int_\Rb d\mu \, e^{t \phi_\mu} \hh(\mu)  B_\mu e^{i \mu x} .
\end{align}
Therefore, we have
\begin{align}
\frac{1}{\sqrt{2\pi}} \int_\Rb dx \, e^{-i\lam x} \Bc u_0(t,x)
	&=	\int_\Rb d\mu \, e^{t \phi_\mu} \hh(\mu)  B_\mu
			\frac{1}{2\pi} \int_\Rb dx \, e^{-i (\lam - \mu) x} \\
	&=	\int_\Rb d\mu \, e^{t \phi_\mu} \hh(\mu)  B_\mu \del(\mu - \lam) \\
	&=	e^{t \phi_\lam} \hh(\lam)  B_\lam ,
\end{align}
where we have used the Fourier representation of a Dirac delta function: $\frac{1}{2\pi} \int_\Rb dx \, e^{-i(\lam - \mu )x} = \del(\lam-\mu)$.
Fourier Transforming PIDE \eqref{eq:PIDEu1} and BC \eqref{eq:BCu1} one finds
\begin{align}
(-\d_t + \phi_\lam ) \uh_1(t,\lam)
	&=	- e^{t \phi_\lam} \hh(\lam)  B_\lam &\text{and}& &
\uh_1(0,\lam)
	&=	0 .
\end{align}
Once again, have an (inhomogeneous) ODE in $t$ for $\uh_1(t,\lam)$.  Solving the ODE explicitly for $\uh_1(t,\lam)$ and inverse transforming from $\lam$ to $x$ yields.
\begin{align}
\uh_1(t,\lam)
	&=	t e^{t \phi_\lam} \hh(\lam) B_\lam &\Rightarrow& &
u_1(t,x)
	&=	\frac{1}{\sqrt{2 \pi}} \int_\Rb d\lam \, t e^{t \phi_\lam} \hh(\lam) B_\lam e^{i \lam x} ,
\end{align}
which establishes \eqref{eq:u1.FT}.  This completes the proof.

\bibliographystyle{chicago}
\bibliography{BibTeX-Master}

%
%

\clearpage
\begin{landscape}
{\footnotesize
\captionof{table}{Calibration results for January 4, 2010,  October 1, 2010 and January 11, 2012. The first date encompasses the following maturities: 47, 75, and 103 days. The second date contains the maturities 50, 78 and 113 days. The third contains the maturities 66, 10, 155, 251, 341 days.}
\begin{tabular}{ | c | c | c | c | c | c | c | c | c | c | c | c | c | c |}
\hline
\textbf{2010/01/04} & & & & & & & & & DTM:47-103\\ \hline\hline
FMRSV &$\langle \sigma^2\rangle$ & & & & $V_3$ & & $V_2$ & & RMSE\\ \hline
& $0.1981^2$& & & & $-7\cdot 10^{-4}$& & $10^{-5}$ & & 0.0189 \\\hline
Merton &$\langle \sigma^2\rangle$ &$\langle \zeta \rangle $ & $m$ & $s^2$ & & &  & &RMSE\\ \hline
$\nu(dz)=(2\pi s^2)^{-1/2}\exp\left(-\frac{(z-m)^2}{2s^2}\right)$&$0.1409^2$&	0.1708&	$-$0.1708&	$0.1371^2$ & & & & & 0.0144\\ \hline
ExtendedMerton &$\langle \sigma^2\rangle$ &$\langle \zeta \rangle $ & $m$ & $s^2$ & $V_3^\varepsilon$ & $U_3^\varepsilon$& $V_2^\varepsilon$ &$U_2^\varepsilon$ & RMSE\\ \hline
$\nu(dz)=(2\pi s^2)^{-1/2}\exp\left(-\frac{(z-m)^2}{2s^2}\right)$ & $0.1346^2$&	0.653	&$-$0.208&	$0.1729^2$&		$-2\cdot 10^{-5}$	&0.0072	&0.001	&$-$0.0058&	0.0049\\ \hline\hline
\textbf{2010/10/01} & & & & &  & & & &DTM: 50-113\\ \hline\hline
FMRSV &$\langle \sigma^2\rangle$ & & & & $V_3$ & & $V_2$ & & RMSE\\ \hline
& $0.2222^2$& & & & $-0.0011$& & $10^{-5}$ & & 0.0222 \\\hline
Merton &$\langle \sigma^2\rangle$ &$\langle \zeta \rangle $ & $m$ & $s^2$ & & &  & & RMSE\\ \hline
$\nu(dz)=(2\pi s^2)^{-1/2}\exp\left(-\frac{(z-m)^2}{2s^2}\right)$&$0.1538^2$&	0.6617&	$-$0.1895&	$0.1525^2$ & & & & & 0.0183\\ \hline
ExtendedMerton &$\langle \sigma^2\rangle$ &$\langle \zeta \rangle $ & $m$ & $s^2$ & $V_3^\varepsilon$ & $U_3^\varepsilon$& $V_2^\varepsilon$ &$U_2^\varepsilon$ & RMSE\\ \hline
$\nu(dz)=(2\pi s^2)^{-1/2}\exp\left(-\frac{(z-m)^2}{2s^2}\right)$ & $0.1437^2$&	0.5187	&$-$0.295&	$0.2159^2$&		$-3\cdot 10^{-5}$	&0.0132	&$-$0.0027&0.0142&	0.0052\\ \hline \hline
\textbf{2012/01/11} & & & & & & & & & DTM:66-341\\ \hline\hline
FMRSV &$\langle \sigma^2\rangle$ & & & & $V_3$ & & $V_2$ & & RMSE\\ \hline
& $0.35^2$& & & & $0.0049$& & $0.0513$ & & 0.0373 \\\hline
Merton &$\langle \sigma^2\rangle$ &$\langle \zeta \rangle $ & $m$ & $s^2$ & & &  & &RMSE\\ \hline
$\nu(dz)=(2\pi s^2)^{-1/2}\exp\left(-\frac{(z-m)^2}{2s^2}\right)$&$0.1422^2$&	0.2319&$-0.2858$&	$0.1705^2$ & & & & & 0.0298\\ \hline
ExtendedMerton &$\langle \sigma^2\rangle$ &$\langle \zeta \rangle $ & $m$ & $s^2$ & $V_3^\varepsilon$ & $U_3^\varepsilon$& $V_2^\varepsilon$ &$U_2^\varepsilon$ & RMSE\\ \hline
$\nu(dz)=(2\pi s^2)^{-1/2}\exp\left(-\frac{(z-m)^2}{2s^2}\right)$ & $0.1073^2$&	0.7606&$-0.1988$&$0.1543^2$	&$-6.0052\cdot 10^{-5}$	&0.0010	&$6.8192\cdot 10^{-4}$&$-3.045\cdot 10^{-4}$&	0.0125\\ \hline\hline

\end{tabular}
\label{tab:stats}
}
\end{landscape}

\clearpage
\begin{landscape}
{\footnotesize
\captionof{table}{Calibration results for December 19, 2011. The dataset contains the following maturities: 59, 88, 122, 177, 273 and 363 days.}
\begin{tabular}{ | c | c | c | c | c | c | c | c | c | c | c | c | c | c |}
\hline
\textbf{2011/12/19} & & & & & & & & & & DTM:59-363\\ \hline\hline
FMRSV &$\langle \sigma^2\rangle$ & & & & & $V_3$ & & $V_2$ & & RMSE\\ \hline
& $0.35^2$& & & & & $0.004811$& & $0.03717$ & & 0.0278 \\\hline
Merton &$\langle \sigma^2\rangle$ &$\langle \zeta \rangle $ & $m$ & $s^2$ & & & &  & & RMSE\\ \hline
$\nu(dz)=(2\pi s^2)^{-1/2}\exp\left(-\frac{(z-m)^2}{2s^2}\right)$&$0.1529^2$&	1.3720&	$-$0.1397&	$0.1141^2$& & & & & & 0.0215\\ \hline
ExtendedMerton &$\langle \sigma^2\rangle$ &$\langle \zeta \rangle $ & $m$ & $s^2$ & & $V_3^\varepsilon$ & $U_3^\varepsilon$& $V_2^\varepsilon$ &$U_2^\varepsilon$ & RMSE\\ \hline
$\nu(dz)=(2\pi s^2)^{-1/2}\exp\left(-\frac{(z-m)^2}{2s^2}\right)$ & $0.2054^2$&0.8207	&$-$0.5608&	$0.4070^2$&&		$-5.61729\cdot 10^{-4}$	&0.3254	&$-$0.1263&$-$0.1549&	0.0072\\ \hline
Gumbel &$\langle \sigma^2\rangle$ &$\langle \zeta \rangle $ & $m$ & $s^2$ & & & &  & & RMSE\\ \hline
$\nu(dz)=\frac{1}{\sigma}\exp\left(\left(\frac{x-m}{\sigma}\right)-\exp\left(\frac{x-m}{\sigma}\right)\right)$&$0.0705^2$&	5.3221&	$-$0.1875&	$0.0756^2$& & & & & & 0.0202\\ \hline
ExtendedGumbel &$\langle \sigma^2\rangle$ &$\langle \zeta \rangle $ & $m$ & $s^2$ & & $V_3^\varepsilon$ & $U_3^\varepsilon$& $V_2^\varepsilon$ &$U_2^\varepsilon$ & RMSE\\ \hline
$\nu(dz)=\frac{1}{\sigma}\exp\left(\left(\frac{x-m}{\sigma}\right)-\exp\left(\frac{x-m}{\sigma}\right)\right)$ & $0.0717^2$&6.2521&$-$0.1875&	$0.0856^2$&&		$-1.2232\cdot 10^{-5}$	&$-$4.1826$\cdot 10^{-6}$&6.6619$\cdot 10^{-5}$&1.3969$\cdot 10^{-6}$&	0.0116\\ \hline
Dirac &$\langle \sigma^2\rangle$ &$\langle \zeta \rangle $ & $a$ & & & & &  & & RMSE\\ \hline
$\nu(dz)=\delta_a(dz)$&$0.1418^2$&	1.5924&	$-$0.1810&	& & & & & & 0.0212\\ \hline
ExtendedDirac &$\langle \sigma^2\rangle$ &$\langle \zeta \rangle $ & $a$ & & & $V_3^\varepsilon$ & $U_3^\varepsilon$& $V_2^\varepsilon$ &$U_2^\varepsilon$ & RMSE\\ \hline
$\nu(dz)=\delta_a(dz)$ & $0.1398^2$&1.3523&$-$0.1768&	&&		$-2.7651\cdot 10^{-5}$	&1.2326$\cdot 10^{-5}$&4.5501$\cdot 10^{-5}$&1.8360$\cdot 10^{-5}$&	0.0123\\ \hline
Variance Gamma &$\langle \sigma^2\rangle$ &$\langle \zeta \rangle $ & $a$ &$b$ &$B$ & & &  & & RMSE\\ \hline
$\nu(dz)=\frac{e^{-az}}{z}\mathbf{1}\{z>0\}+B\frac{e^{bz}}{-z}\mathbf{1}\{z<0\}$&$0.0510^2$&	0.6783&	35.3325&11.4922	&13.6786 & & & & & 0.0215\\ \hline
Extended Variance Gamma &$\langle \sigma^2\rangle$ &$\langle \zeta \rangle $ & $a$ &$b$ &$B$ & $V_3^\varepsilon$ & $U_3^\varepsilon$& $V_2^\varepsilon$ &$U_2^\varepsilon$ & RMSE\\ \hline
$\nu(dz)=\frac{e^{-az}}{z}\mathbf{1}\{z>0\}+B\frac{e^{bz}}{-z}\mathbf{1}\{z<0\}$ & $0.0922^2$&0.1196&267.4499&14.8657	&69.8231&	$3\cdot 10^{-3}$	&$-$0.0066 & 0.0011&$-7\cdot 10^{-4}$&	0.0147\\ \hline
Uniform &$\langle \sigma^2\rangle$ &$\langle \zeta \rangle $ & $a$ &$b$ & & & &  & & RMSE\\ \hline
$\nu(dz)=\frac{1}{b-a}\mathbf{1}\{a\leq x\leq b\}$&$0.0922^2$&3.9644&	$-$0.2086&0.0588	& & & & & & 0.0214\\ \hline
ExtendedUniform &$\langle \sigma^2\rangle$ &$\langle \zeta \rangle $ & $a$ & & & $V_3^\varepsilon$ & $U_3^\varepsilon$& $V_2^\varepsilon$ &$U_2^\varepsilon$ & RMSE\\ \hline
$\nu(dz)=\frac{1}{b-a}\mathbf{1}\{a\leq x\leq b\}$ & $0.1405^2$&4.0001&$-$0.1009&0.0997	&&		$-4.74943\cdot 10^{-4}$	&$1.1575\cdot 10^{-5}$&0.0078&$-5.3749\cdot 10^{-4}$&0.0142\\ \hline
\end{tabular}
}
\end{landscape}
\end{document}